\def\labelandtag#1#2{\begingroup
   \def\@currentlabel{#2}%
   \phantomsection\label{#1}\endgroup
}
\renewcommand{\emptyset}{\varnothing}
\renewcommand{\phi}{\varphi}
\newcommand{\at}[1]{\mathrm{#1}}
\newcommand{\myhypertarget}[2]{%
  \phantomsection
  \hypertarget{#1}{#2}%
  \expandafter\gdef\csname targettext@#1\endcsname{#2}%
}
\newcommand{\myhyperlink}[1]{%
  \hyperlink{#1}{\csname targettext@#1\endcsname}%
}
\newcommand{\seq}{\triangleright}
\newcommand{\calculus}[1]{\ensuremath{\mathsf{#1}}}
\newcommand{\subform}[1]{\mathsf{SubF}(#1)}
\newcommand*{\rn}[1]  {\ensuremath{\mathsf{#1}}}
\newcommand*{\weak}{\rn{w}}
\newcommand*{\cont}{\rn{c}}
\newcommand*{\irn}[2][]  {#2_\rn{{I#1}}}
\newcommand*{\ern}[2][]  {#2_\rn{{E#1}}}
\newcommand{\base}[1]{\mathscr{#1}}
\newcommand{\baseB}{\base{B}}
\newcommand{\baseC}{\base{C}}
\newcommand{\baseD}{\base{D}}
\newcommand{\baseE}{\base{E}}
\newcommand{\baseX}{\base{X}}
\newcommand{\baseY}{\base{Y}}
\newcommand{\baseMill}{\base{M}}
\newcommand{\emptybase}{\varnothing}
\newcommand{\baseGeq}{\supseteq}
\newcommand{\baseLeq}{\subseteq}
\newcommand{\suppTor}[1]{\Vdash_{\!\!#1}}
\newcommand{\suppNew}[1]{\Vdash_{\!\!#1}^{\hspace{-1.5mm}\ast}}
\newcommand{\proves}[1]{\vdash_{\!\!#1}}
\newcommand{\deriveBaseM}[1]{\vdash_{\!\!#1}}
\newcommand{\deriveBaseIPL}[1]{\vdash_{\!\!#1}}
\newcommand{\suppM}[2]{\Vdash_{ \!\!#1 }^{ \!\! #2 }}
\newcommand{\flatmill}[1]{{#1}^{\flat}}
\newcommand{\deflatmill}[1]{{#1}^{\natural
}}
\newcommand{\MILL}{\ensuremath{\rm IMLL}}
\newcommand{\At}{\mathbb{A}}
\newcommand{\provesMILL}{\vdash}
\newcommand{\emptymultiset}{\varnothing}
\newcommand{\MILLformula}{\mathsf{Form}_{\MILL}}
\newcommand{\mand}{\otimes}
\newcommand{\mtop}{\mathrm{I}}
\newcommand{\mto}{\mathrel{\multimap}}
\newcommand{\makeMultiset}[1]{[#1]}
\DeclareMathSymbol{\fatcomma}{\mathrel}{bbold}{\lq\,}
\newcommand{\mcomma}{\!\fatcomma\!}
\newcommand{\mmcomma}{\fatcomma}
\newcommand{\calculusMILL}{\ensuremath{\calculus{NIMLL}}}
\newcommand{\degForm}[1]{\mathit{deg}(#1)}
\DeclareFontFamily{U}{min}{}
\DeclareFontShape{U}{min}{m}{n}{<-> udmj30}{}
\begin{document}

\title{Proof-theoretic Semantics for Intuitionistic Multiplicative 
Linear Logic} 
%
%
\author{Alexander V. Gheorghiu\inst{1}\orcidID{0000-0002-7144-6910} 
\and Tao Gu\inst{1}\orcidID{0000-0001-5749-0758} \and 
David J. Pym\inst{1,2}\orcidID{0000-0002-6504-5838}}
\authorrunning{A. V. Gheorghiu \and T. Gu \and D. J. Pym}
%
\institute{University College London, London WC1E 6BT, UK \and
Institute of Philosophy, University of London,  London WC1H 0AR, UK
\email{\{alexander.gheorghiu.19,tao.gu.18,d.pym\}@ucl.ac.uk}}
\maketitle              

\begin{abstract} 
This work is the first exploration of proof-theoretic semantics for a substructural logic. It focuses on the base-extension semantics (B-eS) for intuitionistic multiplicative linear logic (\MILL{}). 
The starting point is a review of Sandqvist's B-eS for intuitionistic propositional logic (IPL), for which we propose an alternative treatment of conjunction that takes the form of the \emph{generalized} elimination rule for the connective. The resulting semantics is shown to be sound and complete. 
This motivates our main contribution, a B-eS for \MILL{}, in which the definitions of the logical constants all take the form of their elimination rule 
and for which soundness and completeness are established.  

\keywords{Logic \and Semantics \and Proof Theory \and Proof-theoretic Semantics \and Substructural Logic \and Multiplicative Connectives}
\end{abstract}

\section{Introduction} \label{sec:introduction}


In model-theoretic semantics (M-tS), logical consequence is defined in terms of models; that is, abstract mathematical structures in which propositions are interpreted and their truth is judged. As Schroeder-Heister~\cite{Schroeder2007modelvsproof} explains, in the standard reading given by Tarski~\cite{Tarski1936,Tarski2002}, a propositional formula $\phi$ follows model-theoretically from a context
$\varGamma$ iff every model of $\varGamma$ is a model of $\phi$; that is,
\[
  \begin{array}{r@{\quad}c@{\quad}l}
    \mbox{$\varGamma \models \phi$} & \mbox{iff} & \mbox{for all models $\mathcal{M}$, if  $\mathcal{M} \models \psi$ for all $\psi \in \varGamma$, then $\mathcal{M} \models \phi$}
  \end{array}   
\]
Therefore, consequence is understood as the transmission of truth. Importantly, on this plan, \emph{meaning} and \emph{validity} are characterized is terms of \emph{truth}. 

Proof-theoretic semantics (P-tS) is an alternative approach to meaning and validity in which they are characterized in terms of \emph{proofs} --- understood as objects denoting collections of acceptable inferences from accepted premisses. This is subtle. It is not that one desires a proof system that precisely characterizes the consequences of the logic of interest, but rather that one desires to express the \emph{meaning} of the logical constants in terms of proofs and provability. Indeed, as Schroeder-Heister \cite{Schroeder2007modelvsproof} observes, since no formal system is fixed (only notions of inference) the relationship between semantics and provability remains the same as it has always been --- in particular, soundness and completeness are desirable features of formal systems. Essentially, what differs is that \emph{proofs} serve the role of \emph{truth} in model-theoretic semantics. The semantic paradigm supporting P-tS is \emph{inferentialism} --- the view that meaning 
(or validity) arises from rules of inference (see Brandom~\cite{Brandom2000}).

To illustrate the paradigmatic shift from M-tS to P-tS, consider 
the proposition `Tammy is a vixen'. What does it mean? Intuitively, it means, somehow, `Tammy is female' \emph{and} `Tammy is a fox'. On inferentialism, its meaning is given by the rules,
\[
{\small 
\begin{array}{c}
\infer{\text{Tammy is a vixen}}{\text{Tammy is a fox} & \text{Tammy is female}} \qquad
\infer{\text{Tammy is female}}{\text{Tammy is a vixen}} \qquad \infer{\text{Tammy is a fox}}{\text{Tammy is a vixen}} \\
\end{array}
}
\]
These merit comparison with the laws governing $\land$ in IPL, which justify the sense in which the above proposition is a conjunction:
\[
{\small 
\infer{\phi \land \psi}{\phi & \psi} \qquad
\infer{\phi}{\phi \land \psi} \qquad \infer{\psi}{\phi \land \psi} 
}
\]

There are two major branches of P-tS: 
proof-theoretic validity (P-tV) in the Dummett-Prawitz tradition (see, 
for example,  Schroeder-Heister~\cite{Schroeder2006validity}) and 
base-extension semantics (B-eS) in the sense of, for example,  Sandqvist~\cite{Sandqvist2015hypothesis,Sandqvist2009CL,Sandqvist2015base}. 
The former is a semantics of arguments, and the latter is a semantics 
of a logic, but both are \emph{proof-theoretic semantics}. This paper is concerned with the latter as explained below.

Tennant 
\cite{Tennant78entailment} provides a general motivation for 
P-tV: reading a \emph{consequence} judgement $\varGamma \proves{} \phi$ 
proof-theoretically --- that is, that $\phi$ follows by some 
reasoning from $\varGamma$ --- demands a notion of \emph{valid argument} 
that encapsulates what the forms of valid reasoning are. That is, 
we require explicating the semantic conditions required for an 
argument that witnesses 
\[ 
    \mbox{$\psi_1 , \ldots , \psi_n$; therefore, $\phi$} 
\] 
to be valid. A particular motivation comes from the following programmatic remarks by Gentzen \cite{Gentzen}:
\begin{quote}
The introductions represent, as it were, the ‘definitions’ of the symbols concerned, and the eliminations are no more, in the final analysis, than the consequences of these definitions. This fact may be expressed as follows: In eliminating a symbol, we may use the formula with whose terminal symbol we are dealing only ‘in the sense afforded it by the introduction of that symbol’.
\end{quote} 
Dummett \cite{Dummett1991logical} developed a philosophical understanding of the normalization results of  Prawitz~\cite{Prawitz1971ideas}, which 
 give a kind of priority to the introduction rules, that yields a notion of valid arguments. The result is P-tV --- see Schroeder-Heister~\cite{Schroeder2006validity} for a succinct explanation. 
 
 More generally, P-tV is about defining a notion of \emph{validity} of objects witnessing that a formula $\phi$ follows by some reasoning from a collection of formulae $\varGamma$. This is quite different from simply giving an interpretation of proofs from some formal system; for example, while the version of P-tV discussed above is closely related to the BHK interpretation of IPL, it is important to distinguish the semantic and computational aspects --- see, for example, Schroeder-Heister~\cite{Schroeder2006validity}.

Meanwhile, B-eS proceeds via a judgement called \emph{support} defined inductively according 
to the structure of formulas with the base case (i.e., the support of 
atoms) given by proof in a base. A \emph{base} is a set of inference rules over atomic propositions, thought of as defining those atoms --- an example is the set of rules above that define `Tammy is a vixen'. Though this approach is closely 
related to possible world semantics in the sense of Beth~\cite{Beth1955} 
and Kripke~\cite{kripke1965semantical} --- see, for example, Goldfarb~\cite{goldfarb2016dummett} and 
Makinson~\cite{makinson2014inferential} --- it remains subtle. 
For example, there are several incompleteness results for intuitionistic logics --- see, for example, Piecha et al.~\cite{Piecha2015failure,Piecha2016completeness,Piecha2019incompleteness}, Goldfarb~\cite{goldfarb2016dummett}, Sandqvist~\cite{Sandqvist2005inferentialist,Sandqvist2009CL,Sandqvist2015hypothesis,Sandqvist2015base}, Stafford~\cite{Stafford2021}. Significantly, a sound and 
complete B-eS for IPL has been given by Sandqvist~\cite{Sandqvist2015base}.   Gheorghiu and Pym~\cite{ptvtobes} have shown that this B-eS captures the declarative content of P-tV.

Sandqvist's B-eS for IPL is the point of departure for this 
paper. Fix a set of atomic propositions $\At$. Given a base $\base{B}$, we write $\proves{\baseB} \at{p}$ to denote that $\at{p} \in \At$ can be derived in $\base{B}$. Support in a base $\baseB$ --- denoted $\suppTor{\baseB}$ --- is defined by the clauses of Figure \ref{fig:Sandqvist:support} in which $\varGamma \neq \emptyset$. We desire to give an analogous semantics for \emph{intuitionistic multiplicative linear logic} (\MILL{}). We study this logic as it is the minimal setting in which we can explore how to set-up B-eS (and P-tS in general) for substructural logics, which enables extension to, for example,  (intuitionistic) Linear Logic \cite{girard1995} and the logic of Bunched Implications \cite{o1999logic}. Again, the aim is not simply to give a proof-theoretic interpretation of IMLL, which already exist, but to define the logical constants in terms of proofs.

\begin{figure}[t]
\hrule \vspace{1mm}
          \[
        \begin{array}{r@{\qquad}l@{\quad}c@{\quad}l}
           \mbox{(At)} & \suppTor{\base{B}} \at{p}  & \text{ iff } &   \proves{\base{B}} \at{p}  \\[1mm]
            (\to) & \suppTor{\base{B}} \phi \to \psi & \text{ iff } & \phi \suppTor{\base{B}} \psi \\[1mm]
            (\land) & \suppTor{\base{B}} \phi \land \psi   & \text{ iff } &   \suppTor{\base{B}} \phi \text{ and }   \suppTor{\base{B}} \psi  \\[1mm]
            (\lor) & \suppTor{\base{B}} \phi \lor \psi & \text{ iff } &  \text{for any } \base{C}  \text{ such that } \base{B} \subseteq \base{C} \text{ and any } \at{p} \in \At, \\ 
            & & & \text{ if } \phi \suppTor{\base{C}} \at{p} \text{ and } \psi \suppTor{\base{C}} \at{p}, \text{ then } \suppTor{\base{C}} \at{p}  \\[1mm]
            (\bot) & \suppTor{\base{B}} \bot & \text{iff} &    \suppTor{\base{B}} \at{p} \text{ for any } \at{p} \in \At \\[1mm]
           \mbox{(Inf)} & \hspace{-1em} \varGamma \suppTor{\base{B}} \phi & \text{ iff } & \text{for any } \base{C} \text{ such that } \base{B} \subseteq \base{C}, \\
           & & & \text{if }   \suppTor{\base{C}} \psi \text{ for any } \psi \in \varGamma, \text{ then } \suppTor{\base{C}} \phi \\[1mm] 
        \end{array}
        \]
\hrule
\vspace{1mm}
    \caption{Sandqvist's Support in a Base}
    \label{fig:Sandqvist:support}
    \vspace{-15pt}
    \labelandtag{cl:tor:at}{(At)}
    \labelandtag{cl:tor:to}{$(\to)$}
    \labelandtag{cl:tor:and}{$(\land)$}
    \labelandtag{cl:tor:or}{$(\lor)$}
    \labelandtag{cl:tor:bot}{$(\bot)$}
    \labelandtag{cl:tor:inf}{(Inf)}
\end{figure}

A compelling reading of \MILL{} is its resource interpretation, which is inherently proof-theoretic --- see Girard \cite{girard1995}.
Accordingly, looking at \ref{cl:tor:inf}, we expect that $\phi$ being supported in a base $\baseB$ relative to some multiset of formulas $\varGamma$ means that the `resources' garnered by $\varGamma$ suffice to produce $\phi$. We may express this by enriching the notion of support with multisets of resources $P$ and $U$ combined with multiset union --- denoted $\,\mcomma\,\,\,$. Then, that the resources garnered by $\varGamma$ are given to $\phi$ is captured by the following property:
\[
\varGamma \suppM{\baseB}{\at{P}} \phi  \qquad \mbox{iff} \qquad \mbox{for any $\baseX \supseteq \baseB$ and any $U$, if $\suppM{\baseX}{\;U} \varGamma$, then 
$\suppM{\baseX}{\;\at{P} \,\mcomma\, \at{U}} \varphi$}
\]
Naively, we may define $\mand$ as a resource-sensitive version of \ref{cl:tor:and}; that is,
\[
    \suppM{\baseB}{\at{P}} \phi \mand \psi \quad \text{ iff } \quad \text{there are } \at{P}_1, \at{P}_2 \text{ such that $\at{P} = (\at{P}_1 \mcomma \at{P}_2)$, $\suppM{\baseB}{\at{P}_1} \phi$, and $\suppM{\baseB}{\at{P}_2} \psi$}
\]
While the semantics is sound, proving completeness is more subtle. We aim to follow the method by Sandqvist \cite{Sandqvist2015hypothesis}, and this clause is not suitable because the following is not the case for \MILL{}: 
\[
 \varGamma \proves{} \phi \mand \psi \quad \text{ iff } \quad \text{there are } \varDelta_1, \varDelta_2 \text{ such that $\varGamma = (\varDelta_1 \mcomma  \varDelta_2)$, $\varDelta_1 \proves{} \phi$, and $\varDelta_2 \proves{} \psi$}
\]
--- a counter-example is the case where $\varGamma$ is the (singleton) multiset consisting of $\phi \otimes \psi$, which denies any non-trivial partition into smaller multisets. We therefore take a more complex clause, which is inspired by the treatment of disjunction in IPL, that enables us to prove completeness using the approach by Sandqvist \cite{Sandqvist2015base}.

There is an obvious difference between the B-eS for IPL and its standard possible world semantics by Kripke \cite{kripke1965semantical} --- namely, the treatment of disjunction ($\lor$) and absurdity ($\bot$). The possible world semantics has the clause,
\[
\mathfrak{M},x \Vdash \phi \lor \psi \qquad \mbox{ iff }\qquad \mathfrak{M},x \Vdash \phi \mbox{ or } \mathfrak{M},x \Vdash \phi
\]
If such a clause is taken in the definition 
of validity in a B-eS for IPL, it leads to incompleteness  --- see, for example, Piecha and Schroeder-Heister ~\cite{Piecha2015failure,Piecha2016completeness}. To yield completeness, Sandqvist \cite{Sandqvist2015hypothesis} uses a more complex form that is close to the elimination rule for disjunction in natural deduction (see Gentzen~\cite{Gentzen} and Prawitz~\cite{prawitz1965}) --- that is, 
\[
{\small 
\begin{array}{r@{\qquad}c@{\qquad}l}
  \suppTor{\base{B}} \phi \lor \psi & \text{ iff } &  \text{for any } \base{C}  \text{ such that } \base{B} \subseteq \base{C} \text{ and any } \at{p} \in \At, \\ 
            & & \text{ if } \phi \suppTor{\base{C}} \at{p} \text{ and } \psi \suppTor{\base{C}} \at{p}, \text{ then } \suppTor{\base{C}} \at{p} 
\end{array}
}
\]
One justification for the clauses is the principle of \emph{definitional reflection} (DR) (see Halln\"as~\cite{hallnas1991partial,hallnas2006proof} and Schroeder-Heister~\cite{schroeder1993rules}):
\begin{quote}
whatever follows from all the premisses of an assertion also follows from the assertion itself
\end{quote}
Taking the perspective that the introduction rules are definitions, DR provides an answer for the way in which the elimination rules follow. Similarly, it justifies that the clauses for the logical constants take the form of their elimination rules. 

Why does the clause for conjunction ($\land$) not take the form given by DR? What DR gives is the \emph{generalized} elimination rule,
\[
\infer{\chi}{\phi \land \psi & \deduce{\chi}{[\phi, \psi]}}
\]
We may modify the B-eS  for IPL by replacing \ref{cl:tor:and} with the following:
\[
\begin{array}{l@{\quad}l@{\quad}c@{\quad}l}
   (\land^\ast) & \suppTor{\base{B}} \phi \land \psi & \text{ iff } & \mbox{for any $\base{C} \baseGeq \base{B}$ and any $\at{p} \in \At$, if $\phi, \psi \suppTor{\baseC} \at{p}$, then $\suppTor{\baseC} \at{p}$}
\end{array}
\]
We show in Section \ref{subsec:revisit-BeS-IPL} that the result does indeed characterize IPL. Indeed, it is easy to see that the generalized elimination rule and usual elimination rule for $\land$ have the same expressive power.

Note, we here take the definitional view of the introduction rules for the logical constants of IPL, and not of bases themselves, thus do not contradict the distinctions made by Piecha and Schroeder-Heister \cite{Schroeder2016atomic,Piecha2017definitional}.

Taking this analysis into consideration, we take the following definition of the multiplicative conjunction that corresponds to the definitional reflection of its introduction rule:
\begin{center}
\begin{tabular}{l@{\qquad}c@{\qquad}l}
$\suppM{\baseB}{\at{P}} \phi \mand \psi$ & \mbox{iff} &\mbox{for any $\baseX \baseGeq \baseB$, resources $\at{U}$, and $\at{p} \in \At$,} \\[1mm]
             & & \mbox{if $\phi \mcomma \psi \suppM{\baseX}{\;\at{U}} \at{p}$, then $\suppM{\baseX}{\;\at{P} \;\mcomma\; \at{U}} \at{p}$} 
 \end{tabular}
 \end{center}
We show in Section \ref{sec:MILL-BeS} that the result does indeed characterize $\MILL{}$.

The paper is structured as follows: in Section \ref{sec:BeS-for-IPL-Sandqvist}, we review the B-eS for IPL given by Sandqvist  \cite{Sandqvist2015base}; in Section \ref{sec:MILL}, we define \MILL{} and provide intuitions about its B-eS; in Section \ref{sec:MILL-BeS}, we formally define the B-eS for \MILL{} and explain its soundness and completeness proofs. The paper ends in Section  \ref{sec:conclusion} with a conclusion and summary of results. 

\section{Base-extension Semantics for IPL} 
\label{sec:BeS-for-IPL-Sandqvist}

In this section, we review the B-eS for IPL given by Sandqvist  \cite{Sandqvist2015base}.  In Section~\ref{subsec:IPL-support-in-base}, we give a terse but complete definition of the B-eS for IPL. In Section~\ref{subsec:IPL-BeS-completeness}, we summarize the completeness proof. Finally, in Section~\ref{subsec:revisit-BeS-IPL}, we discuss a modification of the treatment of conjunction. While IPL is not the focus of this paper, this review provides intuition and motivates the B-eS for \MILL{} in Section~\ref{sec:MILL}. Specifically, the analysis of the treatment of conjunction in IPL motivates the handling of the multiplicative conjunction in \MILL{}. 

Throughout this section, we fix a denumerable set of atomic propositions $\At$, and the following conventions: 
$\at{p}, \at{q}, \dots$ denote atoms; $\at{P}, \at{Q}, \dots$ denote finite sets of atoms; $\varphi, \psi, \theta, \dots$ denote formulas; $\varGamma, \varDelta, \dots$ denote finite sets of formulas.

We forego an introduction to IPL, which is doubtless familiar --- see van Dalen~\cite{vandalen}. For clarity, note that we distinguish sequents $\varGamma \seq \phi$ from judgements $\varGamma \proves{} \phi$ that say that the sequent is valid in IPL.

\subsection{Support in a Base}
\label{subsec:IPL-support-in-base}
The B-eS for IPL begins by defining \emph{derivability in a base}. A (properly) second-level atomic rule --- see Piecha and Schroeder-Heister~\cite{Schroeder2016atomic,Piecha2017definitional} ---  is a natural deduction rule of the following form, in which $q,q_1,...,q_n$ are atoms and $Q_1$,...,$Q_n$ are (possibly empty) sets of atoms: 
\[
 \infer{~~\at{q}~~}{}\qquad 
    \infer{\at{q}}{~~\deduce{\at{q}_{1}}{[Q_{1}]} & ... & \deduce{\at{q}_n}{[Q_n]}~~}  
\]
Importantly, atomic rules are taken \emph{per se} and not closed under substitution. They may be expressed inline as $\left( \at{Q}_1 \seq \at{q}_1, \dots, \at{Q}_n \seq \at{q}_n \right) \Rightarrow \at{q}$ --- note, the axiom case is the special case when the left-hand side is empty, $\Rightarrow \at{q}$. They are read as natural deduction rules in the sense of Gentzen \cite{Gentzen}; thus, $\Rightarrow q$ means that the atom $\at{q}$ may be concluded whenever, while $\left( \at{Q}_1 \seq \at{q}_1, \dots, \at{Q}_n \seq \at{q}_n \right) \Rightarrow \at{q}$ means that one may derive $q$ from a set of atoms $S$ if one has derived  $\at{q}_i$ from $S$ assuming $\at{Q}_i$ for $i=1,...,n$.

A \emph{base} is a set of atomic rules. We write $\baseB, \baseC, \dots$ to denote bases, and $\emptyset$ to denote the empty base (i.e., the base with no rules). We say $\base{C}$ is an \emph{extension} of $\base{B}$ if $\baseC$ is a superset of $\baseB$, denoted $\base{C} \supseteq \base{B}$. 

\begin{definition}[Derivability in a Base]  \label{def:derivability-base-IPL}
\emph{Derivability in a base} $\baseB$ is the least relation $\proves{\baseB}$ satisfying the following:
\begin{description}
    \item[Ref-IPL\label{eq:derive-IPL-ref}] $\at{S}, \at{q} \deriveBaseIPL{\baseB} \at{q}$. 
    \item[App-IPL\label{eq:derive-IPL-app}] If atomic rule $\left( \at{Q}_1 \seq \at{q}_1, \dots, \at{Q}_n \seq \at{q}_n \right) \Rightarrow \at{q}$ is in $\baseB$, and $\at{S}, \at{Q}_i \deriveBaseIPL{\baseB} \at{q}_i$ for all $i = 1, \dots, n$, then $\at{S} \deriveBaseIPL{\baseB} \at{q}$.  
\end{description}
\end{definition}
This forms the base case of the B-eS for IPL: 
\begin{definition}[Sandqvist's Support in a Base] \label{def:BeS-Tor}
\emph{Sandqvist's support in a base $\base{B}$} is the least relation $\suppTor{\base{B}}$ defined by the clauses of Figure~\ref{fig:Sandqvist:support}. A sequent $\varGamma \seq \phi$ is \emph{valid} --- denoted $\varGamma \suppTor{} \phi$ --- iff it is supported in every base, 
\[
    \varGamma \suppTor{} \phi \qquad \mbox{iff} \qquad \mbox{$\varGamma \suppTor{\base{B}} \phi$ holds for any $\base{B}$}
\]
\end{definition}
Every base is an extension of the empty base ($\emptyset$), therefore $\varGamma \suppTor{} \phi$ iff $\varGamma \suppTor{\emptyset} \phi$. Sandqvist \cite{Sandqvist2015base} showed that this semantics characterizes IPL: 
\begin{theorem}[Sandqvist~\cite{Sandqvist2015base}] \label{thm:Sandqvist}
$\varGamma \proves{} \phi$ iff $\varGamma \suppTor{} \phi$
\end{theorem}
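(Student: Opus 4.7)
The plan is to prove the two directions separately, following Sandqvist's strategy. Soundness, $\varGamma \proves{} \phi \Rightarrow \varGamma \suppTor{} \phi$, proceeds by a straightforward induction on a natural deduction derivation of $\varGamma \vdash \phi$. For the introduction and elimination rules of $\land$ and $\to$, the support clauses are essentially those rules read semantically, so the induction steps are immediate. For $\lor$-introduction, the $(\lor)$ clause gives it at once: from $\suppTor{\baseB} \phi$ one applies the hypothesis of the clause to conclude $\suppTor{\baseB} \phi \lor \psi$. The elimination rule for $\lor$ and ex falso quodlibet are read off directly from the $(\lor)$ and $(\bot)$ clauses, which are engineered to internalise exactly these eliminations. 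Persistence of $\suppTor{}$ under base extension, needed to handle the $(\mathrm{Inf})$ clause and open assumptions, is a monotonicity property that holds by construction.

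Completeness, $\varGamma \suppTor{} \phi \Rightarrow \varGamma \proves{} \phi$, is the subtler direction; here one builds a \emph{simulation base} $\baseMill$ from the sequent under consideration. First, fix an injection $\flatmill{(-)} : \Formulas \to \At$ that is the identity on atoms appearing in $\varGamma \cup \{\phi\}$ and sends each compound subformula $\psi$ to a fresh atom $\flatmill{\psi}$. Then $\baseMill$ is populated, for each compound subformula of $\varGamma \cup \{\phi\}$, with atomic rules mimicking its introduction and elimination rules transported through $\flatmill{(-)}$; for example, $\psi \land \chi$ contributes $\flatmill{\psi}, \flatmill{\chi} \Rightarrow \flatmill{(\psi \land \chi)}$ together with the corresponding elimination rules, and analogously for the other connectives, with the disjunction and absurdity contributing rules parametric in an arbitrary target atom.

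The heart of the proof is then a simulation (or \emph{flattening}) lemma: for every subformula $\psi$ of the sequent, every subcontext $\varDelta$, and every base $\baseX \baseGeq \baseMill$,
\[
    \varDelta \suppTor{\baseX} \psi \quad \text{iff} \quad \flatmill{\varDelta} \suppTor{\baseX} \flatmill{\psi}.
\]
This is established by induction on the structure of $\psi$, using the atomic rules of $\baseMill$ to bridge the semantic clauses and atomic derivability. Together with the fact that derivability in $\baseMill$ among the chosen atoms matches IPL-derivability among their preimages (by construction of the rules), this yields the conclusion: from $\varGamma \suppTor{} \phi$ we specialise to $\varGamma \suppTor{\baseMill} \phi$, apply the flattening lemma to obtain $\flatmill{\varGamma} \suppTor{\baseMill} \flatmill{\phi}$, use the atomic clause to get $\flatmill{\varGamma} \proves{\baseMill} \flatmill{\phi}$, and finally unwind the simulation to recover $\varGamma \proves{} \phi$.

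The principal obstacle is the flattening lemma, specifically the cases for $\lor$ and $\bot$. Because those clauses quantify over \emph{all} atoms and all base extensions, one must select an extension of $\baseMill$ that witnesses the biconditional without inadvertently validating too many sequents. The resolution, which we expect to carry over with appropriate modifications to the multiplicative setting in Section~\ref{sec:MILL-BeS}, is to build the elimination-style rules of $\baseMill$ uniformly in the target atom, so that the universally quantified hypothesis $\phi \suppTor{\baseX} \at{p}$, $\psi \suppTor{\baseX} \at{p}$ in the $(\lor)$ clause can always be discharged against the right simulation rule; the $(\bot)$ case is analogous and in fact simpler. The same phenomenon, namely the need to pick the witnessing atom carefully when the defining clause takes elimination-rule form, is precisely what will govern the treatment of $\mand$ in the \MILL{} setting.
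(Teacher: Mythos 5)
Your proposal is correct and follows essentially the same route as the paper: soundness by showing $\suppTor{}$ validates the rules of $\calculus{NJ}$ (using monotonicity of support under base extension), and completeness via a bespoke base of atomic rules indexed by $(\cdot)^\flat$, with the three ingredients you identify corresponding exactly to the paper's \eqref{lem:Tor:flatequivalence}, \eqref{lem:Tor:basiccompleteness}, and \eqref{lem:Tor:sharpening}, including the key observation that the $\lor$- and $\bot$-elimination rules of the simulation base must be parametric in an arbitrary target atom. The only differences are cosmetic (your base is the paper's $\base{N}$, and your flattening lemma is stated with a context, which reduces to the paper's single-formula version via \ref{cl:tor:inf}).
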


Soundness --- that is, $\varGamma\proves{}\phi$ implies $\varGamma \suppTor{} \phi$ --- follows from showing that $\suppTor{}$ respects the rules of Gentzen's \cite{Gentzen} \calculus{NJ}; for example, $\varGamma \suppTor{} \phi$ and $\varDelta \suppTor{} \psi$ implies $\varGamma, \varDelta \suppTor{} \phi \land \psi$. Completeness --- that is, $\varGamma \suppTor{} \phi$ implies $\varGamma \proves{} \phi$ --- is more subtle.  We present the argument in Section \ref{subsec:IPL-BeS-completeness} as it motivates the work in Section~\ref{subsec:mill-BeS-completeness}.

\subsection{Completeness of IPL}
\label{subsec:IPL-BeS-completeness}

We require to show that  $\varGamma \suppTor{} \phi$ implies that there is an $\calculus{NJ}$-proof witnessing $\varGamma \proves{} \phi$. To this end, we associate to each sub-formula $\rho$ of $\varGamma \cup \{\phi\}$ a unique atom $\at{r}$, and construct a base $\base{N}$ such that $\at{r}$ behaves in $\base{N}$ as $\rho$ behaves in $\calculus{NJ}$. Moreover, formulas and their atomizations are semantically equivalent in any extension of $\base{N}$ so that support in $\base{N}$ characterizes both validity 
 and provability. When $\rho \in \At$, we take $\at{r} := \rho$, but for complex $\rho$ we choose $r$ to be alien to $\varGamma$ and $\phi$. 

 \begin{example}
    Suppose $\rho:=\at{p} \land \at{q}$ is a sub-formula of $\varGamma \cup \{\phi\}$. Associate to it a fresh atom $\at{r}$. Since the principal connective of $\rho$ is $\land$, we require $\base{N}$ to contain the following rules:
\[
\infer{\,\,\at{r}\,\,}{\,\,\at{p} & \at{q} \,\,} \qquad \infer{\,\,\at{p}\,\,}{\,\,\at{r}\,\,} \quad \infer{\,\,\at{q}\,\,}{\,\,\at{r}\,\,}
\]
We may write $(\at{p} \land \at{q})^\flat$ for $\at{r}$ so that these rules may be expressed as follows:
\[
    \infer{\,(\at{p}\land\at{q})^\flat\,}{\,\at{p} & \at{q}\,} 
    \qquad \infer{\,\at{p}\,}{\,(\at{p}\land\at{q})^\flat\,} 
    \quad \infer{\,\at{q}\,}{\,(\at{p}\land\at{q})^\flat\,}  \vspace{-2em}
\]
 \end{example}

Formally, given a judgement $\varGamma \suppTor{} \phi$, to every sub-formula $\rho$ associate a unique atomic proposition $\rho^\flat$ as follows:
\begin{itemize}
    \item[-] if $\rho \not \in \At$, then $\rho^\flat$ is an atom that does not occur in any formula in $\varGamma\cup\{\phi\}$;
    \item[-] if $\rho \in \At$, then $\rho^\flat = \rho$.
\end{itemize} 
By \emph{unique} we mean that $(\cdot)^\flat$ is injective --- that is, if $\rho \neq \sigma$, then $\rho^\flat \neq \sigma^\flat$. The left-inverse of $(\cdot)^\flat$ is $(\cdot)^\natural$, and the domain may be extended to the entirety of $\At$ by identity on atoms not in the codomain of $(\cdot)^\flat$. Both functions act on sets pointwise --- that is, $ 
\Sigma^\flat := \{\phi^\flat \mid \phi \in \Sigma \}$ and $\at{P}^\natural := \{\at{p}^\natural \mid \at{p} \in \at{P} \}$. Relative to $(\cdot)^\flat$, let $\base{N}$ be the base containing the rules of Figure~\ref{fig:baseN} for any sub-formulas $\rho$ and $\sigma$ of $\varGamma$ and $\phi$, and any $\at{p} \in \At$. 

\begin{figure}[t]
\hrule
   \vspace{1mm}
   \[
   \begin{array}{c}
       \infer[\irn{\land}^\flat]{(\rho \land \sigma)^\flat}{\rho^\flat & \sigma^\flat} \qquad \infer[\ern{\land}^\flat]{\rho^\flat}{(\rho \land \sigma)^\flat} \quad \infer[\ern{\land}^\flat]{\sigma^\flat}{(\rho \land \sigma)^\flat}
     \quad
      \infer[\ern{\to}^\flat]{\sigma^\flat}{\rho^\flat & (\rho \to \sigma)^\flat} 
   \\[2mm]
    \infer[\irn{\lor}^\flat]{(\rho \lor \sigma)^\flat}{\rho^\flat} 
    \quad 
    \infer[\irn{\lor}^\flat]{(\rho \lor \sigma)^\flat}{\sigma^\flat} 
    \quad
    \infer[\ern{\lor}^\flat]{\at{p}}{(\rho \lor \sigma)^\flat & \deduce{\at{p}}{[\rho^\flat]} & \deduce{\at{p}}{[\sigma^\flat]}}    \quad
     \infer[\irn{\to}^\flat]{(\rho \to \sigma)^\flat}{\deduce{\sigma^\flat}{[\rho^\flat]}}
     \quad
      \infer[\rn{EFQ}^\flat]{\at{p}}{\bot^\flat}
        \end{array}
    \]
       \hrule
   \vspace{1mm}
    \caption{Atomic System $\base{N}$}
    \label{fig:baseN}
    \vspace{-15pt}
\end{figure}

Sandqvist~\cite{Sandqvist2015base} establishes three claims that deliver completeness: 
\begin{description}
    \item[IPL-AtComp\label{lem:Tor:basiccompleteness}] Let $\at{S} \subseteq \At$ and $\at{p} \in \At$ and let $\base{B}$ be a base: 
    $\at{S} \suppTor{\base{B}} \at{p}  \text{ iff } \at{S} \proves{\base{B}} \at{p}. 
    $
    \item[IPL-Flat\label{lem:Tor:flatequivalence}]  For any sub-formula $\xi$ of $\varGamma \cup \{\phi\}$ and $\base{N}' \supseteq \base{N}$: $   \suppTor{\base{N}'} \xi^\flat \text{ iff }  \suppTor{\base{N}'} \xi
    $. 
    \item[IPL-Nat\label{lem:Tor:sharpening}] Let $\at{S} \subseteq \At$ and $\at{p} \in \At$: 
    if $\at{S} \proves{\base{N}} p \text{, then } \at{S}^\natural \proves{} p^\natural$.
\end{description}
The first claim is completeness in the atomic case. The second claim is that $\xi^\flat$ and $\xi$ are equivalent in $\base{N}$ --- that is, $\xi^\flat \suppTor{\base{N}} \xi$ and $\xi \suppTor{\base{N}} \xi^\flat$. Consequently,
\[
\varGamma^\flat \suppTor{\base{N}'} \varphi^\flat \qquad  \text{ iff } \qquad  \varGamma \suppTor{\base{N}'} \varphi 
\]
The third claim is the simulation statement which allows us to make the final move from derivability in $\base{N}$ to derivability in $\calculus{NJ}$. 

\begin{proof}[Theorem~\ref{thm:Sandqvist} --- Completeness.] Assume $\varGamma \suppTor{} \phi$ and let $\base{N}$ be its bespoke base. By \eqref{lem:Tor:flatequivalence}, $\varGamma^\flat \suppTor{\base{N}} \varphi^\flat$. Hence, by \eqref{lem:Tor:basiccompleteness}, $\varGamma^\flat \deriveBaseIPL{\base{N}} \varphi^\flat$. Whence, by \eqref{lem:Tor:sharpening}, $(\varGamma^{\flat})^\natural \proves{} (\varphi^{\flat})^\natural$ --- i.e., $\varGamma \proves{} \varphi$ --- as required. 
\end{proof} 

\subsection{Base-extension Semantics for IPL, revisited}
\label{subsec:revisit-BeS-IPL}
%
Goldfarb~\cite{goldfarb2016dummett,Piecha2019incompleteness} has also given a (complete) proof-theoretic semantics for IPL, but it mimics Kripke's~\cite{kripke1965semantical} semantics. What is interesting about the B-eS in Sandqvist~\cite{Sandqvist2015base} is the way in which it is \emph{not} a representation of the possible world semantics. This is most clearly seen in \ref{cl:tor:or}, which takes the form of the `second-order' definition of disjunction --- that is, 
\[
    U+V = \forall X \left((U \to X)\to(U \to X)\to X\right)
\]
(see Girard~\cite{Girard1989Proofs} and Negri~\cite{troelstra2000basic}). This adumbrates the categorical perspective on B-eS given by Pym et al.~\cite{Pym2022catpts}. Proof-theoretically, the clause recalls the elimination rule for the connective restricted to atomic conclusions, 
\[
    \infer{\at{p}}{
        \phi \lor \psi & 
        \deduce{\at{p}}{[\phi]} & 
        \deduce{\at{p}}{[\psi]}
    }
\]
Dummett~\cite{Dummett1991logical} has shown that such restriction in $\calculus{NJ}$ is without loss of expressive power. Indeed, \emph{all} of the clauses in Figure~\ref{fig:Sandqvist:support} may be regarded as taking the form of the corresponding elimination rules. 

The principle of \emph{definitional reflection}, as described in Section~\ref{sec:introduction} justifies this phenomenon. According to this principle, an alternative candidate clause for conjunction is as follows: 
\labelandtag{cl:tor:andnew}{$(\land^\ast)$}
\[
\begin{array}{r@{\quad}l@{\quad}c@{\quad}l}
   (\land^\ast) & \suppNew{\base{B}} \phi \land \psi & \text{ iff } & \mbox{for any $\base{C} \baseGeq \base{B}$ and any $\at{p} \in \At$,} \mbox{ if $\phi, \psi \suppNew{\baseC} \at{p}$, then $\suppNew{\baseC} \at{p}$}
\end{array}
\]
\vspace{-18pt}
\begin{definition}
\label{def:BeS-new}
The relation $\suppNew{\baseB}$ is defined by the clauses of Figure~\ref{fig:Sandqvist:support} with~\ref{cl:tor:andnew} in place of~\ref{cl:tor:and}. The judgement $\varGamma \suppNew{} \phi$ obtains iff $\varGamma \suppNew{\baseB} \phi$  for any $\baseB$.  
\end{definition}
The resulting semantics is sound and complete for IPL: 
\begin{theorem}
\label{thm:new}
   $\varGamma \suppNew{} \phi$ iff  $\varGamma \proves{} \phi$.
\end{theorem}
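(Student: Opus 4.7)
The plan is to reduce Theorem~\ref{thm:new} to Theorem~\ref{thm:Sandqvist} by establishing the pointwise equivalence $\suppNew{\baseB} \phi \Leftrightarrow \suppTor{\baseB} \phi$ for all bases $\baseB$ and formulas $\phi$. The argument proceeds by structural induction on $\phi$. Since the clauses for $(\to)$, $(\lor)$, $(\bot)$, (At), and (Inf) are identical in the two semantics, only the conjunction case requires non-trivial work: it suffices to show $\suppNew{\baseB} \alpha \land \beta \Leftrightarrow (\suppNew{\baseB} \alpha$ and $\suppNew{\baseB} \beta)$, so that $(\land^\ast)$ becomes equivalent to the standard clause $(\land)$; the inductive hypothesis on $\alpha$ and $\beta$ then closes the case.

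The ``$\Leftarrow$'' direction is routine. Assuming $\suppNew{\baseB} \alpha$ and $\suppNew{\baseB} \beta$, one fixes $\baseC \supseteq \baseB$ and $\at{p} \in \At$ with $\alpha, \beta \suppNew{\baseC} \at{p}$. Monotonicity of $\suppNew{}$ (which I would prove separately by a straightforward induction on $\phi$) gives $\suppNew{\baseC} \alpha$ and $\suppNew{\baseC} \beta$, and applying the hypothesis at $\baseC$ via (Inf) yields $\suppNew{\baseC} \at{p}$, as required by $(\land^\ast)$.

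For the ``$\Rightarrow$'' direction, I would prove a strengthened statement by induction on $\chi$:
\begin{center}
$(\dagger)$ \quad if $\suppNew{\baseB} \alpha \land \beta$ and $\alpha, \beta \suppNew{\baseB} \chi$, then $\suppNew{\baseB} \chi$.
\end{center}
Taking $\chi := \alpha$ in $(\dagger)$, with the trivially-valid $\alpha, \beta \suppNew{\baseB} \alpha$, yields $\suppNew{\baseB} \alpha$; symmetrically for $\beta$. The proof of $(\dagger)$ splits by the structure of $\chi$. When $\chi$ is atomic, $(\dagger)$ is exactly $(\land^\ast)$ with $\at{p} := \chi$. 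When $\chi$ is $\gamma \land \delta$, $\gamma \lor \delta$, or $\bot$, the defining clause for $\suppNew{\baseB} \chi$ already quantifies over atomic targets, so I would reduce $(\dagger)$ to the atomic case by an extra application of that clause combined with monotonicity, without invoking the inductive hypothesis. The only case requiring the inductive hypothesis is $\chi = \gamma \to \delta$: here I would verify the deduction-theorem-style identity $\alpha, \beta \suppNew{\baseB} \gamma \to \delta \Leftrightarrow \alpha, \beta, \gamma \suppNew{\baseB} \delta$ directly from the definitions of $(\to)$ and (Inf), fix $\baseC \supseteq \baseB$ with $\suppNew{\baseC} \gamma$, derive $\alpha, \beta \suppNew{\baseC} \delta$ via monotonicity, and conclude $\suppNew{\baseC} \delta$ by the inductive hypothesis on $\delta$.

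The main obstacle will be the implication case, where the ``deduction theorem'' identity must be stated and proved carefully and monotonicity has to be threaded through nested base extensions to land the premises in the scope of the inductive hypothesis; the other cases reduce transparently to $(\land^\ast)$ at a suitable atomic conclusion. Once the equivalence of $(\land^\ast)$ and $(\land)$ is established, Theorem~\ref{thm:new} follows immediately from Theorem~\ref{thm:Sandqvist}.
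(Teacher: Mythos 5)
Your proposal is correct and follows essentially the same route as the paper: the pointwise equivalence of $\suppNew{\baseB}$ and $\suppTor{\baseB}$ by induction on the formula, with the conjunction case handled by monotonicity in one direction and your key lemma $(\dagger)$ --- which is precisely the paper's \ref{lem:Tor:key} --- in the other. Even the internal structure of the proof of $(\dagger)$ (only the implication case needing the genuine structural induction hypothesis, the remaining cases reducing to the atomic instance) matches the paper's appendix argument.
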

\begin{proof}
    We assume the following: for arbitrary base $\baseB$, and formulas $\phi, \psi, \chi$, 
    \begin{description}
        \item[IPL$^\ast$-Monotone] \labelandtag{lem:Tor:monotonenew}{IPL$^\ast$-Monotone} If $\suppNew{\baseB} \phi$, then $\suppNew{\baseC} \phi$ for any $\baseC \supseteq \baseB$. 
        \item[IPL$^\ast$-AndCut] \labelandtag{lem:Tor:key}{IPL$^\ast$-AndCut} If $\suppNew{\baseB} \phi \land \psi$ and $\phi, \psi \suppNew{\baseB} \chi$, then  $\suppNew{\baseB} \chi$. 
    \end{description}
   The first claim follows easily from~\ref{cl:tor:inf}. The second is a generalization of~\ref{cl:tor:andnew}; it follows by induction on the structure of $\chi$ --- an analogous treatment of disjunction was given by Sandqvist~\cite{Sandqvist2015base}.

    By Theorem~\ref{thm:Sandqvist}, it suffices to show that $\varGamma \suppNew{} \varphi$ iff  $\varGamma \suppTor{} \varphi$. For this it suffices to show $ \suppNew{\baseB} \theta$ iff $\suppTor{\baseB} \theta$ for arbitrary $\baseB$ and $\theta$. We proceed by induction on the structure of $\theta$. Since the two relations are defined identically except in the case when the $\theta$ is a conjunction, we restrict attention to this case. 

    First, we show $\suppTor{\baseB} \theta_1 \land \theta_2$ implies $\suppNew{\baseB} \theta_1 \land \theta_2$. By \ref{cl:tor:andnew}, the conclusion is equivalent to the following: for any $\baseC \supseteq \baseB$ and $\at{p} \in \At$, if $\theta_1, \theta_2 \suppNew{\baseC} \at{p}$, then $\suppNew{\baseC} \at{p}$. Therefore, fix $\baseC \supseteq \baseB$ and $\at{p} \in \At$  such that $\theta_1, \theta_2 \suppNew{\baseC} \at{p}$. By~\ref{cl:tor:inf}, this entails the following: if $\suppNew{\baseC} \theta_1$ and $\suppNew{\baseC} \theta_2$, then $\suppNew{\baseC} \at{p}$. By~\ref{cl:tor:and} on the assumption (i.e., $\suppTor{\baseB} \theta_1 \land \theta_2$), we obtain $\suppTor{\baseB} \theta_1$ and $ \suppTor{\baseB} \theta_2$. Hence, by the induction hypothesis (IH), $\suppNew{\baseB} \theta_1$ and $\suppNew{\baseB} \theta_2$. Whence, by \eqref{lem:Tor:monotonenew}, $\suppNew{\baseC} \theta_1$ and $\suppNew{\baseC} \theta_2$. Therefore, $\suppNew{\baseC} \at{p}$. We have thus shown $\suppNew{\baseB} \theta_1 \land \theta_2$, as required.  
   
     Second, we show $\suppNew{\baseB} \theta_1 \land \theta_2$ implies $\suppTor{\baseB} \theta_1 \land \theta_2$. It is easy to see that $\theta_1 , \theta_2 \suppNew{\baseB} \theta_i$ obtains for $i = 1, 2$. Applying \eqref{lem:Tor:key} (setting $\phi=\theta_1$, $\psi= \theta_2$) once with $\chi = \theta_1$ and once  with $\chi=\theta_2$ yields $\suppNew{\baseB} \theta_1$ and $\suppNew{\baseB} \theta_2$. By the IH, $\suppTor{\baseB} \theta_1$ and $\suppTor{\baseB}\theta_2$. Hence, $\suppTor{\baseB} \theta_1 \land \theta_2$, as required. 
\end{proof}

A curious feature of the new semantics is that the meaning of the context-former (i.e., the comma) is not interpreted as $\land$; that is, defining the context-former as
\[
\suppNew{\baseB} \varGamma, \varDelta  \qquad \mbox{iff} \qquad \suppNew{\baseB} \varGamma \mbox{ and }  \suppNew{\baseB} \varDelta
\]
we may express \ref{cl:tor:inf}
\[
\varGamma \suppNew{\baseB} \phi \qquad  \mbox{iff} \qquad \mbox{for any $\base{C} \supseteq \baseB$, if $\suppNew{\baseC} \varGamma$, then $\suppNew{\baseC} \phi$}
\]
The clause for contexts is not the same as the clause for $\land$ in the new semantics. Nonetheless, as shown in the proof of Theorem \ref{thm:new}, they are equivalent at every base --- that is, $\suppNew{\baseB} \phi , \psi$ iff $\suppNew{\baseB} \phi \land \psi$ for any $\baseB$. 

This equivalence of the two semantics yields the following:
\begin{restatable}{corollary}{corSuppFormulaMeanInfAllAtoms}
\label{cor:supp-formula-mean-inf-all-atoms}
    For arbitrary base $\baseB$ and formula $\phi$, $\suppTor{\baseB} \phi$ iff, for any $\baseX \baseGeq \baseB$ and every atom $\at{p}$, if $\phi \suppTor{\baseX} \at{p}$, then $\suppTor{\baseX} \at{p}$. 
\end{restatable}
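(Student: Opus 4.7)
My plan is to avoid a direct induction on $\phi$ and instead reduce the corollary to the modified conjunction clause~\ref{cl:tor:andnew} by exploiting the idempotence $\phi \equiv \phi \land \phi$ in IPL, together with the base-level equivalence between $\suppTor{}$ and $\suppNew{}$ that is established within the proof of Theorem~\ref{thm:new}.

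The first step is the observation that $\suppTor{\baseB}\phi$ iff $\suppTor{\baseB}\phi\land\phi$, which is immediate from~\ref{cl:tor:and} since both conjuncts coincide. The second step is to switch semantics: inspection of the proof of Theorem~\ref{thm:new} shows that it actually delivers the base-indexed equivalence $\suppTor{\baseC}\theta$ iff $\suppNew{\baseC}\theta$ for every base $\baseC$ and formula $\theta$, not merely the global statement in its conclusion. Using this, I would rewrite $\suppTor{\baseB}\phi\land\phi$ as $\suppNew{\baseB}\phi\land\phi$, and then unfold using~\ref{cl:tor:andnew} to the condition that for every $\baseX \baseGeq \baseB$ and $\at{p}\in\At$, if $\phi,\phi \suppNew{\baseX} \at{p}$ then $\suppNew{\baseX}\at{p}$. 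Because contexts in IPL are finite \emph{sets}, the duplicated $\phi,\phi$ on the left collapses to a single $\phi$, giving the simpler condition: if $\phi \suppNew{\baseX} \at{p}$ then $\suppNew{\baseX}\at{p}$.

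The third step is to translate back to $\suppTor{}$ pointwise at each extension $\baseX$. Unfolding~\ref{cl:tor:inf} and reapplying the base-level equivalence yields $\phi \suppNew{\baseX} \at{p}$ iff $\phi \suppTor{\baseX} \at{p}$ and $\suppNew{\baseX}\at{p}$ iff $\suppTor{\baseX}\at{p}$. Composing all the equivalences gives exactly the right-hand side of the corollary.

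The only delicate point, and the one I would highlight as the main obstacle, is being precise that the argument uses the equivalence $\suppTor{\baseC}\theta \Leftrightarrow \suppNew{\baseC}\theta$ \emph{at each base} $\baseC$, rather than only the global version $\suppTor{}\theta \Leftrightarrow \suppNew{}\theta$ that appears in the statement of Theorem~\ref{thm:new}; however the stronger base-indexed form is precisely what the inductive argument inside the proof of Theorem~\ref{thm:new} establishes, so no additional work is needed.
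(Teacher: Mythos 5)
Your proposal is correct and is essentially the paper's own argument: the paper likewise reduces the corollary to the base-indexed equivalence of \ref{cl:tor:and} and \ref{cl:tor:andnew} established inside the proof of Theorem~\ref{thm:new}, the only difference being that it pads $\phi$ to $\phi \land \top$ for a universally supported $\top$ and then drops $\top$ from the context by its neutrality, whereas you pad to $\phi \land \phi$ and drop the duplicate because IPL contexts are finite sets. Both simplification steps are legitimate (the paper's conventions in Section~\ref{sec:BeS-for-IPL-Sandqvist} do take contexts to be sets), and your closing remark correctly identifies that the needed ingredient is the per-base equivalence $\suppTor{\baseC}\theta \Leftrightarrow \suppNew{\baseC}\theta$, which is exactly what the induction in Theorem~\ref{thm:new} proves.
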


The significance of this result is that we see that formulas in the B-eS are precisely characterized by their support of atoms.

\section{Intuitionistic Multiplicative Linear Logic}
\label{sec:MILL}

Having reviewed the B-eS for IPL, we turn now to \emph{intuitionistic multiplicative linear logic} (\MILL{}). We first define the logic and then consider the challenges of giving a B-eS for it. This motivates the technical work in Section \ref{sec:MILL-BeS}. Henceforth, we abandon the notation of the previous section as we do not need it and may recycle symbols and conventions.

Fix a countably infinite set $\At$ of atoms. 
\begin{definition}[Formula]
    The set of formulas $( \MILLformula)$ is defined by the following grammar:
    \[
       \phi, \psi ::= \at{p} \in \At \mid \phi \mand \psi \mid \mtop \mid \phi \mto \psi 
    \]
\end{definition}

We use $\at{p}, \at{q}, \dots$ for atoms and $\phi, \psi, \chi, \dots$ for formulas. In contrast to the work on IPL, collections of formulas in \MILL{} are more typically \emph{multisets}. We use $\at{P}, \at{Q}, \dots$ for \emph{finite multisets} of atoms, and $\varGamma, \varDelta, \ldots$ to denote \emph{finite multisets} of formulas. 

We use $\makeMultiset{\,\cdot\,}$ to specify a multiset; for example, $\makeMultiset{\phi, \phi, \psi}$ denotes the multiset consisting of two occurrence of $\phi$ and one occurrences of $\psi$.
The empty multiset  (i.e., the multiset with no members) is denoted $\emptymultiset$. The union of two multisets $\varGamma$ and $\varDelta$ is denoted 
$\varGamma \mcomma \varDelta$. We may identify a multiset containing one element with the element itself; thus, we may write $\psi \mcomma \varDelta$ instead of $\makeMultiset{\psi} \mcomma \varDelta$ to denote the union of multiset $\varDelta$ and the singleton multiset $\makeMultiset{ \psi }$. Thus, when no confusion arises, we may write $\phi_1 \mcomma \dots \mcomma \phi_n$ to denote $[\phi_1,...,\phi_n]$.

\begin{definition}[Sequent]
    A sequent is a pair $\varGamma \seq \phi$ in which $\varGamma$ is a multiset of  formulas and $\phi$ is a formula. 
\end{definition}

We characterize \MILL{} by proof in a natural deduction system. Since it is a substructural logic, we write the system in the format of a sequent calculus as this represents the context management explicitly. We assume general familiarity with sequent calculi --- see, for example, Troelstra and Schwichtenberg~\cite{troelstra2000basic}.

\begin{definition}[System $\calculusMILL{}$]
    The sequential natural deduction system for 
    \MILL{}, denoted 
    $\calculusMILL$, is given by the rules in Figure \ref{fig:MILL-natural-deduction-sequent}. 
\end{definition}

 A sequent $\varGamma \seq \phi$ is a \emph{consequence} of \MILL{} --- denoted $\varGamma \provesMILL \phi$ --- iff there is a \calculusMILL-proof of it.

\begin{figure}[t]
\hrule \vspace{2mm} 
        \labelandtag{eq:mill-axiom}{\rn{ax}}
        \labelandtag{eq:mill-implication-intro}{$\irn{\mto}$}
        \labelandtag{eq:mill-implication-elim}{$\ern{\mto}$}
        \labelandtag{eq:mill-top-intro}{$\irn{\mtop}$}
        \[
            \infer[\rn{ax}]{\phi \seq \phi}{ } 
            \qquad
            \infer[\irn{\mto}]{\varGamma \seq \phi \mto \psi}{\varGamma \mcomma \phi \seq \psi} 
            \qquad 
            \infer[\ern{\mto}]{\varGamma \mcomma \varDelta \seq \psi}{\varGamma \seq \phi \mto \psi & \varDelta \seq \phi} 
            \qquad 
            \infer[\irn{\mtop}]{ \emptyset \seq \mtop }{ }
        \] 
        \labelandtag{eq:mill-top-elim}{$\ern{\mtop}$}
        \labelandtag{eq:mill-conjunction-intro}{$\irn{\mand}$}
        \labelandtag{eq:mill-conjunction-elim}{$\ern{\mand}$}
        \[ 
        \infer[\ern{\mtop}]{\varGamma \mcomma \varDelta \seq \phi}{\varGamma \seq \phi & \varDelta \seq \mtop}
            \qquad 
            \infer[\irn{\mand}]{\varGamma \mcomma \varDelta \seq \phi \mand \psi}{\varGamma \seq \phi & \varDelta \seq \psi}
            \qquad 
            \infer[\ern{\mand}]{\varGamma \mcomma \varDelta \seq \chi}{\varGamma \seq \phi \mand \psi & \varDelta \mcomma \phi \mcomma \psi \seq \chi}
                   \vspace{.5mm} %
        \]
    \hrule 
    \vspace{1mm}
    \caption{The Sequential Natural Deduction System $\calculusMILL$ for IMLL }
    \vspace{-10pt}
    \label{fig:MILL-natural-deduction-sequent}
\end{figure}

One may regard \MILL{} as IPL without the structural rules of weakening and contraction ---  see Do\v{s}en \cite{dosen}. In other words, adding the following rules to $\calculusMILL$ recovers a sequent calculus for IPL:
\[
\infer[\weak]{\varDelta \mcomma \varGamma \seq \phi}{\varGamma \seq \phi} \qquad 
\infer[\cont]{\varDelta \mcomma  \varGamma \seq \phi}{\varDelta \mcomma \varDelta \mcomma \varGamma \seq \phi}
\]

To stay close to the work in Section \ref{sec:BeS-for-IPL-Sandqvist} it is instructive to consider the natural deduction presentation, too. The rule figures may be the same, but their application is not; for example, 
\begin{center}
\begin{tabular}{l@{\qquad}l@{\qquad}l}
    \begin{prooftree}
        \hypo{ \varphi } 
        \hypo{ \psi } 
        \infer2{ \varphi \mand \psi }
    \end{prooftree}
    & \mbox{means}
    & \mbox{if $\varGamma \provesMILL \phi$ and $\varDelta \provesMILL \psi$, then $\varGamma \mcomma \varDelta \provesMILL \phi \otimes \psi$} \\
    & &  \mbox{(i.e., \emph{not} `if $\varGamma \provesMILL \phi$ and $\varGamma \provesMILL \psi$, then $\varGamma \provesMILL \phi \otimes \psi$')}
\end{tabular}
\end{center}
Here, it is important that the context are multisets, not as sets. 

The strict context management in \MILL{} yields the celebrated `resource interpretations' of Linear Logic --- see Girard \cite{girard1995}. The leading example of which is, perhaps, the number-of-uses reading in which a proof of a formula $\phi \mto \psi$ determines a function that \emph{uses} its arguments exactly once. This reading is, however, entirely proof-theoretic and is not expressed in the truth-functional semantics of \MILL{} --- see Girard \cite{girard1995}, Allwein and Dunn \cite{Allwein}, and Coumans et al. \cite{COUMANS201450}. Though these semantics do have sense of `resource' it is not via the number-of-uses reading, but instead denotational in the sense of the treatment of resources in the truth-functional semantics of the logic of Bunched Implications \cite{o1999logic}. The number-of-uses reading is, however, reflected in the categorical semantics  --- see Seely \cite{seely1989linear} and Biermann \cite{Bierman1994intuitionistic,Bierman1995}. 

How do we render support sensitive to the resource reading? The subtlety is that  for $\varGamma \suppTor{} \phi$ (where $\varGamma \neq \emptyset$), we must somehow transmit the resources captured by $\varGamma$ to $\phi$. From Corollary \ref{cor:supp-formula-mean-inf-all-atoms}, we see that in B-eS the content of a formula is captured by the atoms it supports. Therefore, we enrich the support relation with an multiset of atoms $P$,
\[
\begin{array}{l@{\quad}c@{\quad}l}
\varGamma \suppM{\baseB}{P} \phi & \mbox{iff} & 
        \mbox{for any $\baseX \baseGeq \baseB$ and any  $\at{U}$, if $\suppM{\baseX}{\at{U}} \varGamma$, then $\suppM{\baseX}{\at{P} \mcomma \, \at{U}} \phi$}
     \end{array}
\]
where
\[
\begin{array}{l@{\quad}c@{\quad}l}
\suppM{\baseB}{U} \varGamma_1 \mcomma \varGamma_2 & \mbox{iff} & \mbox{there are $U_1$ and $U_2$ such that $U = (U_1 \mcomma U_2)$, $\suppM{\baseX}{\at{U}_1} \varGamma_1$, and  $\suppM{\baseX}{\at{U}_2} \varGamma_2$}
\end{array}
\]
This completes the background on \MILL{}.

\section{Base-extension Semantics for \MILL{}} \label{sec:MILL-BeS}

In this section, we give a B-eS for \MILL{}. It is structured as follows: first, we define support in a base in Section~\ref{subsec:MILL-atomic-system}; second,  we prove soundness in Section~\ref{subsec:mill-BeS-soundness}; finally, we prove completeness in  Section~\ref{subsec:mill-BeS-completeness}.

\subsection{Support in a Base}
\label{subsec:MILL-atomic-system} 
The definition of the B-eS proceeds in line with that for IPL (Section~\ref{sec:BeS-for-IPL-Sandqvist}) while taking substructurality into consideration.
\begin{definition}[Atomic Sequent]
    An \emph{atomic sequent} is a pair $P \seq \at{p}$ in which $P$ is a multiset of atoms and $\at{q}$ is an atom. 
\end{definition}
\begin{definition}[Atomic Rule]
\label{def:MILL-atomic-rule}    
An \emph{atomic rule} is a pair $\mathcal{P} \Rightarrow \at{p}$ in which $\mathcal{P}$ is a (possibly empty) finite set of atomic sequents and $\at{p}$ in an atom.
\end{definition}
\begin{definition}[Base]
A \emph{base} $\baseB$ is a (possibly infinite) set of atomic rules. 
\end{definition}
\begin{definition}[Derivability in a Base]
\label{def:derivability-base-MILL}
The relation $\deriveBaseM{\baseB}$ of \emph{derivability in $\baseB$} is the least relation satisfying the following: 
\begin{description}
    \item[Ref] 
    $\at{p} \deriveBaseM{\baseB} \at{p}$ \labelandtag{eq:derive-mill-ref} {\textsc{ref}}
    \item[App]  \labelandtag{eq:derive-mill-app} {\textsc{app}}
    If $\at{S}_i \mcomma \at{P}_i \deriveBaseM{\baseB} \at{p}_i$ for $i = 1, \dots, n$ and $\left( \at{P}_1 \seq \at{p}_1, \dots, \at{P}_n \seq \at{p}_n \right) \Rightarrow \at{p} \in \baseB$, then $\at{S}_1 \mcomma \dots \mcomma \at{S}_n \deriveBaseM{\baseB} \at{p}$. 
\end{description}
\end{definition}
Note the differences between Definition \ref{def:derivability-base-IPL} and Definition \ref{def:derivability-base-MILL}:  first, in \eqref{eq:derive-mill-ref}, no redundant atoms are allowed to appear, while in \eqref{eq:derive-IPL-ref} they may; second, in \eqref{eq:derive-mill-app}, the multisets $\at{S}_1$,...,$\at{S}_n$ are collected together as a multiset, while in \eqref{eq:derive-IPL-app}, there is one set. 
These differences reflect the fact in the multiplicative setting that `resources' can neither be discharged nor shared. 

\begin{definition}[Support]
\label{def:MILL-BeS}
    That a sequent $\varGamma \seq \phi$ is \emph{supported in the base $\baseB$ using resources $\at{S}$} --- denoted $\varGamma \suppM{\baseB}{\at{S}} \phi$ --- is defined by the clauses of Figure~\ref{fig:MILL-BeS-def} in which $\varGamma$ and $\varDelta$ are non-empty finite multisets of formulas.  The sequent  $\varGamma \seq \phi$ is \emph{supported using resources $S$} --- denoted  $\varGamma \suppM{}{\at{S}} \phi$ --- iff $\varGamma \suppM{\baseB}{\at{S}} \phi$ for any base $\baseB$.  The sequent $\varGamma \seq \phi$ is \emph{valid} --- denoted $\varGamma \suppM{}{} \phi$ --- iff $\varGamma \seq \phi$ is supported using the empty multiset of resources (i.e., $\varGamma \suppM{}{\emptyset} \phi$).
    %
    \end{definition}
    \begin{figure}[t]
        \hrule \vspace{1mm}
        \begin{tabular}{r@{\qquad}l@{\quad}c@{\quad}l}
            \mbox{(At)\labelandtag{eq:MILL-BeS-At}{At}} & \mbox{$\suppM{\baseB}{\at{P}} \at{p}$} &  \mbox{iff} & \mbox{$\at{P} \deriveBaseM{\baseB} \at{p}$} \\ [1mm]
            \mbox{($\mand$)\labelandtag{eq:MILL-BeS-tensor}{$\mand$}} & $\suppM{\baseB}{\at{P}} \phi \mand \psi$ & \mbox{iff} &\mbox{for any $\baseX \baseGeq \baseB$, multiset of atoms $\at{U}$, and atom $\at{p}$,} \\[1mm]
            & & & \mbox{if $\phi \mcomma \psi \suppM{\baseX}{\at{U}} \at{p}$, then $\suppM{\baseX}{\at{P} \mcomma \, \at{U}} \at{p}$} \\ [1mm]
            \mbox{($\mtop$)\labelandtag{eq:MILL-BeS-I}{$\mtop$}} & \mbox{$\suppM{\baseB}{\at{P}} \mtop$} & \mbox{iff} & \mbox{for any $\baseX \baseGeq \baseB$, multiset of atoms $\at{U}$, and atom $\at{p}$,} \\ 
            & & & \mbox{if $\suppM{\baseX}{\;\at{U}} \at{p}$, then $\suppM{\baseX}{\;\at{P} \;\mcomma\;  \at{U}} \at{p}$} \\ [1mm]
            \mbox{($\mto$)\labelandtag{eq:MILL-BeS-imply}{$\mto$}} & \mbox{$\suppM{\baseB}{\at{P}} \phi \mto \psi$} &  \mbox{iff} & 
            \mbox{$\phi \suppM{\baseB}{\at{P}} \psi$} \\ [1mm]
        (\,\,$\mcomma$\,\,)\labelandtag{eq:MILL-BeS-comma}{\,\,$\mcomma$\,\,} & 
           $ \suppM{\baseB}{P} \varGamma \mcomma \varDelta$ & iff &  there are $U$ and $V$ such that $P = (U \mcomma V)$, $\suppM{\baseB}{\at{U}} \varGamma$, and  $\suppM{\baseB}{V} \varDelta$
            \\[1mm]
             \mbox{(Inf)\labelandtag{eq:MILL-BeS-Inf}{Inf}} 
                 & \hspace{-1.4em} $\varGamma \suppM{\baseB}{P} \phi$ & \mbox{iff} & 
        \mbox{for any $\baseX \baseGeq \baseB$ and any  $\at{U}$, if $\suppM{\baseX}{\at{U}} \varGamma$, then $\suppM{\baseX}{\;\at{P} \;\mcomma\; \at{U}} \phi$}
        \end{tabular}
        \vspace{1mm} \hrule \vspace{1mm}    
        \caption{Base-extension Semantics for \MILL{} }
        \label{fig:MILL-BeS-def}
        \vspace{-15pt}
    \end{figure} 

%

It is easy to see that Figure \ref{fig:MILL-BeS-def} is an inductive definition on a structure of formulas that prioritizes conjunction ($\mand$) over implication ($\mto$) --- an analogous treatment in IPL with disjunction ($\lor$) prioritized over implication ($\to$) has been given by Sandqvist~\cite{Sandqvist2015base}. 
As explained in Section \ref{sec:MILL}, the purpose of the  multisets of atoms $\at{S}$ in the support relation $\suppM{\baseB}{\at{S}}$ is to express the susbtructurality of the logical constants. The naive ways of using multisets of formulas rather than multisets of atoms --- for example, $\varGamma \suppM{\base{B}}{\varDelta} \phi$ iff $\suppM{\base{B}}{\varGamma \, \mcomma \, \varDelta} \phi$ --- results in impredicative definitions of support.

We read \eqref{eq:MILL-BeS-Inf} as saying that $\varGamma \suppM{\baseB}{\at{S}} \varphi$ (for $\varGamma \neq \emptyset$) means, for any extension $\baseX$ of $\baseB$, if $\varGamma$ is supported in $\baseX$ with some resources $\at{U}$ (i.e. $\suppM{\baseX}{\at{U}} \varGamma$), then $\varphi$ is also supported by combining the resources $\at{U}$ with the resources $\at{S}$ (i.e., $\suppM{\baseX}{\;\at{S} \; \mcomma \; \at{U}} \phi$). 

The following observation on the monotonicity of the semantics with regard to base extensions follows immediately by unfolding definitions: 
\begin{restatable}{proposition}{MillBeSMonotonicity}
\label{lem:BeS-Mill-monotone-on-base}
    If $\varGamma \suppM{\baseB}{\at{S}} \phi$ and $\baseC \baseGeq \baseB$, then $\varGamma \suppM{\baseC}{\at{S}} \phi$. 
\end{restatable}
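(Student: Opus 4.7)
The plan is to proceed by induction on the structure of $\phi$ and of $\varGamma$ in Definition~\ref{def:MILL-BeS}, observing that the support clauses of Figure~\ref{fig:MILL-BeS-def} have been engineered so that each one either quantifies universally over base extensions or recursively invokes a clause that does. Consequently, transitivity of $\baseGeq$ discharges most cases with essentially no computation.

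First, when $\varGamma$ is non-empty the result is immediate from clause~\eqref{eq:MILL-BeS-Inf}: the clause already universally quantifies over $\baseX \baseGeq \baseB$, and any $\baseY \baseGeq \baseC$ with $\baseC \baseGeq \baseB$ satisfies $\baseY \baseGeq \baseB$. So it suffices to treat the case $\varGamma = \emptyset$, which I would do by induction on $\phi$. For atomic $\phi = \at{p}$, clause~\eqref{eq:MILL-BeS-At} reduces the task to a side-lemma stating that derivability in a base is monotone: if $\at{P} \deriveBaseM{\baseB} \at{p}$ and $\baseC \baseGeq \baseB$, then $\at{P} \deriveBaseM{\baseC} \at{p}$. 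This follows by a routine induction on the derivation in Definition~\ref{def:derivability-base-MILL}: \eqref{eq:derive-mill-ref} is trivial, and in the \eqref{eq:derive-mill-app} case every rule of $\baseB$ is a rule of $\baseC$. For $\phi = \phi_1 \mand \phi_2$ and $\phi = \mtop$, the clauses~\eqref{eq:MILL-BeS-tensor} and~\eqref{eq:MILL-BeS-I} themselves quantify over all $\baseX \baseGeq \baseB$, so transitivity of $\baseGeq$ again suffices. For $\phi = \phi_1 \mto \phi_2$, clause~\eqref{eq:MILL-BeS-imply} unfolds to $\phi_1 \suppM{\baseB}{\at{S}} \phi_2$, which falls under the non-empty-context case already handled.

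Finally, for the context-former clause~\eqref{eq:MILL-BeS-comma} (which is what we appeal to when evaluating $\suppM{\baseX}{\at{U}} \varGamma$ for a multi-formula $\varGamma$ inside~\eqref{eq:MILL-BeS-Inf}), if $\suppM{\baseB}{P} \varGamma \mcomma \varDelta$ then by definition there exist splits $U, V$ with $P = U \mcomma V$, $\suppM{\baseB}{U} \varGamma$, and $\suppM{\baseB}{V} \varDelta$; I would apply the inductive hypothesis to each of the two smaller multisets to obtain $\suppM{\baseC}{U} \varGamma$ and $\suppM{\baseC}{V} \varDelta$, and then re-apply~\eqref{eq:MILL-BeS-comma} to conclude $\suppM{\baseC}{P} \varGamma \mcomma \varDelta$.

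I do not expect any serious obstacle: the proposition is stated as immediate precisely because the definition was written in a monotonicity-friendly form, so the only genuine piece of work beyond unfolding is the one-line side-induction establishing monotonicity of $\deriveBaseM{\cdot}$ with respect to base extension.
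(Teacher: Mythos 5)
Your proposal is correct and follows essentially the same route as the paper: both arguments observe that every non-atomic clause (and the (Inf) clause) universally quantifies over base extensions, so transitivity of $\baseGeq$ discharges those cases, and both reduce the atomic case to monotonicity of $\deriveBaseM{\cdot}$ under base extension. Your version is slightly more explicit in isolating the side-induction for derivability and in handling the comma clause, but there is no substantive difference.
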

From this proposition we see the following:  $\varGamma \suppM{}{\at{S}} \phi$ iff $\varGamma \suppM{\emptymultiset}{\at{S}} \phi$, and $\varGamma \suppM{}{} \phi$ iff $\varGamma \suppM{\emptyset}{\emptymultiset} \phi$.
    As expected, we do not have monotonicity on resources --- that is, $\varGamma \suppM{}{\at{S}} \phi$ does not, in general, imply $\varGamma \suppM{}{\;\at{S} \; \mcomma \; T} \phi$ for arbitrary $\at{T}$. This exposes the different parts played by bases and the resources in the semantics: bases are the setting in which a formula is supported, resources are tokens used in that setting to establish the support.

A distinguishing aspect of support is the structure of \eqref{eq:MILL-BeS-Inf}. In one direction, it is merely cut, but in the other it says something stronger. The completeness argument will go through the atomic case (analogous to the treatment of IPL in Section \ref{subsec:IPL-BeS-completeness}), and the following proposition suggests that the setup is correct:
\begin{restatable}{proposition}{lemMILLAtomicCut}
\label{lem:mill-atomic-cut}
    The following two propositions are equivalent for arbitrary base $\baseB$, multisets of atoms $\at{P}, \at{S}$, and atom $\at{q}$, where we assume $\at{P} = \makeMultiset{\at{p}_1, \dots, \at{p}_n}$: 
    \begin{enumerate}
        \item $\at{P} \mcomma \at{S} \deriveBaseM{\baseB} \at{q}$. \label{eq:mill-atomic-cut-1}
        \item for any $\baseX \baseGeq \baseB$ and multisets of atoms $\at{T_1}, \dots, \at{T_n}$, if $\at{T}_i \deriveBaseM{\baseX} \at{p}_i$ holds for all $i = 1, \dots,n$, then $\at{T}_1 \mcomma \dots \mcomma \at{T}_n \mcomma \at{S} \deriveBaseM{\baseX} \at{q}$. \label{eq:mill-atomic-cut-2}
    \end{enumerate}
\end{restatable}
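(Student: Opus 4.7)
The plan is to prove the two directions separately, with (\ref{eq:mill-atomic-cut-2}) $\Rightarrow$ (\ref{eq:mill-atomic-cut-1}) being essentially immediate and (\ref{eq:mill-atomic-cut-1}) $\Rightarrow$ (\ref{eq:mill-atomic-cut-2}) proceeding by induction on the derivation.

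For (\ref{eq:mill-atomic-cut-2}) $\Rightarrow$ (\ref{eq:mill-atomic-cut-1}), I would simply instantiate (\ref{eq:mill-atomic-cut-2}) at $\baseX := \baseB$ and $\at{T}_i := \makeMultiset{\at{p}_i}$ for each $i$. Reflexivity \eqref{eq:derive-mill-ref} yields $\at{T}_i \deriveBaseM{\baseB} \at{p}_i$, so (\ref{eq:mill-atomic-cut-2}) delivers $\at{p}_1 \mcomma \dots \mcomma \at{p}_n \mcomma \at{S} \deriveBaseM{\baseB} \at{q}$, which is exactly $\at{P} \mcomma \at{S} \deriveBaseM{\baseB} \at{q}$.

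For the other direction, I would proceed by induction on the derivation witnessing $\at{P} \mcomma \at{S} \deriveBaseM{\baseB} \at{q}$. If the derivation is an instance of \eqref{eq:derive-mill-ref}, then $\at{P} \mcomma \at{S} = \makeMultiset{\at{q}}$, so either (a) $n = 0$ and $\at{S} = \makeMultiset{\at{q}}$, in which case (\ref{eq:mill-atomic-cut-2}) holds vacuously in its antecedent and the required $\at{S} \deriveBaseM{\baseX} \at{q}$ follows by \eqref{eq:derive-mill-ref} and monotonicity of derivability in base extensions, or (b) $n = 1$, $\at{S} = \emptymultiset$, and $\at{p}_1 = \at{q}$, in which case the hypothesis $\at{T}_1 \deriveBaseM{\baseX} \at{p}_1 = \at{q}$ is exactly the conclusion sought. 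If the derivation ends with an instance of \eqref{eq:derive-mill-app} using a rule $(\at{P}'_1 \seq \at{p}'_1, \dots, \at{P}'_m \seq \at{p}'_m) \Rightarrow \at{q} \in \baseB$, then there are multisets $\at{S}_1, \dots, \at{S}_m$ with $\at{S}_1 \mcomma \dots \mcomma \at{S}_m = \at{P} \mcomma \at{S}$ and sub-derivations $\at{S}_j \mcomma \at{P}'_j \deriveBaseM{\baseB} \at{p}'_j$ for each $j$.

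The key step --- and the main combinatorial obstacle --- is to split each $\at{S}_j$ as $\at{S}_j = \at{R}_j \mcomma \at{S}'_j$ where $\at{R}_j$ consists of the atoms contributed by $\at{P}$ and $\at{S}'_j$ of the atoms contributed by $\at{S}$, in such a way that $\at{R}_1 \mcomma \dots \mcomma \at{R}_m = \at{P}$ and $\at{S}'_1 \mcomma \dots \mcomma \at{S}'_m = \at{S}$; such a splitting exists by elementary properties of multiset union. Writing $\at{R}_j = \makeMultiset{\at{p}_{i^j_1}, \dots, \at{p}_{i^j_{k_j}}}$, the induction hypothesis applied to each sub-derivation $\at{S}_j \mcomma \at{P}'_j \deriveBaseM{\baseB} \at{p}'_j$ (viewing $\at{R}_j$ as the "$\at{P}$-part" and $\at{S}'_j \mcomma \at{P}'_j$ as the "$\at{S}$-part") yields, given $\at{T}_i \deriveBaseM{\baseX} \at{p}_i$, that $\at{T}_{i^j_1} \mcomma \dots \mcomma \at{T}_{i^j_{k_j}} \mcomma \at{S}'_j \mcomma \at{P}'_j \deriveBaseM{\baseX} \at{p}'_j$ for each $j$. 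Applying \eqref{eq:derive-mill-app} to these $m$ derivations and the atomic rule (which is in $\baseX$ since $\baseX \baseGeq \baseB$) produces a derivation of $\at{q}$ from the multiset union of all the $\at{T}_{i^j_l}$ and $\at{S}'_j$, which by the splitting is exactly $\at{T}_1 \mcomma \dots \mcomma \at{T}_n \mcomma \at{S}$, as required.
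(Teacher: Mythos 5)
Your proposal is correct and follows essentially the same route as the paper's own proof: the trivial instantiation for (\ref{eq:mill-atomic-cut-2})~$\Rightarrow$~(\ref{eq:mill-atomic-cut-1}), and induction on the derivation with the same two \textsc{ref} subcases and the same \textsc{app} step for the converse. The only difference is presentational --- you make explicit the multiset splitting of each premiss's context into its $\at{P}$-part and $\at{S}$-part, which the paper simply assumes as given in its notation.
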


It remains to prove soundness and completeness. 

\vspace{-5pt}
\subsection{Soundness}
\label{subsec:mill-BeS-soundness}
\begin{restatable}[Soundness]{theorem}{thmMILLBeSSoundness}
\label{thm:MILL-BeS-soundness}
   If $\varGamma \provesMILL \phi$, then $\varGamma \suppM{}{} \phi$. 
\end{restatable}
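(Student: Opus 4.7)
The plan is to proceed by induction on the structure of $\calculusMILL$-derivations with a suitably strengthened induction hypothesis. Unfolding clause~\eqref{eq:MILL-BeS-Inf} together with Proposition~\ref{lem:BeS-Mill-monotone-on-base}, the desired conclusion $\varGamma \suppM{}{} \phi$ is equivalent to the statement that for every base $\baseB$ and every multiset of atoms $\at{U}$, if $\suppM{\baseB}{\at{U}} \varGamma$ then $\suppM{\baseB}{\at{U}} \phi$, and this is what I would establish by induction on the last rule of the $\calculusMILL$-derivation.

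The axiom~\eqref{eq:mill-axiom} is immediate by reflexivity. For~\eqref{eq:mill-implication-intro}, given $\suppM{\baseB}{\at{U}} \varGamma$ one fixes $\baseX \baseGeq \baseB$ and $\at{V}$ with $\suppM{\baseX}{\at{V}} \phi$, combines them via clause~\eqref{eq:MILL-BeS-comma} into $\suppM{\baseX}{\at{U} \mcomma \at{V}} \varGamma \mcomma \phi$, and appeals to the IH on the premise to obtain $\suppM{\baseX}{\at{U} \mcomma \at{V}} \psi$, which by clause~\eqref{eq:MILL-BeS-imply} delivers $\suppM{\baseB}{\at{U}} \phi \mto \psi$. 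The rules~\eqref{eq:mill-implication-elim} and~\eqref{eq:mill-conjunction-intro} are treated by splitting the resource multiset $\at{U}$ via~\eqref{eq:MILL-BeS-comma} as $\at{U_1} \mcomma \at{U_2}$ matching the two sub-contexts, applying the IH independently to each sub-derivation, and reassembling the pieces (by unfolding~\eqref{eq:MILL-BeS-imply} and modus-ponens-style at the resource level for~\eqref{eq:mill-implication-elim}, and by unfolding~\eqref{eq:MILL-BeS-tensor} for~\eqref{eq:mill-conjunction-intro}). Rule~\eqref{eq:mill-top-intro} is an immediate consequence of clause~\eqref{eq:MILL-BeS-I} applied with the empty resource.

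The genuine obstacle is the pair of elimination rules~\eqref{eq:mill-top-elim} and~\eqref{eq:mill-conjunction-elim}: the defining clauses~\eqref{eq:MILL-BeS-I} and~\eqref{eq:MILL-BeS-tensor} express the elimination behaviour of $\mtop$ and $\mand$ only at the \emph{atomic} level, whereas these rules allow the conclusion to be an arbitrary formula $\chi$. I would bridge this gap with two lifting lemmas whose shapes mirror the atomic clauses:
\begin{enumerate}
    \item[(i)] if $\suppM{\baseB}{\at{P}} \mtop$ and $\suppM{\baseB}{\at{U}} \chi$, then $\suppM{\baseB}{\at{P} \mcomma \at{U}} \chi$;
    \item[(ii)] if $\suppM{\baseB}{\at{P}} \phi \mand \psi$ and $\phi \mcomma \psi \suppM{\baseB}{\at{U}} \chi$, then $\suppM{\baseB}{\at{P} \mcomma \at{U}} \chi$.
\end{enumerate}
Both are proved by induction on the structure of $\chi$. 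The atomic case in each lemma is exactly the relevant defining clause; the case $\chi = \chi_1 \mto \chi_2$ unfolds via~\eqref{eq:MILL-BeS-imply} into a claim about $\chi_2$ with an extra assumption $\suppM{\baseX}{\at{V}} \chi_1$ that can be absorbed into the context via~\eqref{eq:MILL-BeS-comma} before invoking the IH at the strictly smaller formula $\chi_2$; the cases $\chi = \chi_1 \mand \chi_2$ and $\chi = \mtop$ require no recursion on $\chi$, being handled directly by unfolding the defining clause for the head of $\chi$ at an atomic witness and then threading the $\mtop$- or $\mand$-hypothesis through. With these lemmas in hand, rules~\eqref{eq:mill-top-elim} and~\eqref{eq:mill-conjunction-elim} fall into line: split the resources as $\at{U_1} \mcomma \at{U_2}$ via~\eqref{eq:MILL-BeS-comma}, use the IH on each sub-derivation to produce the two hypotheses of the corresponding lifting lemma, and conclude $\suppM{\baseB}{\at{U_1} \mcomma \at{U_2}} \chi$ as needed.
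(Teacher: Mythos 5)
Your proposal is correct and follows essentially the same route as the paper: soundness is reduced to showing that $\suppM{}{}$ respects each rule of $\calculusMILL$, with the only delicate cases being \eqref{eq:mill-top-elim} and \eqref{eq:mill-conjunction-elim}, which are handled by exactly your lifting lemmas (i) and (ii) — these are the paper's Lemma~\ref{lem:mill-mtop-key-lemma} and Lemma~\ref{lem:mill-conj-key-lemma}, proved there, as here, by induction on the structure of $\chi$ with genuine recursion only in the $\mto$ case.
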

%
The argument follows a typical strategy of showing that the semantics respects the rules of $\calculusMILL$ --- 
that is, for any $\varGamma, \varDelta, \phi, \psi,$ and $\chi$:
\labelandtag{eq:mill-soundness-axiom-0}{      \mbox{(Ax)}}
\labelandtag{eq:mill-soundness-implication-intro-0}{\mbox{($\mto$I)}}
\labelandtag{eq:mill-soundness-implication-elim-0}{\mbox{($\mto$E)}}
\labelandtag{eq:mill-soundness-conjunction-intro-0}{\mbox{($\mand$I)}}
\labelandtag{eq:mill-soundness-conjunction-elim-0}{\mbox{($\mand$E)}}
\labelandtag{eq:mill-soundness-top-intro-0}{\mbox{($\mtop$I)}}
\labelandtag{eq:mill-soundness-top-elim-0}{\mbox{($\mtop$E)}}
\[
\hspace{-3cm}
\begin{array}{r@{\quad}l}
      \mbox{(Ax)} &  \phi \suppM{}{} \phi\\
     \mbox{($\mto$I)} &  \mbox{If $\varGamma, \phi \suppM{}{} \psi$, then $\varGamma \suppM{}{} \phi \mto \psi$}\\
     \mbox{($\mto$E)} &  \mbox{If $\varGamma \suppM{}{} \phi \mto \psi$ and $\varDelta \suppM{}{} \phi$, then $\varGamma, \varDelta \suppM{}{} \psi$} \\
    \mbox{($\mand$I)} & \mbox{If $\varGamma \suppM{}{} \phi$ and $\varDelta \suppM{}{} \psi$, then $\varGamma, \varDelta \suppM{}{} \phi \mand \psi$} \\
     \mbox{($\mand$E)} & \mbox{If $\varGamma \suppM{}{} \phi \mand \psi$ and $\varDelta \mcomma \phi \mcomma \psi \suppM{}{} \chi$, then $\varGamma \mcomma \varDelta \suppM{}{} \chi$} \\
    \mbox{($\mtop$I)} & \suppM{}{} \mtop \\
   \mbox{($\mtop$E)}  & \mbox{If $\varGamma \suppM{}{} \chi$ and $\varDelta \suppM{}{} \mtop$, then $\varGamma, \varDelta \suppM{}{} \chi$}
\end{array}
\]

These follow quickly from the fact that the clauses of each connective in Figure \ref{fig:MILL-BeS-def} takes the form of its elimination rules. The only subtle cases are \ref{eq:mill-soundness-conjunction-elim-0} and \ref{eq:mill-soundness-top-elim-0}. 

To show \ref{eq:mill-soundness-top-elim-0}, suppose $\varGamma \suppM{}{} \chi$ and $\varDelta \suppM{}{} \mtop$. We require to show $\varGamma \mcomma \varDelta \suppM{}{} \chi$. By \eqref{eq:MILL-BeS-Inf}, we fix some base $\baseB$ and multisets of atoms $\at{P}$ and $\at{Q}$ such that $\suppM{\baseB}{\at{P}} \varGamma$ and $\suppM{\baseB}{\at{Q}} \varDelta$. It  remains to verify $\suppM{\baseB}{\at{P} \, \mcomma\, \at{Q}} \chi$. When $\chi$ is atomic, this follows immediately from $\suppM{\baseB}{\at{P}} \chi$ and $\suppM{\baseB}{\at{Q}} \mtop$ by \eqref{eq:MILL-BeS-I}. To handle non-atomic $\chi$, we require the following: 
\begin{restatable}{lemma}{lemMILLMtopKeyLemma}
\label{lem:mill-mtop-key-lemma}
For arbitrary base $\baseB$, multisets of atoms $\at{S}, \at{T}$, and formula $\chi$, if 
\begin{enumerate*}
    \item $\suppM{\baseB}{\at{S}} \mtop$,\label{eq:mill-soundness-mtop-key-lemma-condition-1} 
    \item $\suppM{\baseB}{\at{T}} \chi$,\label{eq:mill-soundness-mtop-key-lemma-condition-2}
\end{enumerate*}
then 
\begin{enumerate*}
    \setcounter{enumi}{2}
    \item $\suppM{\baseB}{\at{S} \, \mcomma \, \at{T}} \chi$. \label{eq:mill-soundness-mtop-key-lemma-conclusion}
\end{enumerate*}
\end{restatable}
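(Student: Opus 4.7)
The plan is to prove the lemma by induction on the structure of the formula $\chi$, exploiting the key feature of the $\mtop$-clause, namely that $\suppM{\baseB}{\at{S}} \mtop$ literally says $\at{S}$ can be absorbed into the resource budget when supporting any atom at any extension of $\baseB$. Monotonicity (Proposition \ref{lem:BeS-Mill-monotone-on-base}) will be used freely to lift hypotheses from $\baseB$ to any extension $\baseX$ that the unfolding of a clause forces us into.

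For the base case $\chi = \at{q}$, the conclusion is immediate: given $\suppM{\baseB}{\at{T}} \at{q}$ and $\suppM{\baseB}{\at{S}} \mtop$, applying \eqref{eq:MILL-BeS-I} at the base $\baseB$ itself yields $\suppM{\baseB}{\at{S} \, \mcomma \, \at{T}} \at{q}$. The case $\chi = \mtop$ and the case $\chi = \phi \mand \psi$ both follow the same shape: unfold the relevant clause (\eqref{eq:MILL-BeS-I} or \eqref{eq:MILL-BeS-tensor}) to fix some $\baseX \baseGeq \baseB$, multiset $\at{U}$ and atom $\at{p}$ with the appropriate hypothesis (either $\suppM{\baseX}{\at{U}} \at{p}$ or $\phi \mcomma \psi \suppM{\baseX}{\at{U}} \at{p}$); first use $\suppM{\baseB}{\at{T}} \chi$ (via monotonicity into $\baseX$) to derive $\suppM{\baseX}{\at{T} \, \mcomma \, \at{U}} \at{p}$; then use $\suppM{\baseB}{\at{S}} \mtop$ (again via monotonicity) together with \eqref{eq:MILL-BeS-I} to absorb $\at{S}$ and obtain $\suppM{\baseX}{\at{S} \, \mcomma \, \at{T} \, \mcomma \, \at{U}} \at{p}$, which is what the clause requires.

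The case $\chi = \phi \mto \psi$ is where the induction hypothesis is genuinely needed, and I expect this to be the main obstacle because $\mto$ is not defined via absorption into atoms. From $\suppM{\baseB}{\at{T}} \phi \mto \psi$, clause \eqref{eq:MILL-BeS-imply} gives $\phi \suppM{\baseB}{\at{T}} \psi$; to establish $\suppM{\baseB}{\at{S} \, \mcomma \, \at{T}} \phi \mto \psi$, I must show $\phi \suppM{\baseB}{\at{S} \, \mcomma \, \at{T}} \psi$, so by \eqref{eq:MILL-BeS-Inf} I fix $\baseX \baseGeq \baseB$ and $\at{U}$ with $\suppM{\baseX}{\at{U}} \phi$ and aim for $\suppM{\baseX}{\at{S} \, \mcomma \, \at{T} \, \mcomma \, \at{U}} \psi$. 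Applying $\phi \suppM{\baseB}{\at{T}} \psi$ (monotone-lifted to $\baseX$) yields $\suppM{\baseX}{\at{T} \, \mcomma \, \at{U}} \psi$; monotonicity also gives $\suppM{\baseX}{\at{S}} \mtop$. The induction hypothesis, applied at base $\baseX$ with resources $\at{S}$ and $\at{T} \, \mcomma \, \at{U}$ and formula $\psi$, then delivers the required $\suppM{\baseX}{\at{S} \, \mcomma \, \at{T} \, \mcomma \, \at{U}} \psi$.

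Two small points to flag when writing this out carefully: first, the induction must be on formula complexity rather than on $\chi$ alone in any more delicate way, because in each inductive step the hypotheses are reapplied at possibly larger bases and different resource multisets, so it is important that the statement is universally quantified over $\baseB$, $\at{S}$, and $\at{T}$; second, the proof uses the commutativity of multiset union silently when rearranging $\at{S} \, \mcomma \, \at{T} \, \mcomma \, \at{U}$ versus $\at{T} \, \mcomma \, \at{S} \, \mcomma \, \at{U}$, which is harmless but worth noting.
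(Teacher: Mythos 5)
Your proposal is correct and follows essentially the same route as the paper's proof: structural induction on $\chi$, with the atomic, $\mtop$, and $\mand$ cases discharged by absorbing $\at{S}$ at the level of atomic support via \eqref{eq:MILL-BeS-I}, and the induction hypothesis genuinely invoked only for $\mto$. The only cosmetic differences are that the paper routes the $\mand$ case through Lemma~\ref{lem:mill-conj-key-lemma} (which, for an atomic conclusion, collapses to the clause \eqref{eq:MILL-BeS-tensor} you use directly) and applies the IH to the antecedent $\sigma$ rather than the consequent $\tau$ in the $\mto$ case; both choices are equally valid.
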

This lemma follows by induction on the structure of $\chi$, with the base case given by \eqref{eq:MILL-BeS-I}. One cannot use this general form to define $\mtop$ as it would result in an impredicative definition of support.

Similarly, we require the following to prove \ref{eq:mill-soundness-conjunction-elim-0}: 
\begin{restatable}{lemma}{lemMILLConjKeyLemma}
\label{lem:mill-conj-key-lemma}
For arbitrary base $\baseB$, multisets of atoms $\at{S}, \at{T}$, and formulas $\phi, \psi, \chi$, 
if 
\begin{enumerate*}
    \item $\suppM{\baseB}{\at{S}} \phi \mand \psi$,\label{eq:mill-soundness-key-lemma-condition-1} 
    \item $\phi \mcomma \psi \suppM{\baseB}{\at{T}} \chi$,\label{eq:mill-soundness-key-lemma-condition-2}
\end{enumerate*}
then 
\begin{enumerate*}
    \setcounter{enumi}{2}
    \item $\suppM{\baseB}{\at{S} \, \mcomma \, \at{T}} \chi$. \label{eq:mill-soundness-key-lemma-conclusion}
\end{enumerate*}
\end{restatable}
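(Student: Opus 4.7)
The plan is to proceed by induction on the structure of $\chi$, as the sentence following the lemma statement hints. When $\chi$ is an atom $\at{p}$, the second hypothesis is exactly $\phi \mcomma \psi \suppM{\baseB}{\at{T}} \at{p}$, so instantiating clause \ref{eq:MILL-BeS-tensor} for $\suppM{\baseB}{\at{S}} \phi \mand \psi$ with $\baseX := \baseB$, $\at{U} := \at{T}$, and this very $\at{p}$ delivers $\suppM{\baseB}{\at{S} \mcomma \at{T}} \at{p}$ in one step.

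For each inductive case, the general strategy will be: unfold the clause for the main connective of $\chi$ together with \ref{eq:MILL-BeS-Inf} to reduce the goal $\suppM{\baseB}{\at{S} \mcomma \at{T}} \chi$ to showing, for an arbitrary extension $\baseX \baseGeq \baseB$ and atomic multiset $\at{U}$ validating the antecedent of that clause, a support claim about a strictly smaller formula $\theta$; then derive the auxiliary statement $\phi \mcomma \psi \suppM{\baseX}{\at{T} \mcomma \at{U}} \theta$ by unfolding the corresponding occurrence of $\chi$ in the second hypothesis $\phi \mcomma \psi \suppM{\baseB}{\at{T}} \chi$, invoking Proposition~\ref{lem:BeS-Mill-monotone-on-base} where needed to lift to $\baseX$; and finally apply either clause \ref{eq:MILL-BeS-tensor} (when $\theta$ is an atom) or the induction hypothesis (when $\theta$ is a proper sub-formula) to $\suppM{\baseX}{\at{S}} \phi \mand \psi$, lifted by monotonicity, together with this auxiliary statement. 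For $\chi = \chi_1 \mand \chi_2$ and $\chi = \mtop$ the target $\theta$ is forced to be an atom by the elimination-style antecedents of \ref{eq:MILL-BeS-tensor} and \ref{eq:MILL-BeS-I}, so the closing step is simply the atomic base case.

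The genuinely inductive case is $\chi = \chi_1 \mto \chi_2$. Unfolding \ref{eq:MILL-BeS-imply} and \ref{eq:MILL-BeS-Inf} gives an extension $\baseX \baseGeq \baseB$ and atomic multiset $\at{U}$ with $\suppM{\baseX}{\at{U}} \chi_1$, and I must produce $\suppM{\baseX}{\at{S} \mcomma \at{T} \mcomma \at{U}} \chi_2$. Feeding $\suppM{\baseX}{\at{U}} \chi_1$ into the unfolded version of $\phi \mcomma \psi \suppM{\baseB}{\at{T}} \chi_1 \mto \chi_2$ (after lifting to $\baseX$) yields the auxiliary statement $\phi \mcomma \psi \suppM{\baseX}{\at{T} \mcomma \at{U}} \chi_2$; the induction hypothesis, applied at the strictly smaller $\chi_2$ with resources $\at{S}$ and $\at{T} \mcomma \at{U}$, then closes the case.

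The main obstacle is not conceptual but combinatorial: clauses \ref{eq:MILL-BeS-tensor}, \ref{eq:MILL-BeS-I}, \ref{eq:MILL-BeS-imply}, and \ref{eq:MILL-BeS-Inf} each introduce nested universal quantifications over further extensions and further atomic multisets, so the argument has to thread those quantifiers carefully and repeatedly invoke Proposition~\ref{lem:BeS-Mill-monotone-on-base} to transport both hypotheses into the extension under consideration before combining them. The overall pattern mirrors Sandqvist's elimination-style treatment of disjunction in IPL alluded to after Theorem~\ref{thm:new}, and no new technique beyond this bookkeeping should be required.
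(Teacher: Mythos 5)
Your proposal is correct and follows essentially the same route as the paper's own proof: induction on $\chi$ with the base case given directly by clause \ref{eq:MILL-BeS-tensor}, each inductive case reduced via the elimination-style clause for the main connective to an auxiliary judgement $\phi \mcomma \psi \suppM{\baseX}{\at{T} \mcomma \at{U}} \theta$ extracted from hypothesis (2), and closure by the atomic case (for $\mand$, $\mtop$) or the induction hypothesis (for the $\mto$ case at $\chi_2$), with Proposition~\ref{lem:BeS-Mill-monotone-on-base} handling the base extensions. No gaps.
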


With these results, we may prove soundness:

\begin{proof}[Theorem~\ref{thm:MILL-BeS-soundness} --- sketch]
We demonstrate \ref{eq:mill-soundness-conjunction-intro-0} and \ref{eq:mill-soundness-conjunction-elim-0}. 
 
    \ref{eq:mill-soundness-conjunction-intro-0}. Assume $\varGamma \suppM{}{} \phi$ and $\varDelta \suppM{}{} \psi$. We require to show $\varGamma \mcomma \varDelta \suppM{}{} \phi \mand \psi$. By \eqref{eq:MILL-BeS-Inf}, the conclusion is equivalent to the following: for any base $\baseB$, for any multisets of atoms $\at{T}$ and $\at{S}$ , if $\suppM{\baseB}{T} \varGamma$ and $\suppM{\baseB}{S} \varDelta$, then $\suppM{\baseB}{T \, \mcomma \, S} \phi \mand \psi$. So we fix some $\baseB$ and $\at{T}$, $\at{S}$ such that $\suppM{\baseB}{T} \varGamma$ and $\suppM{\baseB}{S} \varDelta$, and show that $\suppM{\baseB}{ \at{T} \, \mcomma \, \at{S} } \phi \mand \psi$. By \eqref{eq:MILL-BeS-tensor}, it suffices to show, for arbitrary $\baseC \baseGeq \baseB$, multiset of atoms $\at{U}$, and atom $\at{p}$, if $\phi \mcomma \psi \suppM{\baseC}{\at{U}} \at{p}$, then $\suppM{\baseC}{ \at{T} \, \mcomma \, \at{S} \,\mcomma\, \at{U} } \at{p}$. So we fix some $\baseC \baseGeq \baseB$, multiset of atoms $\at{U}$, and atom $\at{p}$ such that $\phi \mcomma \psi \suppM{\baseC}{\at{U}} \at{p}$, and the goal is to show that $\suppM{\baseC}{ \at{T} \, \mcomma \, \at{S} \, \mcomma \, \at{U} } \at{p}$.  From the assumptions $\varGamma \suppM{}{} \varphi$ and $\varDelta \suppM{}{} \psi$, we see that $\suppM{\baseB}{\at{S} \, \mcomma \, \at{T}}  \phi \, \mcomma \, \psi $ obtains. Therefore, by monotonicity, $\suppM{\baseC}{\at{S} \, \mcomma \, \at{T}} \phi\mcomma\psi $ obtains. By \eqref{eq:MILL-BeS-Inf}, this suffices for $\phi \mcomma \psi \suppM{\baseC}{\at{U}} \at{p}$, to yield  $\suppM{\baseC}{T \, \mcomma \, S \, \mcomma \, \at{U}} \at{p}$, as required. 
    
    \ref{eq:mill-soundness-conjunction-elim-0}.  Assume $\varGamma \suppM{}{} \phi \mand \psi$ and $\varDelta \mcomma \phi \mcomma \psi \suppM{}{} \chi$. We require to show $\varGamma \mcomma \varDelta \suppM{}{} \chi$. By \eqref{eq:MILL-BeS-Inf}, it suffices to assume $\suppM{\baseB}{\at{S}} \varGamma$ and $\suppM{\baseB}{\at{T}} \varDelta$ and show that $\suppM{\baseB}{\at{S} \, \mcomma \, \at{T}} \chi$. First, $\varGamma \suppM{}{} \phi \mand \psi$ together with $\suppM{\baseB}{\at{S}} \varGamma$ entails that $\suppM{\baseB}{\at{S}} \phi \mand \psi$. Second, by \eqref{eq:MILL-BeS-Inf}, $\varDelta \mcomma \phi \mcomma \psi \suppM{}{} \chi$ is equivalent to the following:
    \begin{equation*}
    \label{eq:mill-soundness-conj-elim-proof-2-0}
        \text{for any } \baseX \text{ and } \at{P}, \at{Q}, \text{ if } \suppM{\baseX}{\at{P}} \varDelta \text{ and } \suppM{\baseX}{\at{Q}} \phi \mcomma \psi, \text{ then } \suppM{\baseX}{\at{P} \, \mcomma \, \at{Q}} \chi 
    \end{equation*}
    Since $\suppM{\baseB}{\at{T}} \varDelta$, setting $\at{P} := \at{T}$ and $\at{Q} := \at{S}$, yields,
    \begin{equation}
    \label{eq:mill-soundness-conj-elim-proof-1-0}
        \text{for any } \baseX \baseGeq \baseB, \text{ if } \suppM{\baseX}{\at{S}} \phi \mcomma \psi , \text{ then } \suppM{\baseX}{\at{T} \, \mcomma \, \at{S}} \chi 
    \end{equation}
    Now, given $\suppM{\baseB}{\at{S}} \phi \mand \psi$ and \eqref{eq:mill-soundness-conj-elim-proof-1-0}, we can apply Lemma~\ref{lem:mill-conj-key-lemma} and conclude $\suppM{\baseB}{\at{S} \, \mcomma \, \at{T}} \chi$. 
\end{proof}

\subsection{Completeness}
\label{subsec:mill-BeS-completeness}

\begin{restatable}[Completeness]{theorem}{thmMILLBeSCompleteness}
\label{thm:MILL-BeS-completeness}
    If $\varGamma \suppM{}{} \phi$, then $\varGamma \provesMILL \phi$. 
\end{restatable}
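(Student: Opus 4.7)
I will follow the template of Sandqvist's completeness argument for IPL (Section~\ref{subsec:IPL-BeS-completeness}), suitably resource-adapted. Starting from a judgement $\varGamma \suppM{}{} \phi$, I will build a bespoke base $\baseMill$ whose atomic rules simulate the natural deduction rules of \calculusMILL{} acting on a ``flattened'' copy $\rho^\flat$ of every subformula $\rho$ of $\varGamma \mcomma \phi$, prove three auxiliary lemmas (atomic completeness, flattening equivalence, and simulation), and then combine them.

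\textbf{Step 1 (the base $\baseMill$).} Fix an injection $(\cdot)^\flat$ from subformulas of $\varGamma \mcomma \phi$ to $\At$ that is the identity on atoms and otherwise picks an atom alien to $\varGamma$ and $\phi$; its partial left-inverse $(\cdot)^\natural$ is extended by the identity on atoms outside the codomain, and both are applied pointwise to multisets. The base $\baseMill$ contains, for each complex subformula $\rho$ of $\varGamma \mcomma \phi$ and each $\at{p} \in \At$, the atomic rules mirroring the \calculusMILL{} introduction and (generalized, atomic-conclusion) elimination rules. For instance, for $\rho = \sigma \mand \tau$:
\[
    (\emptyset \seq \sigma^\flat,\ \emptyset \seq \tau^\flat) \Rightarrow \rho^\flat
    \qquad
    (\emptyset \seq \rho^\flat,\ \sigma^\flat \mcomma \tau^\flat \seq \at{p}) \Rightarrow \at{p}
\]
and analogous pairs for $\mtop$ and $\mto$, each designed so that \ref{eq:derive-mill-app} implements exactly the corresponding sequential rule of Figure~\ref{fig:MILL-natural-deduction-sequent}.

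\textbf{Step 2 (three lemmas).} I will establish:
\begin{description}
    \item[\textbf{MILL-AtComp}] For any base $\baseB$, multiset $\at{S}$ of atoms and atom $\at{p}$, $\suppM{\baseB}{\at{S}} \at{p}$ iff $\at{S} \deriveBaseM{\baseB} \at{p}$. This is immediate from \eqref{eq:MILL-BeS-At}; together with \eqref{eq:MILL-BeS-Inf} and \eqref{eq:MILL-BeS-comma} it also yields the hypothetical atomic completeness $\at{S} \suppM{\baseB}{\emptyset} \at{p}$ iff $\at{S} \deriveBaseM{\baseB} \at{p}$ (using $\suppM{\baseMill}{\at{T}} \at{T}$ by \ref{eq:derive-mill-ref}).
    \item[\textbf{MILL-Flat}] For every subformula $\xi$ of $\varGamma \mcomma \phi$, every $\baseMill' \baseGeq \baseMill$, and every $\at{P}$: $\suppM{\baseMill'}{\at{P}} \xi^\flat$ iff $\suppM{\baseMill'}{\at{P}} \xi$. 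Proved by induction on $\xi$; the atomic case is trivial, and in each inductive case the intro-style atomic rule in $\baseMill$ converts support of the components into support of $\xi^\flat$, while the elim-style atomic rule together with the generalized cuts of Lemma~\ref{lem:mill-conj-key-lemma} and Lemma~\ref{lem:mill-mtop-key-lemma} converts support of $\xi^\flat$ into support of $\xi$.
    \item[\textbf{MILL-Nat}] If $\at{S} \deriveBaseM{\baseMill} \at{p}$, then $\deflatmill{\at{S}} \provesMILL \deflatmill{\at{p}}$. This is a straightforward induction on the derivation: each atomic rule of $\baseMill$ is by construction the $(\cdot)^\flat$-image of a \calculusMILL{} rule, so pulling back via $(\cdot)^\natural$ yields a \calculusMILL{}-proof.
\end{description}

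\textbf{Step 3 (assembling completeness).} Suppose $\varGamma \suppM{}{} \phi$; in particular $\varGamma \suppM{\baseMill}{\emptyset} \phi$. Using MILL-Flat pointwise on $\varGamma$ and on $\phi$, and \eqref{eq:MILL-BeS-comma}, this yields $\varGamma^\flat \suppM{\baseMill}{\emptyset} \phi^\flat$. Now $\varGamma^\flat$ is a multiset of atoms, and by \ref{eq:derive-mill-ref} plus \eqref{eq:MILL-BeS-At} and \eqref{eq:MILL-BeS-comma} we have $\suppM{\baseMill}{\varGamma^\flat} \varGamma^\flat$, so instantiating \eqref{eq:MILL-BeS-Inf} with $\baseX := \baseMill$ and $\at{U} := \varGamma^\flat$ gives $\suppM{\baseMill}{\varGamma^\flat} \phi^\flat$, i.e.\ $\varGamma^\flat \deriveBaseM{\baseMill} \phi^\flat$ by \eqref{eq:MILL-BeS-At}. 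Applying MILL-Nat and using that $(\cdot)^\natural$ inverts $(\cdot)^\flat$ on subformulas of $\varGamma \mcomma \phi$, we conclude $\varGamma \provesMILL \phi$.

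\textbf{Main obstacle.} The delicate step is MILL-Flat, specifically the $\mand$ case. The intro direction is uncomplicated --- from $\suppM{\baseMill'}{\at{P}_1} \sigma$ and $\suppM{\baseMill'}{\at{P}_2} \tau$ one uses the intro-style atomic rule for $\rho = \sigma \mand \tau$. The converse requires extracting support for $\sigma \mand \tau$ from support for $\rho^\flat$; unfolding \ref{eq:MILL-BeS-tensor}, one must show that whenever $\sigma \mcomma \tau \suppM{\baseC}{\at{U}} \at{p}$ in an extension $\baseC \baseGeq \baseMill'$, we have $\suppM{\baseC}{\at{P} \mcomma \at{U}} \at{p}$. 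Using the induction hypothesis to transport $\sigma \mcomma \tau \suppM{\baseC}{\at{U}} \at{p}$ to $\sigma^\flat \mcomma \tau^\flat \suppM{\baseC}{\at{U}} \at{p}$ (at the atomic level, via MILL-AtComp), and then firing the elimination-style atomic rule of $\baseMill$ for $\rho^\flat$, one closes the argument. Tracking the multiset of resources $\at{P}$ correctly through these base extensions and through the generalized cut lemmas is the technical core; the additive-style reasoning of IPL is no longer available and every splitting of resources must be justified by an explicit use of \eqref{eq:MILL-BeS-comma}.
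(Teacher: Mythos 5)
Your overall architecture --- bespoke base $\baseMill$ via flattening, the three lemmas (atomic completeness, flat equivalence, simulation), and the final assembly in Step 3 --- is exactly the paper's, and the assembly itself is correct. However, two of your justifications would fail as written. First, in the ``Main obstacle'' paragraph you describe the intro direction of MILL-Flat for $\mand$ as starting ``from $\suppM{\baseMill'}{\at{P}_1} \sigma$ and $\suppM{\baseMill'}{\at{P}_2} \tau$''. But \eqref{eq:MILL-BeS-tensor} is the elimination-form clause: $\suppM{\baseX}{\at{P}} \sigma \mand \tau$ does \emph{not} decompose into a resource split with each conjunct supported separately --- that is precisely the naive clause the paper rejects in Section~\ref{sec:introduction} (the $\varGamma = \makeMultiset{\phi \mand \psi}$ counterexample), and such a decomposition is not available at an arbitrary base. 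The correct move is to instantiate \eqref{eq:MILL-BeS-tensor} at the designated atom $\flatmill{(\sigma\mand\tau)}$ with $\at{U} := \emptymultiset$, verifying the premise $\sigma \mcomma \tau \suppM{\baseX}{\emptymultiset} \flatmill{(\sigma\mand\tau)}$ via the induction hypothesis, \eqref{eq:derive-mill-ref}, and the intro-style rule of $\baseMill$; this is how the paper's Lemma~\ref{lem:flat-derive-correct} and the $\mand$ case of \eqref{eq:mill-complete-flat-iff} proceed.

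Second, your MILL-AtComp is not ``immediate''. The direction from $\at{P} \mcomma \at{S} \deriveBaseM{\baseB} \at{q}$ to $\at{P} \suppM{\baseB}{\at{S}} \at{q}$ requires showing that atomic derivability is closed, across all base extensions, under substituting derivations of the $\at{p}_i$ for the corresponding assumptions; this is the paper's Proposition~\ref{lem:mill-atomic-cut}, proved by induction on the derivation in $\baseB$, and it cannot be read off from \eqref{eq:MILL-BeS-At}, \eqref{eq:MILL-BeS-comma} and \eqref{eq:MILL-BeS-Inf} together with \eqref{eq:derive-mill-ref} alone. You only need the easy direction in Step 3, which you correctly make explicit, but the hard direction is genuinely used inside MILL-Flat --- to pass between support statements such as $\flatmill{\sigma} \mcomma \flatmill{\tau} \suppM{\baseX}{\at{U}} \at{p}$ and the derivability statements to which the rules of $\baseMill$ can actually be applied --- so it must be stated and proved. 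Both gaps are repairable without changing your plan, but as stated neither step goes through.
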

The argument follows the strategy used by Sanqvist~\cite{Sandqvist2015base} for IPL --- see Section~\ref{subsec:IPL-BeS-completeness}. We explain the main steps.

Let $\Xi$ be the set of all sub-formulas of $\varGamma \cup \{\phi\}$. Let $\flatmill{(\cdot)} \colon \Xi \to \At$ be an injection that is fixed on $\Xi\cap\At$ --- that is, $\flatmill{\at{p}} = \at{p}$ for $\at{p}\in \Xi \cap \At$. Let  $\deflatmill{(\cdot)}$ be the left-inverse of $\flatmill{(\cdot)}$ --- that is $\deflatmill{\at{p}} = \chi$ if $\at{p}=\flatmill{\chi}$, and $\deflatmill{\at{p}} = \at{p}$ if $\at{p}$ is not in the image of $\flatmill{(\cdot)}$. Both act on multisets of formulas pointwise; that is, $\flatmill{\varDelta} \coloneqq \makeMultiset{\flatmill{\delta} \mid \delta \in \varDelta}$ and $\deflatmill{\at{P}} \coloneqq \makeMultiset{\deflatmill{\at{p}} \mid \at{p} \in \at{P}}$.

We construct a base $\baseMill$ such that $\flatmill{\phi}$ behaves in $\baseMill$ as $\phi$ behaves in $\calculusMILL{}$. The base $\baseMill$ contains all instances of the rules of Figure \ref{fig:baseM} when $\sigma$ and $\tau$ range over $\Xi$, 
and $\at{p}$ ranges over $\At$. We illustrate how $\baseMill$ works with an example. 
\begin{figure}[t]
\hrule
\vspace{1mm}
\[
\begin{array}{rl@{\qquad}rl}
        \flatmill{\irn \mto}: & ( \flatmill{\sigma} \seq \flatmill{\tau} ) \Rightarrow \flatmill{(\sigma \mto \tau)}
        & \flatmill{\ern \mto}: & \left( \seq \flatmill{(\sigma \mto \tau)} , \seq \flatmill{\sigma} \right) \Rightarrow \flatmill{\tau}\\[2mm]
       \flatmill{\irn \mand}: & \left( \seq \flatmill{\sigma}, \seq \flatmill{\tau} \right) \Rightarrow \flatmill{(\sigma \mand \tau)}
        & \flatmill{\ern \mand}: &
         \left( \seq \flatmill{ (\sigma \mand \tau) }, \flatmill{\sigma} \mcomma \flatmill{\tau} \seq \at{p} \right) \Rightarrow \at{p} 
        \\[2mm]
        \flatmill{\irn \mtop}: &  \Rightarrow \flatmill{\mtop}
        & \flatmill{\ern \mtop}: & \left( \seq \flatmill{\mtop}, \seq \at{p} \right) \Rightarrow \at{p} 
    \end{array} 
    \]
    \hrule
    \vspace{1mm}
    \caption{Atomic System $\baseMill$ \label{fig:baseM}}
    \vspace{-20pt}
    \end{figure}
    %

\begin{example}
\label{ex:MILL-flat-and-defalt}
Consider the sequent $\varGamma \seq \phi$ where $\varGamma = \makeMultiset{ \at{p}_1 \mcomma \at{p}_2 \mcomma \at{p}_1 \mand \at{p}_2 \mto \at{q}, \at{p_1} }$ and $\phi = \at{q} \mand \at{p_1} $. By definition, $\Xi := \{ \at{p}_1, \at{p}_2, \at{p}_1 \mand \at{p}_2 \mto \at{q}, \at{p}_1 \mand \at{p}_2, \at{q}, \at{q} \mand \at{p}_1 \}$, and, therefore, the image of $\flatmill{(\cdot)}$ is $ \{ \at{p}_1, \at{p}_2, \at{q}, \flatmill{( \at{p}_1 \mand \at{p}_2 \mto \at{q} )}, \flatmill{( \at{p}_1 \mand \at{p}_2 )}, \flatmill{( \at{q} \mand \at{p}_1 )} \}$.

That $\varGamma \provesMILL \phi$ obtains is witnessed by the following $\calculusMILL$-proof: 
\begin{equation*}
\label{eq:mill-flat-eg-proof-tree-1}
    \begin{prooftree}
    \hypo{ } 
    \infer1[\textsf{ax}]{ \at{p}_1 \seq \at{p}_1 }
    \hypo{ } 
    \infer1[\textsf{ax}]{ \at{p}_2 \seq \at{p}_2 }
    \infer2[$\mand_{\mathsf{I}}$]{ \at{p}_1 \mcomma \at{p}_2 \seq \at{p}_1 \mand \at{p}_2 } 
    \hypo{  } 
    \infer1[\textsf{ax}]{ \at{p}_1 \mand \at{p}_2 \mto \at{q} \seq \at{p}_1 \mand \at{p}_2 \mto \at{q} }
    \infer2[$\mto_{\mathsf{E}}$]{ \at{p}_1 \mcomma \at{p}_2 \mcomma \at{p}_1 \mand \at{p}_2 \mto \at{q} \seq \at{q} }
    \hypo{ }
    \infer1[\textsf{ax}]{ \at{p}_1 \seq \at{p}_1 } 
    \infer2[$\mand_{\mathsf{I}}$]{ \at{p}_1 \mcomma \at{p}_2 \mcomma \at{p}_1 \mand \at{p}_2 \mto \at{q} \mcomma \at{p}_1 \seq \at{q} \mand \at{p}_1 }
\end{prooftree}
\end{equation*}

The base $\baseMill$ is designed so that we may simulate the rules of $\calculusMILL$; for example, the $\ern \mand$ is simulated by using \eqref{eq:derive-mill-app} on $\ern{\mand}^\flat$, 
\[
\begin{array}{lcl}
( \emptymultiset \seq \flatmill{( \sigma \mand \tau )}, \flatmill{\sigma} \mcomma \flatmill{\tau} \seq \flatmill{\gamma}) \Rightarrow \flatmill{\gamma} & \mbox{ means } & 
\mbox{if $\flatmill{\varDelta} \deriveBaseM{\baseMill} \flatmill{( \sigma \mand \tau )}$ and $\flatmill{\Sigma} \mcomma \flatmill{\sigma} \mcomma \flatmill{\tau} \deriveBaseM{\baseMill} \flatmill{ \gamma }$} \\ & &
           \mbox{then $\flatmill{\varDelta} \mcomma \flatmill{\Sigma} \deriveBaseM{\baseMill} \flatmill{\gamma} $ }
\end{array}
\]
In this sense, the proof above is simulated by the following steps:
\begin{enumerate}
    \item[(i)] By \eqref{eq:derive-mill-ref}, (1) $\at{p}_1 \deriveBaseM{\baseMill} \at{p}_1$;
    (2) $\at{p}_2 \deriveBaseM{\baseMill} \at{p}_2$; 
    (3) $\flatmill{\left( \at{p}_1 \mand \at{p}_2 \mto \at{q} \right)} \deriveBaseM{\baseMill} \flatmill{\left( \at{p}_1 \mand \at{p}_2 \mto \at{q} \right)}$ 
    \item[(ii)] By \eqref{eq:derive-mill-app}, using $(\irn \mand)$ on (1) and (2), we obtain (4) $\at{p}_1 \mcomma \at{p}_2 \deriveBaseM{\baseMill} \flatmill{(\at{p}_1 \mand \at{p}_2)}$ 
    \item[(iii)] By \eqref{eq:derive-mill-app}, using $\flatmill{(\ern \mto)}$ on (3) and (4), we obtain (5) $\flatmill{\left( \at{p}_1 \mand \at{p}_2 \mto \at{q} \right)} \mcomma
    \at{p}_1 \mcomma \at{p}_2 \deriveBaseM{\baseMill} \at{q}$ 
    \item[(iv)] By \eqref{eq:derive-mill-app}, using $\flatmill{(\irn \mand)}$ on (1) and (5). we have $\flatmill{\left( \at{p}_1 \mand \at{p}_2 \mto \at{q} \right)} \mcomma
    \at{p}_1 \mcomma \at{p}_2 \mcomma \at{p}_1 \deriveBaseM{\baseMill} \flatmill{\left( \at{q} \mand \at{p}_1 \right)}$. 
\end{enumerate}
Significantly, steps (i)--(iv) are analogues of the steps in the proof tree above.
\end{example}

Theorem \ref{thm:MILL-BeS-completeness} (Completeness) follows from the following three observations, which are counterparts to \eqref{lem:Tor:basiccompleteness}, \eqref{lem:Tor:flatequivalence}, and \eqref{lem:Tor:sharpening} from Section~\ref{subsec:IPL-BeS-completeness}: 
\begin{description}
    \item[IMLL-AtComp]\labelandtag{eq:mill-atomic-sound-and-complete}{IMLL-AtComp} 
    For any $\baseB$, $\at{P}$, $\at{S}$, and $\at{q}$, $\at{P} \mcomma \at{S} \deriveBaseM{\baseB} \at{q}$ iff $\at{P} \suppM{\baseB}{\at{S}} \at{q}$. 
    \item[IMLL-Flat\label{eq:mill-complete-flat-iff-0}] For any $\xi \in \Xi$, $\baseX \baseGeq \baseMill$ and $\at{U}$, $\suppM{\baseX}{\at{U}} \flatmill{\xi}$ iff $\suppM{\baseX}{\at{U}} \xi$.  
    \item[IMLL-Nat\label{eq:mill-complete-deflat-0}] For any 
    $\at{P}$ and 
    $\at{q}$, if $\at{P} \deriveBaseM{\baseMill} \at{q}$ then $\deflatmill{\at{P}} \provesMILL \deflatmill{\at{q}}$.  
\end{description}

\eqref{eq:mill-atomic-sound-and-complete} follows from Proposition \ref{lem:mill-atomic-cut} and is the base case of completeness. \eqref{eq:mill-complete-flat-iff-0} formalizes the idea that every formula $\xi$ appearing in $\varGamma \seq \varphi$ behaves the same as $\flatmill{\xi}$ in any base extending $\baseMill$. Consequently, 
$\varGamma^\flat \suppTor{\baseMill} \phi^\flat$ iff $\varGamma \suppTor{\baseMill} \phi$. 
\eqref{eq:mill-complete-deflat-0} intuitively says that $\baseMill$ is a faithful atomic encoding of $\calculusMILL$, witnessed by $\deflatmill{(\cdot)}$. This together with \eqref{eq:mill-complete-flat-iff-0} guarantee that every $\xi \in \Xi$ behaves in $\baseMill$ as $\flatmill{\xi}$ in $\baseMill$, thus as $\deflatmill{\left( \flatmill{\xi} \right)} = \xi$ in $\calculusMILL$. 
%
\begin{proof}[Theorem~\ref{thm:MILL-BeS-completeness} --- Completeness] Assume $\varGamma \suppM{}{} \phi$ and let $\base{M}$ be the bespoke base for $\varGamma \seq \phi$. By \eqref{eq:mill-complete-flat-iff-0}, $\varGamma^\flat \suppM{\base{M}}{\emptyset} \phi^\flat$. Therefore, by \eqref{eq:mill-atomic-sound-and-complete}, we have $\flatmill{\varGamma} \proves{\baseMill} \flatmill{\phi}$. Finally, by \eqref{eq:mill-complete-deflat-0}, $\deflatmill{\left( \flatmill{\varGamma} \right)} \provesMILL \deflatmill{\left( \flatmill{\phi} \right)}$, namely $\varGamma \proves{} \phi$. 
\end{proof}

%
  
%

\section{Conclusion} \label{sec:conclusion}
Proof-theoretic semantics (P-tS) is the paradigm of meaning in logic based on proof, as opposed to truth. A particular form of P-tS is \emph{base-extension semantics} (B-eS) in which one defines the logical constants by means of a \emph{support} relation indexed by a base --- a system of natural deduction for atomic propositions --- which grounds the meaning of atoms by proof in that base.  This paper provides a sound and complete base-extension semantics for \emph{intuitionistic multiplicative linear logic} (\MILL{}). 

The B-eS for IPL given by Sandqvist \cite{Sandqvist2015base} provides a strategy for the problem. The paper begins with a brief but instructive analysis of this work that reveals \emph{definitional reflection} (DR) as an underlying principle delivering the semantics; accordingly, in Section \ref{subsec:revisit-BeS-IPL}, the paper modifies the B-eS for IPL to strictly adhere to DR and proves soundness and completeness of the result. Moreover, the analysis highlights that essential to B-eS is a transmission of proof-theoretic content: a formula $\phi$ is supported in a base $\baseB$ relative to a context $\varGamma$ iff, for any extension $\baseC$ of $\baseB$, the formula $\phi$ is supported in $\baseC$ whenever $\varGamma$ is supported in $\baseC$.

 With this understanding of B-eS of IPL, the paper gives a `resource-sensitive' adaptation by enriching the support relation to carry a multiset of atomic `resources' that enable the transmission of proof-theoretic content. This captures the celebrated `resource reading' of \MILL{} which is entirely proof-theoretic --- see Girard \cite{girard1995}. The clauses of the logical constants are then delivered by DR on their introduction rules. Having set up the B-eS for \MILL{} in this principled way, soundness and completeness follow symmetrically to the preceding treatment of IPL. 

To date, P-tS has largely been restricted to classical and intuitionistic propositional logics. This paper provides the first step toward a broader analysis. In particular, the analysis in this paper suggests a general methodology for delivering B-eS for other substructural logics such as, \emph{inter alia}, (intuitionistic) Linear Logic~\cite{girard1995} (LL) and the logic of Bunched Implications~\cite{o1999logic} (BI). While it is straightforward to add the additive connectives of LL, with the evident semantic clauses following IPL and with the evident additional cases in the proofs, it is less apparent how to handle the exponentials. For BI, the primary challenge is to appropriately account for the \emph{bunched} structure of contexts, and to enable and confine weakening and contraction to the additive context-former.  

Developing the P-tS for substructural logics is 
 valuable because of their deployment in the verification and modelling of systems. Significantly, P-tS has shown the be useful in simulation modelling --- see, for example, Kuorikoski and Reijula \cite{jaakko}. Of course, more generally, we may ask what conditions a logic must satisfy in order to provide a B-eS for it.

 \section*{Acknowledgements}
 We are grateful to Yll Buzoku, Diana Costa, Sonia Marin, and Elaine Pimentel for many discussion on the P-tS for substructural logics, and to Jonte Deakin for his careful reading of and feedback on an earlier draft of this article.  Similarly, we would like to thank the reviewers for their helpful comments and remarks.

\bibliographystyle{splncs04}
\bibliography{refs}

\appendix

\section{Omitted proofs from Section~\ref{subsec:revisit-BeS-IPL}} 
\label{app:IPL-Revisted}
The following contains proofs for the claims \ref{lem:Tor:monotonenew} and \ref{lem:Tor:key}  in the proof of Theorem~\ref{thm:new}.

\begin{lemma}[\ref{lem:Tor:monotonenew}]
    If $\Gamma \suppNew{\baseB} \phi$, then $\Gamma \suppNew{\baseC} \varphi$ for any $\baseC \supseteq \baseB$. 
\end{lemma}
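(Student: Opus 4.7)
The plan is to split according to whether $\varGamma$ is empty, since the definition of $\suppNew{\baseB}$ treats these cases differently: \ref{cl:tor:inf} applies only when $\varGamma \neq \emptyset$, while the empty-context case is governed by the per-connective clauses. The non-empty case is immediate; the empty case needs a short structural induction on $\phi$.

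If $\varGamma \neq \emptyset$, I would unfold $\varGamma \suppNew{\baseB} \phi$ via \ref{cl:tor:inf}: it states that for every $\baseX \supseteq \baseB$ with $\suppNew{\baseX} \psi$ for all $\psi \in \varGamma$, one has $\suppNew{\baseX} \phi$. Given $\baseC \supseteq \baseB$, every $\baseY \supseteq \baseC$ also satisfies $\baseY \supseteq \baseB$, so this property transfers verbatim to yield $\varGamma \suppNew{\baseC} \phi$.

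If $\varGamma = \emptyset$, I would proceed by induction on the structure of $\phi$. For $\phi = \at{p}$ atomic, unfold via \ref{cl:tor:at} to $\proves{\baseB} \at{p}$ and invoke monotonicity of atomic derivability under base extension, which is a routine consequence of Definition~\ref{def:derivability-base-IPL}. For $\phi = \phi_1 \to \phi_2$, unfold via $(\to)$ to $\phi_1 \suppNew{\baseB} \phi_2$ and appeal to the non-empty-context case already handled. For $\phi = \phi_1 \land \phi_2$ (using \ref{cl:tor:andnew}) and for $\phi = \phi_1 \lor \phi_2$ (using \ref{cl:tor:or}), both clauses are already universally quantified over base extensions, so the statement transfers from $\baseB$ to $\baseC$ immediately. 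For $\phi = \bot$, apply the atomic case to each $\at{p} \in \At$ to conclude $\suppNew{\baseC} \bot$.

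There is no real obstacle: the key observation is that every defining clause of $\suppNew{\baseB}$ is already phrased as a quantification over arbitrary base extensions, so monotonicity essentially comes for free. The only point requiring care is keeping the empty- and non-empty-context cases separate, which is dictated by the stipulation that \ref{cl:tor:inf} applies only when $\varGamma \neq \emptyset$; in the induction, the implication case is the one that crosses between these regimes, and it is precisely resolved by reducing back to \ref{cl:tor:inf}.
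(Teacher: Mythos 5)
Your proof is correct, and it rests on the same core observation as the paper's --- every defining clause of $\suppNew{\baseB}$ is a universal quantification over base extensions, and the extensions of $\baseC$ form a subset of the extensions of $\baseB$ --- but you decompose the argument differently, and more carefully. The paper's own proof consists solely of your non-empty-context step: it unfolds $\varGamma \suppNew{\baseC} \phi$ via \ref{cl:tor:inf} and notes that any $\baseD \baseGeq \baseC$ also satisfies $\baseD \baseGeq \baseB$. Read literally against Figure~\ref{fig:Sandqvist:support}, where \ref{cl:tor:inf} is stipulated only for $\varGamma \neq \emptyset$, that argument does not cover the categorical judgement $\suppNew{\baseB} \phi$ --- which is precisely the form in which the lemma is invoked in the proof of Theorem~\ref{thm:new} and in the cases of \ref{lem:Tor:key}. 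Your additional structural induction on $\phi$ for the empty-context case fills exactly this gap, and each of its cases is sound: atomic support reduces to monotonicity of $\proves{\baseB}$ under adding rules, the clauses \ref{cl:tor:andnew} and \ref{cl:tor:or} transfer because their quantifiers range over extensions, $\bot$ reduces to the atomic case, and the implication case correctly routes back through \ref{cl:tor:inf}. So your proof buys completeness of the case analysis at the cost of a short induction; the paper's buys brevity by leaving the empty-context case implicit.
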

\begin{proof}
    By \ref{cl:tor:inf}, the conclusion $\Gamma \suppNew{\baseC} \phi$ means: for every $\baseD \baseGeq \baseC$, if $\suppNew{\baseD} \gamma$ for every $\gamma \in \Gamma$, then $\suppNew{\baseD} \phi$.  
   Since $\baseD \baseGeq \baseC \baseGeq \baseB$, this follows by  \ref{cl:tor:inf} on the hypothesis $\Gamma \suppNew{\baseB} \phi$.
\end{proof}
\begin{lemma}[\ref{lem:Tor:key}] 
    If $\suppNew{\baseB} \phi \land \psi$ and $\phi , \psi \suppNew{\baseB} \chi$, then $\suppNew{\baseB} \chi$.
\end{lemma}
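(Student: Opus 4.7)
The plan is to prove the lemma by induction on the structure of $\chi$, with \ref{lem:Tor:monotonenew} used throughout to move hypotheses across base extensions. The underlying observation is that each of the defining clauses $(\land^\ast)$, $(\lor)$, $(\bot)$ already has the shape of a generalized elimination ending at atomic conclusions, so for every structural form of $\chi$ except implication the goal reduces directly to an atomic instance that can be dispatched by $(\land^\ast)$ applied to the first hypothesis.

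For the base case $\chi = \at{p} \in \At$, the lemma is a direct instance of $(\land^\ast)$ applied to $\suppNew{\baseB} \phi \land \psi$ with $\baseC := \baseB$ and $\at{p}$ as the target atom, using the second hypothesis $\phi,\psi \suppNew{\baseB} \at{p}$.

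For the cases $\chi = \theta_1 \land \theta_2$, $\chi = \theta_1 \lor \theta_2$, and $\chi = \bot$, I would unfold the defining clause of $\chi$ so that the goal becomes: given an arbitrary $\baseC \baseGeq \baseB$ and atom $\at{p}$ together with the atomic premises dictated by the clause (e.g.\ $\theta_1, \theta_2 \suppNew{\baseC} \at{p}$ in the conjunction case), derive $\suppNew{\baseC} \at{p}$. Using \ref{lem:Tor:monotonenew} to lift $\phi, \psi \suppNew{\baseB} \chi$ to $\baseC$ and then unfolding via \ref{cl:tor:inf} together with the defining clause of $\chi$, I obtain $\phi, \psi \suppNew{\baseC} \at{p}$. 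Applying $(\land^\ast)$ to the monotonically-lifted first hypothesis $\suppNew{\baseC} \phi \land \psi$ then yields $\suppNew{\baseC} \at{p}$ as required. Note no induction hypothesis is consumed in these cases.

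The inductive step $\chi = \theta_1 \to \theta_2$ is the one that actually uses IH. Unfolding the goal via $(\to)$ and \ref{cl:tor:inf} reduces it to: for every $\baseC \baseGeq \baseB$ with $\suppNew{\baseC} \theta_1$, show $\suppNew{\baseC} \theta_2$. By repeatedly unpacking \ref{cl:tor:inf} and $(\to)$ applied to the monotonically-lifted hypothesis $\phi, \psi \suppNew{\baseC} \theta_1 \to \theta_2$, and absorbing the assumption $\suppNew{\baseC} \theta_1$ via \ref{lem:Tor:monotonenew}, I extract $\phi, \psi \suppNew{\baseC} \theta_2$. Since $\theta_2$ is a strict sub-formula of $\chi$, the induction hypothesis applies to it with the monotonically-lifted premise $\suppNew{\baseC} \phi \land \psi$, delivering $\suppNew{\baseC} \theta_2$. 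The main obstacle is precisely this step: correctly unwinding the nested quantifications over bases in the $(\to)$ and \ref{cl:tor:inf} clauses, and justifying that the additional assumption $\suppNew{\baseC} \theta_1$ can be discharged so that the two premises feeding IH match the statement of the lemma. Everything else is a routine reading of the corresponding clause in Figure~\ref{fig:Sandqvist:support}.
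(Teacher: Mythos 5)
Your proposal is correct and follows essentially the same route as the paper's own proof: induction on the structure of $\chi$, with \ref{lem:Tor:monotonenew} used to lift both hypotheses across base extensions, the atomic case dispatched directly by \ref{cl:tor:andnew}, and the induction hypothesis genuinely consumed only in the implication case via the extracted judgement $\phi,\psi \suppNew{\baseC} \theta_2$. Your observation that the $\land$, $\lor$, and $\bot$ cases reduce to atomic instances without consuming the IH is accurate --- the paper phrases those steps as appeals to ``the IH'' but is in fact invoking only the already-established atomic case, so the two presentations coincide in substance.
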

\begin{proof}
       We proceed by induction on the structure of $\chi$:

    \begin{itemize}
        \item $\chi = \at{p} \in \At$. This follows immediately by expanding the hypotheses with \ref{cl:tor:and} and \ref{cl:tor:inf}, choosing the atom to be $\chi$.
        
        \item $\chi = \chi_1 \to \chi_2$. By \ref{cl:tor:to}, the conclusion is equivalent to $\sigma \suppNew{\baseB} \tau$. By \ref{cl:tor:inf}, this is equivalent to the following: for any $\baseC \baseGeq \baseB$, if $\suppNew{\baseC} \chi_1$, then $\suppNew{\baseC} \chi_2$. Therefore, fix an arbitrary $\baseC \baseGeq \baseB$ such that  $\suppNew{\baseC} \chi_1$.  By the induction hypothesis (IH), it suffices to show: (1) $\suppNew{\baseC} \phi \land \psi$ and (2) for any  $\baseD \baseGeq \baseC$, if $\suppNew{\baseD} \phi$ and $\suppNew{\baseD} \psi$,  then $\suppNew{\baseD} \chi_2$. By Lemma \ref{lem:Tor:monotonenew} on the first hypothesis we immediately get (1). For (2), fix an arbitrary base $\baseD \baseGeq \baseC$ such that $\suppNew{\baseD} \phi$, and $\suppNew{\baseD} \psi$. By the second hypothesis, we obtain $\suppNew{\baseD} \chi_1 \to \chi_2$ --- that is, $\chi_1 \suppNew{\baseD} \chi_2$. Hence, by \ref{cl:tor:inf} and \ref{lem:Tor:monotonenew} (since $\baseD \supseteq \baseB$) we have $\suppNew{\baseD} \chi_2$, as required.
        
        \item $\chi = \chi_1 \land \chi_2$. By \ref{cl:tor:andnew}, the conclusion is equivalent to the following: for any $\baseC \baseGeq \baseB$ and atomic $\at{p}$, if $\chi_1,\chi_2 \suppNew{\baseC} \at{p}$, then $\suppNew{\baseC} \at{p}$. Therefore, fix arbitrary $\baseC \baseGeq \baseB$ and $\at{p}$ such that $\chi_1,\chi_2 \suppNew{\baseC} \at{p}$. By \ref{cl:tor:inf}, for any $\baseD \baseGeq \baseC$, if $\suppNew{\baseD} \chi_1 \text{ and } \suppNew{\baseD} \chi_2$, then $\suppNew{\baseY} \at{p}$. We require to show $\suppNew{\baseC} \at{p}$. By the IH, it suffices to show the following: (1) $\suppNew{\baseC} \phi \land \psi$ and (2), for any $\baseE \baseGeq \baseC$, if $\suppNew{\baseE} \phi$  and $\suppNew{\baseE} \psi$, then $ \suppNew{\baseE} \at{p}$. Since $\baseB \baseLeq \baseC$, By Lemma \ref{lem:Tor:monotonenew} on the first hypothesis we immediately get (1). For (2), fix an arbitrary base $\baseE \baseGeq \baseC$ such that $\suppNew{\baseE} \phi$ and $\suppNew{\baseE} \psi$. By the second hypothesis, we obtain $\suppNew{\baseD} \at{p}$, as required.
        \item $\chi = \chi_1 \lor \chi_2$. By \ref{cl:tor:or}, the conclusion is equivalent to the following: for any $\baseC \baseGeq \baseB$ and atomic $\at{p}$, if $\chi_1 \suppNew{\baseC} \at{p}$ and $\chi_2 \suppNew{\baseC} \at{p}$, then $\suppNew{\baseC} \at{p}$. Therefore, fix an arbitrary base $\baseC \baseGeq \baseB$ and atomic $\at{p}$ such that $\chi_1 \suppNew{\baseC} \at{p}$ and $\chi_2 \suppNew{\baseC} \at{p}$. By the IH, it suffices to prove the following: (1) $\suppNew{\baseC} \phi \land \psi$ and $(2)$. for any $\baseD \baseGeq \baseC$, if $\suppNew{\baseD} \phi$ and $\suppNew{\baseD} \psi$, then $\suppNew{\baseD} \at{p}$. By Lemma \ref{lem:Tor:monotonenew} on the first hypothesis we immediately get (1). For (2), fix an arbitrary $\baseD \baseGeq \baseC$ such that $\suppNew{\baseD} \phi$ and $\suppNew{\baseD} \psi$. Since $\baseD \baseGeq \baseB$, we obtain $\suppNew{\baseD} \chi_1 \lor \chi_2$ by the second hypothesis. By \ref{cl:tor:or}, we obtain $\suppNew{\baseD} \at{p}$, as required.
        \item $\chi = \bot$. By \ref{cl:tor:bot}, the conclusion is equivalent to the following: $\suppNew{\baseB} \at{r}$ for all atomic $\at{r}$. By the IH, it suffices to prove the following: (1) $\suppNew{\baseB} \varphi \land \psi$ and  (2), for any $\baseC \baseGeq \baseB$, if $\suppNew{\baseC} \phi$ and $\suppNew{\baseC} \psi$, then $\suppNew{\baseC} \at{r}$. By the first hypothesis we have (1). For (2), fix an arbitrary $\baseC \baseGeq \baseB$ such that $\suppNew{\baseC} \phi$ and $\suppNew{\baseC} \psi$. By the second hypothesis, $\suppNew{\baseC} \bot$ obtains.  By \ref{cl:tor:bot}, we obtain $\suppNew{\baseC} \at{r}$, as required.
    \end{itemize}
    This completes the induction.
\end{proof}

\corSuppFormulaMeanInfAllAtoms*
\begin{proof}
    Let $\top$ be any formula such that $\suppTor{} \top$ --- for example, $\top := \at{p} \land (\at{p} \to \at{q}) \to \at{q}$. 

    We apply the two equivalent definitions of $\land$ to the neutrality of $\top$.  
    \begin{align*}
        \suppTor{\baseB} \varphi & \text{ iff } \suppTor{\baseB} \varphi \text{ and } \suppTor{\baseB} \top & \text{(def. of $\top$)} \\
        & \text{ iff } \suppTor{\baseB} \varphi \land \top & \text{\ref{cl:tor:and}} \\
        & \text{ iff for any } \baseX \baseGeq \baseB, \text{ for any } \at{p} \in \At, \varphi, \top \suppTor{\baseX} \at{p} \text{ implies } \suppTor{\baseB} \at{p} & \text{\ref{cl:tor:andnew}} \\
        & \text{ iff for any } \baseX \baseGeq \baseB, \text{ for any } \at{p} \in \At, \varphi \suppTor{\baseX} \at{p} \text{ implies } \suppTor{\baseB} \at{p} & \text{(def. of $\top$)}
    \end{align*}
    This establishes the desired equivalence. 
\end{proof}

\section{Omitted proofs from Section~\ref{subsec:MILL-atomic-system}} 
\label{app:MILL-support-in-base}

\begin{proposition}
    The supporting relation $\suppM{\baseB}{\at{S}}$ from Definition~\ref{def:MILL-BeS} is well-defined. 
\end{proposition}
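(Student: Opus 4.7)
The plan is to show that the clauses of Figure~\ref{fig:MILL-BeS-def} constitute a legitimate inductive definition by exhibiting a well-founded measure on support assertions and verifying that each clause defines support at a given level of the measure in terms of support at strictly lower levels together with derivability in a base, which has already been fixed by Definition~\ref{def:derivability-base-MILL}. The base $\baseB$ and the resource multiset $\at{S}$ play the role of parameters that are universally quantified in each clause, not induction variables.

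First I would handle support with a single formula on the right, that is, $\suppM{\baseB}{\at{P}} \phi$. Assign each formula a complexity $c(\phi) \in \mathbb{N}$ by $c(\at{p}) := 0$, $c(\mtop) := 1$, and $c(\phi \mand \psi) := c(\phi \mto \psi) := 1 + c(\phi) + c(\psi)$, and proceed by strong induction on $c(\phi)$. The clause \ref{eq:MILL-BeS-At} is the base case and bottoms out in $\deriveBaseM{\baseB}$. The clause \ref{eq:MILL-BeS-I} refers only to atomic support on either side of its embedded conditional, hence only to support at level $0$. The clause \ref{eq:MILL-BeS-imply}, via one application of \ref{eq:MILL-BeS-Inf}, refers to support of $\phi$ and of $\psi$, each strictly smaller than $\phi \mto \psi$. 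The clause \ref{eq:MILL-BeS-tensor}, after unfolding the relevant instances of \ref{eq:MILL-BeS-Inf} and \ref{eq:MILL-BeS-comma}, refers only to support of $\phi$, of $\psi$, and of atoms, all strictly smaller than $\phi \mand \psi$. Thus each inductive case expresses $\suppM{\baseB}{\at{P}} \phi$ at complexity $c(\phi)$ in terms of values at strictly lower complexity, and the induction goes through.

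Next, a secondary induction on multiset cardinality will extend the relation to arbitrary multisets via \ref{eq:MILL-BeS-comma}: for a multiset $\varGamma$ with $|\varGamma| \geq 2$, the clause defines $\suppM{\baseB}{\at{P}} \varGamma$ existentially in terms of $\suppM{\baseB}{\cdot} \varGamma_i$ for the pieces of a splitting $\varGamma = \varGamma_1 \mcomma \varGamma_2$ with both pieces non-empty, each being a multiset of strictly smaller cardinality. With this in hand, \ref{eq:MILL-BeS-Inf} defines $\varGamma \suppM{\baseB}{\at{P}} \phi$ for non-empty $\varGamma$ directly in terms of the multiset- and single-formula support already in hand.

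The one point that will need comment, and really the only possible obstacle, is that \ref{eq:MILL-BeS-comma} reads ``there exist $U, V$ with $P = U \mcomma V$ and \ldots'' and so might appear to depend on a particular decomposition of the left-hand multiset. This is not a genuine circularity: the clause is to be read as an existential ranging over \emph{all} decompositions of both the formula multiset and the resource multiset, and so the defined relation is independent of any particular choice of split. Granted this reading, the stratified definition sketched above assembles into a single well-defined relation $\suppM{\baseB}{\at{S}}$, as required.
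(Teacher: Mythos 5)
Your argument is essentially the paper's own: the paper likewise introduces a degree function on formulas (atoms of degree $1$, $\mtop$ of degree $2$, binary connectives summing the degrees of their components plus one) and observes that each of \eqref{eq:MILL-BeS-I}, \eqref{eq:MILL-BeS-tensor}, and \eqref{eq:MILL-BeS-imply} mentions only formulas of strictly smaller degree, with the atomic case grounded in $\deriveBaseM{\baseB}$. Your treatment is correct and merely more explicit about unfolding \eqref{eq:MILL-BeS-Inf} and \eqref{eq:MILL-BeS-comma} and about the existential reading of the splitting clause, which the paper leaves implicit.
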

\begin{proof}
    Basically we show that this is an inductive definition, by providing some metric. We follow the idea of Sandqvist, and notice that the extra layer of complexity given by the resource $\at{S}$ in $\suppM{\baseB}{\at{S}}$ does not impact the argument for well-definedness. 

    We define the \emph{degree} of \MILL{} formulas as follows: 
    \begin{align*}
        \degForm{ \at{p} } & \coloneqq 1 \\
        \degForm{ \mtop } & \coloneqq 2  \\
        \degForm{ \varphi \bullet \psi } & \coloneqq \degForm{ \varphi } + \degForm{ \psi } + 1, \text{ where } \bullet \in \{ \mand, \mto \}
    \end{align*}
    Note that for each of \eqref{eq:MILL-BeS-I}, \eqref{eq:MILL-BeS-tensor}, and \eqref{eq:MILL-BeS-imply}, the formulas appearing in the definitional clauses all have strictly smaller degrees than the formula itself, and the atomic case $\suppM{\baseB}{\at{S}}$ is defined by the derivability relation as $\at{S} \deriveBaseM{\baseB} \at{p}$. Therefore this is a valid inductive definition. 
\end{proof}

\MillBeSMonotonicity*
\begin{proof}
    Formally we prove by induction on $\suppM{}{}$ (see Definition~\ref{def:MILL-BeS}). 

    \begin{itemize}[label=--]
        \item For the base case, $\Gamma \suppM{\baseB}{\at{S}} \varphi$ is of the form $\suppM{\baseB}{\at{S}} \at{p}$ where $\at{p}$ is an atom. Then by definition this means $\at{S} \deriveBaseM{\baseB} \at{p}$. For arbitrary $\baseC$ that extends $\baseB$, $\at{S} \deriveBaseM{\baseC} \at{p}$ also holds simply because the derivability relation $\deriveBaseM{\baseX}$ is totally determined by the atomic rules in the base $\baseX$, and $\baseC \baseGeq \baseB$ means that every atomic ruls in $\baseB$ is also in $\baseC$. Then $\at{S} \deriveBaseM{\baseC} \at{p}$ says $\suppM{\baseC}{\at{S}} \at{p}$. 
        \item For the inductive cases \eqref{eq:MILL-BeS-tensor}, \eqref{eq:MILL-BeS-I}, \eqref{eq:MILL-BeS-imply} (expanded using \eqref{eq:MILL-BeS-Inf} and \eqref{eq:MILL-BeS-comma}), note that each uses a universal quantification over bases extending $\baseB$, namely `for every $\baseX \baseGeq \baseB$, ...'. Now for an arbitrary base $\baseC$ that extends $\baseB$, such universal quantified statement also holds by replacing the quantification with all bases extending $\baseC$, namely `for every $\baseX \baseGeq \baseC$, ...'. Therefore the inductive steps also pass. 
    \end{itemize}
    This completes the inductive proof. 
\end{proof}

\begin{corollary}
    $\Gamma \suppM{}{\at{S}} \varphi$ iff $\Gamma \suppM{\emptybase}{\at{S}} \varphi$. 
\end{corollary}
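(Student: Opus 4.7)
The plan is to prove the two directions separately, observing that this is essentially a direct consequence of Proposition~\ref{lem:BeS-Mill-monotone-on-base} (monotonicity of support under base extension) combined with the fact that every base is an extension of the empty base $\emptybase$.

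For the forward direction, assume $\varGamma \suppM{}{\at{S}} \varphi$. By Definition~\ref{def:MILL-BeS}, this means $\varGamma \suppM{\baseB}{\at{S}} \varphi$ holds for every base $\baseB$. Instantiating with $\baseB \coloneqq \emptybase$ gives $\varGamma \suppM{\emptybase}{\at{S}} \varphi$ immediately.

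For the reverse direction, assume $\varGamma \suppM{\emptybase}{\at{S}} \varphi$ and fix an arbitrary base $\baseB$. Since $\emptybase$ has no atomic rules, every base $\baseB$ satisfies $\baseB \baseGeq \emptybase$. By Proposition~\ref{lem:BeS-Mill-monotone-on-base} applied to the assumption, we conclude $\varGamma \suppM{\baseB}{\at{S}} \varphi$. Since $\baseB$ was arbitrary, $\varGamma \suppM{}{\at{S}} \varphi$ holds by Definition~\ref{def:MILL-BeS}.

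There is no substantial obstacle in this proof: the only nontrivial ingredient is monotonicity, which has already been established. The corollary simply records the practical consequence that, although validity is defined by quantifying over all bases, one need only verify support in the minimal base $\emptybase$.
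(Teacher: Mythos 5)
Your proof is correct and follows exactly the paper's own argument: the forward direction by instantiating the universal quantifier over bases at $\emptybase$, and the reverse direction by applying Proposition~\ref{lem:BeS-Mill-monotone-on-base} together with the observation that every base extends $\emptybase$. Nothing is missing.
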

\begin{proof}
    Recall that the definition that $\Gamma \suppM{}{\at{S}} \varphi$ means the following: for any base $\baseB$, $\Gamma \suppM{\baseB}{\at{S}} \varphi$ holds.

    For the `only if' direction, note that $\Gamma \suppM{}{\at{S}} \varphi$ implies that in particular, $\Gamma \suppM{\emptybase}{\at{S}} \varphi$ holds. 

    For the `if' direction, suppose $\Gamma \suppM{\empty}{\at{S}} \varphi$ holds. Then for arbitrary $\baseB$, since $\baseB \baseGeq \emptybase$ holds, we can apply Proposition~\ref{lem:BeS-Mill-monotone-on-base} and conclude that $\Gamma \suppM{\baseB}{\at{S}} \varphi$ also holds. Since this is true for arbitrary base $\baseB$, we have $\Gamma \suppM{}{\at{S}} \varphi$. 
\end{proof}

\lemMILLAtomicCut*
\begin{proof}
It is straightforward to see that \eqref{eq:mill-atomic-cut-2} entails \eqref{eq:mill-atomic-cut-1}: we take $\baseX$ to be $\baseB$, and $\at{T}_i$ to be $\makeMultiset{\at{p}_i}$ for each $i = 1, \dots, n$. Since $\at{p}_1 \deriveBaseM{\baseB} \at{p}_1, \dots, \at{p}_n \deriveBaseM{\baseB} \at{p}_n$ all hold by \eqref{eq:derive-mill-ref}, it follows from \eqref{eq:mill-atomic-cut-2} that $\at{p}_1 \mcomma \dots \mcomma \at{p}_n \mcomma \at{S} \deriveBaseM{\baseB} \at{q}$, namely $\at{P} \mcomma \at{S} \deriveBaseM{\baseB} \at{q}$. 

    As for \eqref{eq:mill-atomic-cut-1} entails \eqref{eq:mill-atomic-cut-2}, we prove by induction on how $\at{P} \mcomma \at{S} \deriveBaseM{\baseB} \at{q}$ is derived (see Definition~\ref{def:derivability-base-MILL}). 
    \begin{itemize}[label={--}]
        \item $\at{P} \mcomma \at{S} \deriveBaseM{\baseB} \at{q}$ holds by \eqref{eq:derive-mill-ref}. That is, $\at{P} \mcomma \at{S} = \makeMultiset{\at{q}}$, and $\at{q} \deriveBaseM{\baseB} \at{q}$ follows by \eqref{eq:derive-mill-ref}. Here are two subcases, depending on which of $\at{P}$ and $\at{S}$ is $\makeMultiset{\at{q}}$. 
        \begin{itemize}[label={-}]
            \item Suppose $\at{P} = \makeMultiset{\at{q}}$ and $\at{S} = \emptymultiset$. So \eqref{eq:mill-atomic-cut-2} becomes: for every $\baseX \baseGeq \baseB$ and $\at{T}$, if $\at{T} \deriveBaseM{\baseX} \at{q}$, then $\at{T} \deriveBaseM{\baseX} \at{q}$. This holds a fortiori. 
            \item Suppose $\at{S} = \makeMultiset{\at{q}}$ and $\at{P} = \emptymultiset$. Since $\at{P} = \emptymultiset$, \eqref{eq:mill-atomic-cut-2} becomes: for every $\baseX \baseGeq \baseB$, $\at{S} \deriveBaseM{\baseX} \at{q}$. This holds by \eqref{eq:derive-mill-ref}. 
        \end{itemize}
        \item $\at{P}, \at{S} \deriveBaseM{\baseB} \at{q}$ holds by \eqref{eq:derive-mill-app}. We assume that $\at{P} = \at{P}_1 \mcomma \dots \mcomma \at{P}_k$, $\at{S} = \at{S}_1 \mcomma \dots \mcomma \at{S}_k$, and the following hold for some $\at{Q}_1, \dots, \at{Q}_k$ and $\at{r}_1, \dots, \at{r}_k$: 
        \begin{align*}
            & \at{P}_1 \mcomma \at{S}_1 \mcomma \at{Q}_1 \deriveBaseM{\baseB} \at{r}_1, \dots, \at{P}_k \mcomma \at{S}_k \mcomma \at{Q}_k \deriveBaseM{\baseB} \at{r}_k \hfill \tag{\ref{eq:mill-atomic-cut-proof-1}} \\
            & \left( \at{Q}_1 \seq \at{r}_1, \dots, \at{Q}_k \seq \at{r}_k \right) \Rightarrow \at{q} \text{ is in } \baseB \tag{\ref{eq:mill-atomic-cut-proof-2}}
        \end{align*}
        \labelandtag{eq:mill-atomic-cut-proof-1}{3}
        \labelandtag{eq:mill-atomic-cut-proof-2}{4}
        In order to prove \eqref{eq:mill-atomic-cut-2}, we fix some arbitrary base $\baseC \baseGeq \baseB$ and atomic multisets $\at{T}_1, \dots, \at{T}_n$ such that $\at{T}_1 \deriveBaseM{\baseC} \at{p}_1, \dots, \at{T}_n \deriveBaseM{\baseC} \at{p}_n$, and show $\at{T}_1 \mcomma \dots \mcomma \at{T}_n \mcomma \at{S} \deriveBaseM{\baseC} \at{q}$.  Let us assume $\at{P}_i = \at{p}_{i1} \mcomma \dots \mcomma \at{p}_{i \ell_{i}}$ for each $i = 1, \dots, k$. We apply IH to every $\at{P}_i \mcomma \at{S}_i \mcomma \at{Q}_i \deriveBaseM{\baseB} \at{r}_i$ from \eqref{eq:mill-atomic-cut-proof-1}, and get $\at{T}_{i1} \mcomma \dots \mcomma \at{T}_{i\ell_{i}} \mcomma \at{S}_i \mcomma \at{Q}_i \deriveBaseM{\baseC} \at{r}_i$. Moreover, the atomic rule from \eqref{eq:mill-atomic-cut-proof-2} is also in $\baseC$, since $\baseC \baseGeq \baseB$. Therefore we can apply \eqref{eq:derive-mill-app} and get 
        \[\at{T}_{11} \mcomma \dots \mcomma \at{T}_{1\ell_{1}} \mcomma \at{S}_1 \mcomma \dots \mcomma \at{T}_{k1} \mcomma \dots \mcomma \at{T}_{k\ell_{k}} \mcomma \at{S}_k \deriveBaseM{\baseC} \at{q}. 
         \] 
        By the definition of $\at{S}_i$ and $\at{T}_{i j}$, this is precisely $\at{T}_1 \mcomma \dots \mcomma \at{T}_n \mcomma \at{S} \deriveBaseM{\baseC} \at{q}$. 
    \end{itemize}
    This completes the inductive proof. 
\end{proof}

\section{Proof of Soundness}
\label{app:MILL-BeS-soundness}
This appendix is devoted to the detailed proof of the soundness (Theorem~\ref{thm:MILL-BeS-soundness}) of the base-extension semantics for \MILL{}. 
\thmMILLBeSSoundness*
\begin{proof}
    Recall that $\Gamma \suppM{}{} \phi$ is abbreviation of $\Gamma \suppM{\emptybase}{\emptymultiset} \phi$. By the inductive definition of $\provesMILL$, it suffices to prove the following: 
    \begin{description}
        \item[Ax\label{eq:mill-soundness-axiom}] $\phi \suppM{}{} \phi$ 
        \item[$\mto$I\label{eq:mill-soundness-implication-intro}] If $\Gamma \mcomma \phi \suppM{}{} \psi$, then $\Gamma \suppM{}{} \phi \mto \psi$. 
        \item[$\mto$E\label{eq:mill-soundness-implication-elim}] If $\Gamma \suppM{}{} \phi \mto \psi$ and $\Delta \suppM{}{} \phi$, then $\Gamma \mcomma \Delta \suppM{}{} \psi$. 
        \item[$\mand$I\label{eq:mill-soundness-conjunction-intro}] If $\Gamma \suppM{}{} \phi$ and $\Delta \suppM{}{} \psi$, then $\Gamma \mcomma \Delta \suppM{}{} \phi \mand \psi$. 
        \item[$\mand$E\label{eq:mill-soundness-conjunction-elim}] If $\Gamma \suppM{}{} \phi \mand \psi$ and $\Delta \mcomma \phi \mcomma \psi \suppM{}{} \chi$, then $\Gamma \mcomma \Delta \suppM{}{} \chi$. 
        \item[$\mtop$I\label{eq:mill-soundness-top-intro}] $\suppM{}{} \mtop$
        \item[$\mtop$E\label{eq:mill-soundness-top-elim}] If $\Gamma \suppM{}{} \chi$ and $\Delta \suppM{}{} \mtop$, then $\Gamma \mcomma \Delta \suppM{}{} \chi$. 
    \end{description}
    Now we prove them one by one. We assume that $\Gamma = \gamma_1 \mcomma \dots \mcomma \gamma_m$ and $\Delta = \delta_1 \mcomma \dots \mcomma \delta_n$ in all the above equations to be checked. 
    \begin{itemize}[label={-}]
        \item \eqref{eq:mill-soundness-axiom} holds a fortiori by definition of the validity relation $\suppM{}{}$: by \eqref{eq:MILL-BeS-Inf}, $\phi \suppM{}{} \phi$ means that for every base $\baseX$, if $\suppM{\baseX}{} \phi$, then $\suppM{\baseX}{}$. 

        \item \eqref{eq:mill-soundness-implication-intro}. Assume $\Gamma, \phi \suppM{}{} \psi$, we show $\Gamma \suppM{}{} \phi \mto \psi$. By \eqref{eq:MILL-BeS-Inf}, the assumption $\Gamma \mcomma \phi \suppM{}{} \psi$ boils down to the following: 
        \begin{equation}\label{eq:mill-soundness-implication-intro-proof-1}
        \begin{split}
            \text{For all base } \baseX & \text{ and multiset of atoms } \at{P}, \text{ if there exists } \at{S}_1, \dots, \at{S}_m, \at{T} \\
            & \text{ satisfying } \at{P} = \at{S}_1 \mcomma \dots \mcomma \at{S}_m \mcomma \at{T},  \\
            & \text{ such that } \suppM{\baseX}{\at{S}_1} \gamma_1, \dots, \suppM{\baseX}{\at{S}_m} \gamma_m, \suppM{\baseX}{\at{T}} \phi, \text{ then } \suppM{\baseX}{\at{P}} \psi. 
        \end{split}
        \end{equation}
        In order to show that $\Gamma \suppM{\baseB}{\at{P}} \psi$, we fix an arbitrary base $\baseB$ and multiset of atoms $\at{P}$ satisfying that there exists $\at{P}_1, \dots, \at{P}_m$ such that $\at{P} = \at{P}_1 \mcomma \dots \mcomma \at{P}_m$, and $\suppM{\baseB}{\at{P}_1} \gamma_1, \dots, \suppM{\baseB}{\at{P}_m} \gamma_m$. The goal is show that $\suppM{\baseB}{\at{P}} \phi \mto \psi$. 
        By \eqref{eq:MILL-BeS-imply}, $\suppM{\baseB}{\at{P}} \phi \mto \psi$ means $\phi \suppM{\baseB}{\at{P}} \psi$. To show $\phi \suppM{\baseB}{\at{P}} \psi$, we fix an arbitrary $\baseC \baseGeq \baseB$ and multiset $\at{Q}$ such that $\suppM{\baseC}{\at{Q}} \phi$, and prove that $\suppM{\baseC}{\at{P}, \at{Q}} \psi$. By monotonicity of $\suppM{}{}$ with respect to the base, $\suppM{\baseB}{\at{P}_i} \gamma_i$ implies $\suppM{\baseC}{\at{P}_i} \gamma_i$, for $i = 1, \dots, m$. Apply \eqref{eq:mill-soundness-implication-intro-proof-1} to this together with $\suppM{\baseC}{\at{Q}} \phi$, it follows that $\suppM{\baseC}{\at{P}, \at{Q}} \psi$. 

        \item \eqref{eq:mill-soundness-implication-elim}. Assume $\Gamma \suppM{}{} \phi \mto \psi$ and $\Delta \suppM{}{} \phi$, we show $\Gamma \mcomma \Delta \suppM{}{} \psi$. Spelling out the definition of $\Gamma \suppM{}{} \phi \mto \psi$ and $\Delta \suppM{}{} \phi$ using \eqref{eq:MILL-BeS-Inf}, we have: 
        \begin{equation}
        \label{eq:mill-soundness-implication-elim-proof-1}
            \begin{split}
                \text{For every base } \baseX & \text{ and atomic multisets } \at{P} = \at{P}_1 \mcomma \dots \mcomma \at{P}_m, \\
                & \text{ if } \suppM{\baseX}{\at{P}_1} \gamma_1, \dots, \suppM{\baseX}{\at{P}_m} \gamma_m, \text{ then } \suppM{\baseX}{\at{P}} \phi \mto \psi.      
            \end{split}
        \end{equation}
        \begin{equation}
        \label{eq:mill-soundness-implication-elim-proof-2}
            \begin{split}
                \text{For every base } \baseY & \text{ and atomic multisets } \at{Q} = \at{Q}_1 \mcomma \dots \mcomma \at{Q}_{n}, \\
                & \text{ if } \suppM{\baseY}{\at{Q}_1} \delta_1, \dots, \suppM{\baseY}{\at{Q}_n} \delta_n, \text{ then } \suppM{\baseY}{\at{Q}} \phi. 
            \end{split}
        \end{equation}
        In order to show $\Gamma \mcomma \Delta \suppM{}{} \psi$, we fix an arbitrary base $\baseB$, atomic multisets $\at{S} = \at{S}_1 \mcomma \dots \mcomma \at{S}_m$ and $\at{T} = \at{T}_1 \mcomma \dots \mcomma \at{T}_n$, such that $\suppM{\baseB}{\at{S}_1} \gamma_1, \dots, \suppM{\baseB}{\at{S}_m} \gamma_m$ and $\suppM{\baseB}{\at{T}_1} \delta_1, \dots, \suppM{\baseB}{\at{T}_n} \delta_n$, and go on to prove that $\suppM{\baseB}{\at{S} \mmcomma \at{T}} \psi$. Using \eqref{eq:mill-soundness-implication-elim-proof-1}, $\suppM{\baseB}{\at{S}_1} \gamma_1, \dots, \suppM{\baseB}{\at{S}_m} \gamma_m$ implies that $\suppM{\baseB}{\at{S}} \phi \mto \psi$; using \eqref{eq:mill-soundness-implication-elim-proof-2}, $\suppM{\baseB}{\at{T}_1} \delta_1, \dots, \suppM{\baseB}{\at{T}_n} \delta_n$ implies that $\suppM{\baseB}{\at{T}} \phi$. Spelling out the definition of $\suppM{\baseB}{\at{S}} \phi \mto \psi$, we know that for arbitrary base $\baseX \baseGeq \baseB$ and atomic multiste $\at{U}$, if $\suppM{\baseX}{\at{U}} \phi$, then $\suppM{\baseX}{\at{S} \mcomma \at{U}} \psi$. In particular, since $\suppM{\baseB}{\at{T}} \phi$, we have $\suppM{\baseB}{\at{S} \mmcomma \at{T}} \psi$. 
        \item \eqref{eq:mill-soundness-conjunction-intro}. We assume $\Gamma \suppM{}{} \phi$ and $\Delta \suppM{}{} \psi$, and show that $\Gamma \mcomma \Delta \suppM{}{} \phi \mand \psi$ holds. Spelling out the definition of $\Gamma \mcomma \Delta \suppM{}{} \phi \mand \psi$, it suffices to fix some base $\baseB$ and atomic multiset $\at{S}_1, \dots, \at{S}_m, \at{T}_1, \dots, \at{T}_n$ (denote $\at{S} = \at{S}_1 \mcomma \dots \mcomma \at{S}_m$, and $\at{T} = \at{T}_1 \mcomma \dots \mcomma \at{T}_n$) such that $\suppM{\baseB}{\at{S}_1} \gamma_1, \dots, \suppM{\baseB}{\at{S}_m} \gamma_m, \suppM{\baseB}{\at{T}_1} \delta_1, \dots, \suppM{\baseB}{\at{T}_n} \delta_n$, and show that $\suppM{\baseB}{\at{S}, \at{T}} \phi \mand \psi$ follows. According to \eqref{eq:MILL-BeS-tensor}, we can simply fix some base $\baseC \baseGeq \baseB$, atomic multiset $\at{U}$, atom $\at{p}$ satisfying $\phi \mcomma \psi \suppM{\baseC}{\at{U}} \at{p}$, and show $\suppM{\baseC}{\at{S} \mmcomma \at{T} \mmcomma \at{U}} \at{p}$. 
        Note that $\suppM{\baseB}{\at{S} \mmcomma \at{T}} \makeMultiset{ \phi, \psi }$ holds: $\suppM{\baseB}{\at{S}} \Gamma$ and $\Gamma \suppM{}{} \phi$ implies $\suppM{\baseB}{\at{S}} \phi$; $\suppM{\baseB}{\at{T}} \Delta$ implies $\suppM{\baseB}{\at{S}} \psi$. By monotonicity, $\suppM{\baseC}{\at{S} \mmcomma \at{T}} \makeMultiset{ \phi, \psi }$. This together with $\phi \mcomma \psi \suppM{\baseC}{\at{U}} \at{p}$ entail that $\suppM{\baseC}{\at{S} \mmcomma \at{T} \mmcomma \at{U}} \at{p}$. 
        \item \eqref{eq:mill-soundness-conjunction-elim}. We use Lemma~\ref{lem:mill-conj-key-lemma}. Suppose $\Gamma \suppM{}{} \phi \mand \psi$ and $\Delta \mcomma \phi \mcomma \psi \suppM{}{} \chi$, and we show that $\Gamma \mcomma \Delta \suppM{}{} \chi$. So let us suppose that $\suppM{\baseB}{\at{S}} \Gamma$ and $\suppM{\baseB}{\at{T}} \Delta$ (thus $\suppM{\baseB}{\at{S} \mmcomma \at{T}} \Gamma \mcomma \Delta$), and show that $\suppM{\baseB}{\at{S} \mmcomma \at{T}} \chi$. First, $\Gamma \suppM{}{} \phi \mand \psi$ together with $\suppM{\baseB}{\at{S}} \Gamma$ entails that $\suppM{\baseB}{\at{S}} \phi \mand \psi$. Second, spelling out the definition of $\Delta \mcomma \phi \mcomma \psi \suppM{}{} \chi$, we have: 
        \begin{equation}
        \label{eq:mill-soundness-conj-elim-proof-2}
            \begin{split}
                \text{For every base } \baseX & \text{ and atomic multisets } \at{P}, \at{Q}, \\
                & \text{ if } \suppM{\baseX}{\at{P}} \Delta \text{ and } \suppM{\baseX}{\at{Q}} \makeMultiset{ \phi, \psi }, \text{ then } \suppM{\baseX}{\at{P} \mmcomma \at{Q}} \chi. 
            \end{split}
        \end{equation}
        Under the assumption $\suppM{\baseB}{\at{T}} \Delta$, by fixing $\at{P}$ and $\at{Q}$ to be $\at{T}$ and $\at{S}$ respectively, \eqref{eq:mill-soundness-conj-elim-proof-2} implies the following: 
        \begin{equation}
        \label{eq:mill-soundness-conj-elim-proof-1}
        \text{For every base } \baseX \baseGeq \baseB, \text{ if } \suppM{\baseX}{\at{S}} \makeMultiset{ \phi, \psi }, \text{ then } \suppM{\baseX}{\at{T} \mmcomma \at{S}} \chi.  
        \end{equation}
        Now, given $\suppM{\baseB}{\at{S}} \phi \mand \psi$ and \eqref{eq:mill-soundness-conj-elim-proof-1}, we can apply Lemma~\ref{lem:mill-conj-key-lemma} and conclude that $\suppM{\baseB}{\at{S} \mmcomma \at{T}} \chi$. 
        \item \eqref{eq:mill-soundness-top-intro}. By \eqref{eq:MILL-BeS-I}, $\suppM{}{} \mtop$ is equivalent to that for every base $\baseX$, atomic multiset $\at{U}$, and atom $\at{q}$, if $\suppM{\baseX}{\at{U}} \at{q}$, then $\suppM{\baseX}{\at{U}} \at{q}$. This is true a fortiori. 
        \item \eqref{eq:mill-soundness-top-elim}. We assume $\Gamma \suppM{}{} \chi$ and $\Delta \suppM{}{} \mtop$, and show that $\Gamma \mcomma \Delta \suppM{}{} \chi$. Towards this, we fix some base $\baseB$ and atomic multisets $\at{S}, \at{T}$ such that $\suppM{\baseB}{\at{S}} \Gamma$ and $\suppM{\baseB}{\at{T}} \Delta$, and show that $\suppM{\baseB}{\at{S} \mmcomma \at{T}} \chi$. 
        By $\Delta \suppM{}{} \mtop$ and $\suppM{\baseB}{\at{T}} \Delta$, we know that $\suppM{\baseB}{\at{T}} \mtop$. 
        By $\Gamma \suppM{}{} \chi$ and $\suppM{\baseB}{ \at{S}} \Gamma$, we have $\suppM{\baseB}{\at{S}} \chi$. 
        Now apply Lemma~\ref{lem:mill-mtop-key-lemma} to $\suppM{\baseB}{\at{T}} \mtop$ and $\suppM{\baseB}{\at{S}} \chi$, we conclude that $\suppM{\baseB}{\at{S} \mmcomma \at{T}} \chi$. 
    \end{itemize}
    This completes the verification of all items. 
\end{proof}

\lemMILLConjKeyLemma*
\begin{proof}
    We prove by induction on the structure of $\chi$. 
    The condition \eqref{eq:mill-soundness-key-lemma-condition-2} can be spelled out as: for every $\baseX \baseGeq \baseB$ and $\at{U}$, if $\suppM{\baseX}{\at{U}} \makeMultiset{ \phi, \psi }$, then $\suppM{\baseX}{\at{U} \mmcomma \at{T}} \chi$. 
    \begin{itemize}[label={-}]
        \item When $\chi$ is an atom, the statement of the lemma follows immediately from \eqref{eq:MILL-BeS-tensor}. 
        \item $\chi = \mtop$. By \eqref{eq:MILL-BeS-I}, \eqref{eq:mill-soundness-key-lemma-conclusion} amounts to that, for every $\baseX \baseGeq \baseB$, atomic multiset $\at{U}$, atom $\at{p}$, if $\suppM{\baseX}{\at{U}} \at{p}$, then $\suppM{\baseX}{\at{S} \mmcomma \at{T} \mmcomma \at{U}} \at{p}$. So we fix some base $\baseC \baseGeq \baseB$, atomic multiset $\at{Q}$, and atom $\at{q}$, such that $\suppM{\baseC}{\at{Q}} \at{q}$. The goal is to show $\suppM{\baseC}{\at{S} \mmcomma \at{T} \mmcomma \at{Q}} \at{q}$. 
        According to the atomic case, this follows from the following two facts: 
        \begin{align}
            & \suppM{\baseC}{\at{S}} \phi \mand \psi \label{eq:mill-soundness-key-lemma-proof-7} \\
            & \phi \mcomma \psi \suppM{\baseC}{\at{T} \mmcomma \at{Q}} \at{q} \label{eq:mill-soundness-key-lemma-proof-8}
        \end{align}
        Here \eqref{eq:mill-soundness-key-lemma-proof-7} follows immediately from \eqref{eq:mill-soundness-key-lemma-condition-1} and $\baseC \baseGeq \baseB$, so it suffices to prove \eqref{eq:mill-soundness-key-lemma-proof-8}. For this, we fix some base $\baseD \baseGeq \baseC$, atomic multiset $\at{R}_1, \at{R}_2$ such that $\suppM{\baseD}{\at{R}_1} \phi$ and $\suppM{\baseD}{\at{R}_2} \psi$ hold, and show that $\suppM{\baseD}{\at{T} \mmcomma \at{Q} \mmcomma \at{R}_1 \mmcomma \at{R}_2} \at{q}$. Note that \eqref{eq:mill-soundness-key-lemma-condition-2} now becomes $\phi \mcomma \psi \suppM{\baseB}{\at{T}} \mtop$. So together with $\suppM{\baseD}{\at{R}_1} \phi$ and $\suppM{\baseD}{\at{R}_2} \psi$, it follows that $\suppM{\baseD}{\at{T} \mmcomma \at{R}_1 \mmcomma \at{R}_2} \mtop$. This according to \eqref{eq:MILL-BeS-I} says that for every $\baseX \baseGeq \baseD$, atomic multiset $\at{U}$, and atom $\at{p}$, $\suppM{\baseX}{\at{U}} \at{p}$ implies $\suppM{\baseX}{\at{T} \mmcomma \at{R}_1 \mmcomma \at{R}_2 \mmcomma \at{U}} \at{p}$. In particular, since $\suppM{\baseD}{\at{Q}} \at{q}$ (which is immediately consequence of $\suppM{\baseC}{\at{Q}} \at{q}$ and $\baseD \baseGeq \baseC$), it follows that $\suppM{\baseD}{\at{T} \mmcomma \at{R}_1 \mmcomma \at{R}_2 \mmcomma \at{Q}} \at{q}$. 
        \item $\chi = \sigma \mto \tau$. The goal is to prove that, given \eqref{eq:mill-soundness-key-lemma-condition-1} and \eqref{eq:mill-soundness-key-lemma-condition-2}, $\suppM{\baseB}{\at{S} \mmcomma \at{T}} \sigma \mto \tau$ holds; spelling out the definition using \eqref{eq:MILL-BeS-imply} and \eqref{eq:MILL-BeS-Inf}, this amounts to showing that for arbitrary $\baseX \baseGeq \baseB$ and atomic multiset $\at{U}$, if $\suppM{\baseX}{\at{U}} \sigma$, then $\suppM{\baseX}{\at{S} \mmcomma \at{T} \mmcomma \at{U}} \tau$. 
        So we fix an arbitrary $\baseC \baseGeq \baseB$ and atomic multiset $\at{P}$ such that $\suppM{\baseC}{\at{P}} \sigma$ holds, and the goal is to show $\suppM{\baseC}{\at{S} \mmcomma \at{T} \mmcomma \at{P}} \tau$. By IH, it suffices to show the following: 
        \begin{align}
            & \suppM{\baseC}{\at{S}} \phi \mand \psi 
            \label{eq:mill-soundness-key-lemma-proof-1} \\
            & \phi \mcomma \psi \suppM{\baseC}{\at{T} \mmcomma \at{P}} \tau \label{eq:mill-soundness-key-lemma-proof-2} 
        \end{align}
        Since \eqref{eq:mill-soundness-key-lemma-proof-1} is exactly \eqref{eq:mill-soundness-key-lemma-condition-1}, we focus on \eqref{eq:mill-soundness-key-lemma-proof-2}. So we fix an arbitrary $\baseD \baseGeq \baseC$ and $\at{Q}$ such that $\suppM{\baseD}{\at{Q}} \makeMultiset{\phi, \psi}$, and show $\suppM{\baseD}{\at{Q} \mmcomma \at{T} \mmcomma \at{P}} \tau$. Apply \eqref{eq:mill-soundness-key-lemma-condition-2} to $\suppM{\baseD}{\at{Q}} \makeMultiset{\phi, \psi}$, we get $\suppM{\baseD}{\at{Q} \mmcomma \at{T}} \sigma \mto \tau$, or equivalently $\sigma \suppM{\baseD}{\at{Q} \mmcomma \at{T}} \tau$. That is, for every $\baseY \baseGeq \baseD$ and atomic multiset $\at{U}$, $\suppM{\baseY}{\at{U}} \sigma$ implies $\suppM{\baseY}{\at{Q} \mmcomma \at{T} \mmcomma \at{U}} \tau$. Therefore, given $\suppM{\baseC}{\at{P}} \sigma$, by monotonicity we have $\suppM{\baseD}{\at{P}} \sigma$, thus $\suppM{\baseD}{\at{Q} \mmcomma \at{T} \mmcomma \at{P}} \tau$. 
        \item $\chi = \sigma \mand \tau$. Given \eqref{eq:mill-soundness-key-lemma-condition-1} and \eqref{eq:mill-soundness-key-lemma-condition-2}, we show $\suppM{\baseB}{\at{S} \mmcomma \at{T}} \sigma \mand \tau$. Spelling out the definition using \eqref{eq:MILL-BeS-tensor}, we can simply fix an arbitrary $\baseC \baseGeq \baseB$, atomic multiset $\at{P}$, and atom $\at{p}$ such that $\sigma \mcomma \tau \suppM{\baseC}{\at{P}} \at{p}$; in other words, 
        \begin{equation}
        \label{eq:mill-soundness-key-lemma-proof-5}
            \text{ for every } \baseX \baseGeq \baseC \text{ and } \at{U}, \text{ if } \suppM{\baseX}{\at{U}} \sigma \mcomma \tau, \text{ then } \suppM{\baseX}{\at{U} \mmcomma \at{P}} \at{p} 
        \end{equation}
        and then show $\suppM{\baseC}{\at{S} \mmcomma \at{T} \mmcomma \at{P}} \at{p}$. By IH, it suffices to prove the following: 
        \begin{align}
            & \suppM{\baseC}{\at{S}} \phi \mand \psi 
            \label{eq:mill-soundness-key-lemma-proof-3} \\
            & \phi \mcomma \psi \suppM{\baseC}{\at{T} \mmcomma \at{P}} \at{p} \label{eq:mill-soundness-key-lemma-proof-4} 
        \end{align}
        Now \eqref{eq:mill-soundness-key-lemma-proof-3} follows immediately from \eqref{eq:mill-soundness-key-lemma-condition-1} by monotonicity. Towards \eqref{eq:mill-soundness-key-lemma-proof-4}, let us fix arbitrary $\baseD \baseGeq \baseC$ and $\at{Q}$ such that $\suppM{\baseD}{\at{Q}} \phi \mcomma \psi$, and prove $\suppM{\baseD}{\at{Q} \mmcomma \at{T} \mmcomma \at{P}} \at{p}$. By \eqref{eq:mill-soundness-key-lemma-condition-2}, $\suppM{\baseD}{\at{Q}} \phi \mcomma \psi$ entails that $\suppM{\baseD}{\at{Q} \mmcomma \at{T}} \sigma \mand \tau$. This by \eqref{eq:MILL-BeS-tensor} means that, 
        \begin{equation}
        \label{eq:mill-soundness-key-lemma-proof-6}
        \text{for every } \baseY \baseGeq \baseD, \at{V} \text{ and } \at{q}, \text{ if } \sigma \mcomma \tau \suppM{\baseY}{\at{V}} \at{q}, \text{ then } \suppM{\baseY}{\at{Q} \mmcomma \at{T} \mmcomma \at{V}} \at{q}. 
        \end{equation}
        In particular, since $\sigma \mcomma \tau \suppM{\baseD}{\at{P}} \at{p}$, we can conclude from \eqref{eq:mill-soundness-key-lemma-proof-6} that $\suppM{\baseD}{\at{Q} \mmcomma \at{T} \mmcomma \at{P}} \at{p}$. 
    \end{itemize}
    This completes all the cases of the proof by induction. 
\end{proof}

\lemMILLMtopKeyLemma*
\begin{proof}
    We prove by induction on the structure of $\chi$. 

    \begin{itemize}[label={-}]
        \item $\chi$ is some atom $\at{q}$. Spelling out the definition of \eqref{eq:mill-soundness-mtop-key-lemma-condition-1} $\suppM{\baseB}{\at{S}} \mtop$, we have that for arbitrary $\baseX \baseGeq \baseB$, atomic multiset $\at{U}$, and atom $\at{p}$, if $\suppM{\baseX}{\at{U}} \at{p}$, then $\suppM{\baseX}{\at{S} \mmcomma \at{U}} \at{p}$. Apply this to \eqref{eq:mill-soundness-mtop-key-lemma-condition-2} $\suppM{\baseB}{\at{T}} \at{q}$, it follows that $\suppM{\baseB}{\at{S} \mmcomma \at{T}} \at{q}$. 
        \item $\chi = \mtop$. In order to prove $\suppM{\baseB}{\at{S} \mmcomma \at{T}} \mtop$, it suffices to fix some base $\baseC \baseGeq \baseB$, atomic multiset $\at{W}$, and atom $\at{q}$ such that $\suppM{\baseC}{\at{W}} \at{q}$, and prove that $\suppM{\baseC}{\at{S} \mmcomma \at{T} \mmcomma \at{W}} \at{q}$. Since $\suppM{\baseB}{\at{S}} \mtop$, $\baseC \baseGeq \baseB$, and $\suppM{\baseC}{\at{W}} \at{q}$, we have $\suppM{\baseC}{\at{S} \mmcomma \at{W}} \at{q}$. This together with $\suppM{\baseB}{\at{T}} \at{q}$ and $\baseC \baseGeq \baseB$ imply that $\suppM{\baseC}{\at{S} \mmcomma \at{T} \mmcomma \at{W}} \at{q}$. 
        \item $\chi = \sigma \mand \tau$. Uses Lemma~\ref{lem:mill-conj-key-lemma}. The goal is to show that $\suppM{\baseB}{\at{S} \mmcomma \at{T}} \sigma \mand \tau$; using \eqref{eq:MILL-BeS-tensor}, for every $\baseX \baseGeq \baseB$, $\at{U}$, $\at{p}$, if $\sigma \mcomma \tau \suppM{\baseX}{\at{U}} \at{p}$, then $\suppM{\baseX}{\at{S} \mmcomma \at{T} \mmcomma \at{U}} \at{p}$. So we fix some base $\baseC \baseGeq \baseB$, atomic multiset $\at{W}$, and atom $\at{q}$ such that $\sigma \mcomma \tau \suppM{\baseC}{\at{W}} \at{q}$, and the goal is now to show that $\suppM{\baseC}{\at{S} \mmcomma \at{T} \mmcomma \at{W}} \at{q}$. Apply Lemma~\ref{lem:mill-conj-key-lemma} to $\suppM{\baseC}{\at{T}} \sigma \mand \tau$ (which follows immediately from $\suppM{\baseB}{\at{T}} \sigma \mand \tau$ and $\baseC \baseGeq \baseB$) and $\sigma \mcomma \tau \suppM{\baseC}{\at{W}} \at{q}$, we have $\suppM{\baseC}{\at{T} \mmcomma \at{W}} \at{q}$. Together with $\suppM{\baseB}{\at{S}} \mtop$, we can conclude that $\suppM{\baseC}{\at{S} \mmcomma \at{T} \mmcomma \at{W}} \at{q}$. 
        \item $\chi = \sigma \mto \tau$. Spelling out the definition \eqref{eq:MILL-BeS-imply}, the goal $\suppM{\baseB}{\at{S} \mmcomma \at{T}} \sigma \mto \tau$ is equivalent to $\sigma \suppM{\baseB}{\at{S} \mmcomma \at{T}} \tau$. So we fix some base $\baseC \baseGeq \baseB$ and atomic multiset $\at{W}$ such that $\suppM{\baseC}{\at{W}} \sigma$, and then show that $\suppM{\baseC}{\at{S} \mmcomma \at{T} \mmcomma \at{W}} \tau$. By IH, from $\suppM{\baseC}{\at{S}} \mtop$ and $\suppM{\baseC}{\at{W}} \sigma$, we have $\suppM{\baseC}{\at{S} \mmcomma \at{W}} \sigma$. This together with $\suppM{\baseB}{\at{T}} \sigma \mto \tau$ implies that $\suppM{\baseC}{\at{S} \mmcomma \at{T} \mmcomma \at{W}} \tau$. 
    \end{itemize}
    This completes all the inductive cases. 
\end{proof}

\section{Proof of Completeness}
\label{app:MILL-BeS-completeness}

\begin{proposition}[\ref{eq:mill-atomic-sound-and-complete}]
\label{prop:mill-atomic-sound-and-complete}
    For arbitrary base $\baseB$, atomic multisets $\at{P}, \at{S}$, and atom $\at{q}$, 
    \[ 
        \at{P} \mcomma \at{S} \deriveBaseM{\baseB} \at{q} \text{ iff } \at{P} \suppM{\baseB}{\at{S}} \at{q}.
    \]
\end{proposition}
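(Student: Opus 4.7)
The plan is to reduce the statement to Proposition~\ref{lem:mill-atomic-cut} by unfolding the semantic definitions of $\suppM{\baseB}{\at{S}}$ applied to atomic data. Once the right-hand side $\at{P} \suppM{\baseB}{\at{S}} \at{q}$ is rewritten purely in terms of $\deriveBaseM{}$, it will match condition~\eqref{eq:mill-atomic-cut-2} of that proposition verbatim, so the proposition itself delivers the equivalence with condition~\eqref{eq:mill-atomic-cut-1}, namely $\at{P} \mcomma \at{S} \deriveBaseM{\baseB} \at{q}$.

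First I would dispose of the degenerate case $\at{P} = \emptymultiset$. Here $\at{P} \suppM{\baseB}{\at{S}} \at{q}$ collapses to $\suppM{\baseB}{\at{S}} \at{q}$, which by clause~\eqref{eq:MILL-BeS-At} is precisely $\at{S} \deriveBaseM{\baseB} \at{q}$, coinciding with $\at{P} \mcomma \at{S} \deriveBaseM{\baseB} \at{q}$.

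For the main case, suppose $\at{P} = \at{p}_1 \mcomma \dots \mcomma \at{p}_n$ with $n \geq 1$. Unfolding $\at{P} \suppM{\baseB}{\at{S}} \at{q}$ via \eqref{eq:MILL-BeS-Inf} yields: for every $\baseX \baseGeq \baseB$ and atomic multiset $\at{U}$, if $\suppM{\baseX}{\at{U}} \at{P}$ then $\suppM{\baseX}{\at{S} \mcomma \at{U}} \at{q}$. A straightforward induction on $n$ using clause~\eqref{eq:MILL-BeS-comma} shows that $\suppM{\baseX}{\at{U}} \at{P}$ is equivalent to the existence of a partition $\at{U} = \at{T}_1 \mcomma \dots \mcomma \at{T}_n$ with $\suppM{\baseX}{\at{T}_i} \at{p}_i$ for each $i$. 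Clause~\eqref{eq:MILL-BeS-At} then rewrites each such support as $\at{T}_i \deriveBaseM{\baseX} \at{p}_i$ and rewrites the conclusion $\suppM{\baseX}{\at{S} \mcomma \at{U}} \at{q}$ as $\at{S} \mcomma \at{U} \deriveBaseM{\baseX} \at{q}$. Altogether, $\at{P} \suppM{\baseB}{\at{S}} \at{q}$ becomes: for every $\baseX \baseGeq \baseB$ and atomic multisets $\at{T}_1, \dots, \at{T}_n$, if $\at{T}_i \deriveBaseM{\baseX} \at{p}_i$ for all $i$, then $\at{T}_1 \mcomma \dots \mcomma \at{T}_n \mcomma \at{S} \deriveBaseM{\baseX} \at{q}$, which is exactly condition~\eqref{eq:mill-atomic-cut-2}.

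The only point requiring care is the iterated use of \eqref{eq:MILL-BeS-comma}, which is stated for two-way splits; induction on $n$ turns this into the $n$-fold partition used above. No deeper obstacle arises, since the substantive content has already been packaged in Proposition~\ref{lem:mill-atomic-cut}.
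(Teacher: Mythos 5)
Your proof is correct and follows essentially the same route as the paper's: both unfold $\at{P} \suppM{\baseB}{\at{S}} \at{q}$ via \eqref{eq:MILL-BeS-Inf}, \eqref{eq:MILL-BeS-comma}, and \eqref{eq:MILL-BeS-At} into exactly condition~\eqref{eq:mill-atomic-cut-2} of Proposition~\ref{lem:mill-atomic-cut}, which then supplies the equivalence with $\at{P} \mcomma \at{S} \deriveBaseM{\baseB} \at{q}$. Your explicit treatment of the empty-$\at{P}$ case and of the $n$-fold iteration of the two-way comma clause is slightly more careful than the paper's terse version, but the argument is the same.
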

\begin{proof}
    The equivalence follows immediately from Proposition~\ref{lem:mill-atomic-cut}. Let us assume that $\at{P} = \makeMultiset{ \at{p}_1, \dots, \at{p}_n }$. Starting from $\at{P} \suppM{\baseB}{\at{S}} \at{q}$, by \eqref{eq:MILL-BeS-Inf}, it means for every base $\baseX \baseGeq \baseB$ and atomic multisets $\at{T}_1, \dots, \at{T}_n$, $\suppM{\baseX}{\at{T}_1} \at{p}_1, \dots, \suppM{\baseX}{\at{T}_n} \at{p}_n$ implies $\suppM{\baseX}{\at{S} \mmcomma \at{T}} \at{q}$. Spelling out the definition of $\suppM{\baseB}{}$ for atoms \eqref{eq:MILL-BeS-At}, $\at{P} \suppM{\baseB}{\at{S}} \at{q}$ is equivalent to that, for every base $\baseX \baseGeq \baseB$ and atomic multisets $\at{T}_1, \dots, \at{T}_n$, $\at{T}_1 \deriveBaseM{\baseX} \at{p}_1, \dots, \at{T}_n \deriveBaseM{\baseX} \at{p}_n$ implies $\at{S} \mcomma \at{T} \deriveBaseM{\baseX} \at{q}$. This is precisely $\at{P} \mcomma \at{S} \deriveBaseM{\baseB} \at{q}$, given Proposition~\ref{lem:mill-atomic-cut}. 
\end{proof}

\thmMILLBeSCompleteness*
\begin{proof}
    We assume $\Gamma \suppM{}{} \phi$, and $\Gamma = \makeMultiset{\gamma_1, \dots, \gamma_n}$. 
    Let $\Xi$ be $\subform{ \Gamma \cup \{ \phi \} }$, namely the set of all subformulas  $\Gamma$ and $\phi$. Since $\Gamma \cup \{ \phi \}$ is finite, $\Xi$ is also a finite set. 
    We define a `flattening' function $\flatmill{(\cdot)} \colon \Xi \to \At$: 
    it assigns to each non-atomic $\xi \in \Xi$ a unique atom which does not appear in $\Xi$, denoted as $\flatmill{\xi}$ (uniqueness means $\flatmill{\xi} \neq \flatmill{\zeta}$ if $\xi \neq \zeta$); 
    for each atomic $\at{p} \in \Xi$, we define $\flatmill{\at{p}}$ to be $\at{p}$ itself. 
    Conversely, we define the `deflattening' function $\deflatmill{(\cdot)} \colon \At \to \Xi \cup \At$ as an extension of the inverse of $\flatmill{(\cdot)}$: 
    for every atom in the image of $\flatmill{(\cdot)}$ say $\flatmill{\gamma}$ (note that such $\gamma$ is unique if it exists), we define $\deflatmill{ (\flatmill{\gamma}) }$ as $\gamma$; 
    for the other atoms, $\deflatmill{(\cdot)}$ is simply identity.
    We generalize both notations to multisets of formulas: $\flatmill{\Delta} \coloneqq \makeMultiset{\flatmill{\delta} \mid \delta \in \Delta}$ and $\flatmill{\at{P}} \coloneqq \makeMultiset{\flatmill{\at{p}} \mid \at{p} \in \at{P}}$; likewise for $\deflatmill{(\cdot)}$. 
    
    We still construct the base $\baseMill$ that encodes the natural deduction for \MILL{}. Base $\baseMill$ contains the following atomic rules, where $\sigma$ and $\tau$ range over $\Gamma \cup \{ \phi \}$, and $\at{p}$ ranges over all atoms: 
    \begin{enumerate}[label=(\arabic*)]
        \item $( \flatmill{\sigma} \seq \flatmill{\tau} ) \Rightarrow \flatmill{(\sigma \mto \tau)}$ \label{eq:atomic-rule-in-MILL-implication-intro}
        \item $( \seq \flatmill{(\sigma \mto \tau)} ), (\seq \flatmill{\sigma}) \Rightarrow \flatmill{\tau}$ \label{eq:atomic-rule-in-MILL-implication-elim}
        \item $(\seq \flatmill{\sigma}), (\seq \flatmill{\tau}) \Rightarrow \flatmill{(\sigma \mand \tau)}$ \label{eq:atomic-rule-in-MILL-conjunction-intro}
        \item $(\seq \flatmill{ (\sigma \mand \tau) }), (\flatmill{\sigma} \mcomma \flatmill{\tau} \seq \at{p}) \Rightarrow \at{p} $ 
        \label{eq:atomic-rule-in-MILL-conjunction-elim-atom}
        %
        %
        \item $\Rightarrow \flatmill{\mtop}$ \label{eq:atomic-rule-in-MILL-mtop-intro} 
        \item $( \seq \flatmill{\mtop}), (\seq \flatmill{\tau}) \Rightarrow \flatmill{\tau} $ \label{eq:atomic-rule-in-MILL-mtop-elim} 
    \end{enumerate}
    The following two statements are the key to completeness: 
    \begin{description}
        \item[$\dagger$\label{eq:mill-complete-flat-iff}] For every $\xi \in \Xi$, every $\baseX \baseGeq \baseMill$ and every $\at{U}$, $\suppM{\baseX}{\at{U}} \flatmill{\xi}$ iff $\suppM{\baseX}{\at{U}} \xi$. 
        \item[$\ddagger$\label{eq:mill-complete-deflat}] For every atomic multiset $\at{P}$ and atom $\at{q}$, if $\at{P} \deriveBaseM{\baseMill} \at{q}$ then $\deflatmill{\at{P}} \provesMILL \deflatmill{\at{q}}$. 
    \end{description}

    Starting from our assumption $\Gamma \suppM{}{} \phi$, we can conclude $\flatmill{\Gamma} \suppM{\baseMill}{} \flatmill{\phi}$ as follows: starting from arbitrary base $\baseB \baseGeq \baseMill$ and atomic multisets $\at{U}_1, \dots \at{U}_n$ satisfying $\suppM{\baseB}{\at{U}_1} \flatmill{\gamma_1}, \dots, \suppM{\baseB}{\at{U}_n} \flatmill{\gamma_n}$, by (the `only if' direction of) \eqref{eq:mill-complete-flat-iff} we have $\suppM{\baseB}{\at{U}_1} \gamma_1, \dots, \suppM{\baseB}{\at{U}_n} \gamma_n$; by the assumption $\Gamma \suppM{}{} \phi$, it follows that $\suppM{\baseB}{\at{U}} \phi$ where $\at{U} = \at{U}_1 \mcomma \dots \mcomma \at{U}_n$; applying \eqref{eq:mill-complete-flat-iff} again (but this time using the `if' direction) we know $\suppM{\baseB}{\at{U}} \flatmill{\phi}$. Then, according to Proposition~\ref{prop:mill-atomic-sound-and-complete}, $\flatmill{\Gamma} \suppM{\baseMill}{} \flatmill{\phi}$ implies $\flatmill{\Gamma} \deriveBaseM{\baseMill} \flatmill{\phi}$. So, by \eqref{eq:mill-complete-deflat}, $\deflatmill{(\flatmill{\Gamma})} \provesMILL \deflatmill{(\flatmill{\phi})}$, which according to the definition of $\flatmill{(\cdot)}$ and $\deflatmill{(\cdot)}$ says $\Gamma \provesMILL \phi$. 

    \medskip
    So it only remains to prove \eqref{eq:mill-complete-flat-iff} and \eqref{eq:mill-complete-deflat}. 

    We first look at \eqref{eq:mill-complete-flat-iff}. We fix an arbitrary base $\baseB \baseGeq \baseMill$ and atomic multiset $\at{S}$, and prove by induction on the structure of $\xi$. 
    \begin{itemize}[label={-}]
        \item $\xi$ is atomic. Then by definition, $\flatmill{\xi} = \xi$, so \eqref{eq:mill-complete-flat-iff} is a tautology. 
        \item $\xi$ is $\mtop$. 
        \begin{align*}
            \suppM{\baseB}{\at{S}} \flatmill{\mtop}
            & \text{ iff } \at{S} \deriveBaseM{\baseB} \flatmill{\mtop} & \eqref{eq:MILL-BeS-At} \\
            & \text{ iff for every } \baseX \baseGeq \baseB, \at{U}, \at{p}, \text{ if } \at{U} \deriveBaseM{\baseX} \at{p}, \text{ then } \at{S} \mcomma \at{U} \deriveBaseM{\baseX} \at{p} & (\text{Lemma~\ref{lem:flat-derive-correct}}) \\ 
            & \text{ iff for every } \baseX \baseGeq \baseB, \at{U}, \at{p}, \text{ if } \suppM{\baseX}{\at{U}} \at{p}, \text{ then } \suppM{\baseX}{\at{S} \mmcomma \at{U}} \at{p} & \eqref{eq:MILL-BeS-At} \\ 
            & \text{ iff } \suppM{\baseB}{\at{S}} \mtop & \eqref{eq:MILL-BeS-I}
        \end{align*}
        \item $\xi$ is of the form $\sigma \mto \tau$. By the construction of $\Xi$, $\sigma$ and $\tau$ are both in $\Xi$ as well, so IH applies. Therefore, 
        \begin{align*}
            \suppM{\baseB}{\at{S}} \flatmill{(\sigma \mto \tau)} & \text{ iff } \flatmill{\sigma} \suppM{\baseB}{\at{S}} \flatmill{\tau} \tag{Lemma~\ref{lem:flat-derive-correct} and Proposition~\ref{prop:mill-atomic-sound-and-complete}} \\
            & \text{ iff } \sigma \suppM{\baseB}{\at{S}} \tau \tag{IH} \\
            & \text{ iff } \suppM{\baseB}{\at{S}} \sigma \mto \tau  \tag{\ref{eq:MILL-BeS-imply}} 
        \end{align*}
        %
        \item $\xi$ is of the form $\sigma \mand \tau$. Again we can use IH on $\sigma$ and $\tau$ as both are in $\Xi$ as well. Therefore, 
        \begin{align*}
            \suppM{\baseB}{\at{S}} \flatmill{(\sigma \mand \tau)} & \text{ iff for every } \baseX \baseGeq \baseX, \at{U}, \at{p}, \text{ if } \flatmill{\sigma} \mcomma \flatmill{\tau} \suppM{\baseX}{\at{U}} \at{p}, \text{ then } \suppM{\baseX}{\at{S} \mmcomma \at{U}} \at{p} \tag{Lemma~\ref{lem:flat-derive-correct} and Proposition~\ref{prop:mill-atomic-sound-and-complete}} \\
            & \text{ iff for every } \baseX \baseGeq \baseX, \at{U}, \at{p}, \text{ if } \sigma \mcomma \tau \suppM{\baseX}{\at{U}}, \text{ then } \suppM{\baseX}{\at{S} \mmcomma \at{U}} \at{p} \tag{IH} \\
            & \text{ iff } \suppM{\baseB}{\at{S}} \sigma \mand \tau \tag{\ref{eq:MILL-BeS-tensor}}. 
        \end{align*}
    \end{itemize}
    This completes the proof by induction on $\xi$ for \eqref{eq:mill-complete-flat-iff}. 

    Next we turn to showing \eqref{eq:mill-complete-deflat}. 
    By the inductive definition of $\deriveBaseM{\baseMill}$ (see Definition~\ref{def:derivability-base-MILL}), it suffices to show the follows: 
    \begin{equation}
    \label{eq:mill-complete-deflat-proof-1}
        \deflatmill{\at{p}} \provesMILL \deflatmill{\at{p}} 
    \end{equation}
    \begin{equation}
    \label{eq:mill-complete-deflat-proof-2}
        \begin{split}
            \text{ If } & \left( (\at{P}_1 \seq \at{q}_1), \dots, (\at{P}_n \seq \at{q}_n) \Rightarrow \at{r} \right) \in \baseMill, \\
            & \text{ and } \deflatmill{\at{S}_1} \mcomma \deflatmill{\at{P}_1} \provesMILL \deflatmill{\at{q}_1}, \dots, \deflatmill{\at{S}_n} \mcomma \deflatmill{\at{P}_n} \provesMILL \deflatmill{\at{q}_n}, \\
            & \text{ then } \deflatmill{\at{S}_1} \mcomma \dots \mcomma \deflatmill{\at{S}_n} \provesMILL \deflatmill{\at{r}}. 
        \end{split} 
    \end{equation}
    Now \eqref{eq:mill-complete-deflat-proof-1} follows imemdiately from \eqref{eq:mill-axiom}. As for \eqref{eq:mill-complete-deflat-proof-2}, we simply need to prove the statement for each atomic rule in base $\baseMill$, which according to the definition of $\baseMill$ 
    amounts to proving the following facts:  
    \begin{itemize}[label=-]
        \item Suppose $(\flatmill{\sigma} \seq \flatmill{\tau}) \Rightarrow \flatmill{(\sigma \mto \tau)}$ is in $\baseMill$, and $\deflatmill{\at{S}}\mcomma \deflatmill{(\flatmill{\sigma})} \provesMILL \deflatmill{ (\flatmill{\tau}) }$, we show $\deflatmill{S} \provesMILL \deflatmill{ (\flatmill{(\sigma \mto \tau)}) }$. By the definition of $\deflatmill{(\cdot)}$, $\deflatmill{\at{S}} \mcomma \deflatmill{(\flatmill{\sigma})} \provesMILL \deflatmill{ (\flatmill{\tau}) }$ is $\deflatmill{\at{S}} \mcomma \sigma \provesMILL \tau$, and the goal $\deflatmill{S} \provesMILL \deflatmill{ (\flatmill{(\sigma \mto \tau)}) }$ is $\deflatmill{S} \provesMILL \sigma \mto \tau$., which follows immediately from \eqref{eq:mill-implication-intro}. 
        \item Suppose $( \seq \flatmill{(\sigma \mto \tau)} ), (\seq \flatmill{\sigma}) \Rightarrow \flatmill{\tau}$ is in $\baseMill$. We show that $\deflatmill{\at{S}_1} \provesMILL \deflatmill{(\flatmill{(\sigma \mto \tau)})}$ and $\deflatmill{\at{S}_2} \provesMILL \deflatmill{(\flatmill{\sigma})}$ implies $\deflatmill{\at{S}_1} \mcomma \deflatmill{\at{S}_2} \provesMILL \deflatmill{(\flatmill{\tau})}$. This is equivalent to that $\deflatmill{\at{S}_1} \provesMILL \sigma \mto \tau$ and $\deflatmill{\at{S}_2} \provesMILL \sigma$ implies $\deflatmill{\at{S}_1} \mcomma \deflatmill{\at{S}_2} \provesMILL \tau$, which follows immediately from \eqref{eq:mill-implication-elim}. 
        \item Suppose $(\seq \flatmill{\sigma}), (\seq \flatmill{\tau}) \Rightarrow \flatmill{(\sigma \mand \tau)}$ is in $\baseMill$. We show that $\deflatmill{\at{S}_1} \provesMILL \deflatmill{ (\flatmill{\sigma}) }$ and $\deflatmill{\at{S}_2} \provesMILL \deflatmill{ (\flatmill{\tau}) }$ implies $\deflatmill{\at{S}_1} \mcomma \deflatmill{\at{S}_2} \provesMILL \deflatmill{ (\flatmill{ (\sigma \mand \tau) }) }$. According to the definition of $\deflatmill{(\cdot)}$, this is equivalent to that $\deflatmill{\at{S}_1} \provesMILL \sigma$ and $\deflatmill{\at{S}_2} \provesMILL \tau$ implies $\deflatmill{\at{S}_1} \mcomma \deflatmill{\at{S}_2} \provesMILL \sigma \mand \tau$, which follows immediately from \eqref{eq:mill-conjunction-intro}. 
        \item Suppose $(\seq \flatmill{ (\sigma \mand \tau) }), (\flatmill{\sigma} \mcomma \flatmill{\tau} \seq \at{p}) \Rightarrow \at{p} $ is in $\baseMill$. We show that $\deflatmill{\at{S}} \provesMILL \deflatmill{ (\flatmill{ (\sigma \mand \tau) }) }$ together with $\deflatmill{\at{T}} \mcomma \deflatmill{ (\flatmill{\sigma}) } \mcomma \deflatmill{ (\flatmill{\tau}) } \provesMILL \deflatmill{\at{p}}$ implies $\deflatmill{\at{S}} \mcomma \deflatmill{\at{T}} \provesMILL \deflatmill{\at{p}}$. According to the definition of $\deflatmill{(\cdot)}$, this is equivalent to that $\deflatmill{\at{S}} \provesMILL \sigma \mand \tau$ and $\deflatmill{\at{T}} \mcomma \sigma \mcomma \tau \provesMILL \deflatmill{\at{p}}$ implies $\deflatmill{\at{S}} \mcomma \deflatmill{\at{T}} \provesMILL \deflatmill{\at{p}}$, which follows immediately from \eqref{eq:mill-conjunction-elim}. 
        %
        %
        \item Suppose $\Rightarrow \mtop$ is in $\baseMill$, then we show $\provesMILL \deflatmill{\mtop}$, namely $\provesMILL \mtop$. And this is exactly \eqref{eq:mill-top-intro}. 
        \item Suppose $(\seq \flatmill{ \mtop }), (\seq \flatmill{\tau}) \Rightarrow \flatmill{\tau}$ is in $\baseMill$, we show that $\deflatmill{\at{S}_1} \provesMILL \deflatmill{ (\flatmill{\mtop}) }$ together with $\deflatmill{\at{S}_2} \provesMILL \deflatmill{ (\flatmill{\tau}) }$ implies $\deflatmill{\at{S}_1} \mcomma \deflatmill{\at{S}_2} \provesMILL \deflatmill{ (\flatmill{\tau}) }$. By the definition of $\deflatmill{ (\cdot) }$, this is equivalent to that $\deflatmill{\at{S}_1} \provesMILL \mtop$ and $\deflatmill{\at{S}_2} \provesMILL \tau$ implies $\deflatmill{\at{S}_1} \mcomma \deflatmill{\at{S}_2} \provesMILL \tau$. This follows from \eqref{eq:mill-top-elim}. 
    \end{itemize}
    This completes the case analysis for establishing \eqref{eq:mill-complete-deflat}. 
\end{proof}
\begin{lemma}
\label{lem:flat-derive-correct}
    The following holds for arbitrary base $\baseB \baseGeq \baseMill$ and atomic multiset $\at{S}$, when $\sigma \mto \tau$, $\sigma \mand \tau$, or $\mtop$ is in $\Xi$, respectively: 
    \begin{enumerate}
        \item $\at{S} \deriveBaseM{\baseB} \flatmill{(\sigma \mto \tau)}$ iff $\at{S} \mcomma \flatmill{\sigma} \deriveBaseM{\baseB} \flatmill{\tau}$. \label{eq:flat-derive-correct-1}
        \item $\at{S} \deriveBaseM{\baseB} \flatmill{(\sigma \mand \tau)}$ iff for every $\baseY \baseGeq \baseB$, $\at{V}$, $\at{p}$, if $\at{V} \mcomma \flatmill{\sigma} \mcomma \flatmill{\tau} \deriveBaseM{\baseY} \at{p}$, then $\at{S} \mcomma \at{V} \deriveBaseM{\baseY} \at{p}$. \label{eq:flat-derive-correct-2} 
        \item $\at{S}  \deriveBaseM{\baseB} \flatmill{\mtop}$ iff for every $\baseY \baseGeq \baseB$, $\at{V}$, $\at{p}$, if $\at{V} \deriveBaseM{\baseY} \at{p}$, then $\at{S} \mcomma \at{V} \deriveBaseM{\baseY} \at{p}$. \label{eq:flat-derive-correct-3} 
    \end{enumerate}
\end{lemma}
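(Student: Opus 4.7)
The plan is to prove each of the three equivalences directly by combining the appropriate atomic rules of $\baseMill$ (which by assumption lie in every $\baseB \baseGeq \baseMill$) with the closure properties \eqref{eq:derive-mill-ref} and \eqref{eq:derive-mill-app}. Each forward direction will invoke the elimination-style rule of the relevant connective, and each backward direction will invoke the introduction-style rule; for \eqref{eq:flat-derive-correct-2} and \eqref{eq:flat-derive-correct-3} the backward direction additionally instantiates the universally quantified hypothesis at a cleverly chosen triple $(\baseY, \at{V}, \at{p})$. Throughout I would tacitly use the easy fact that $\deriveBaseM{}$ is monotone in the base (an immediate induction on Definition~\ref{def:derivability-base-MILL}) to transport derivations from $\baseB$ to any extension $\baseY$.

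For \eqref{eq:flat-derive-correct-1}, the forward direction applies \eqref{eq:derive-mill-app} to $\flatmill{\ern \mto}$, namely $(\seq \flatmill{(\sigma \mto \tau)}, \seq \flatmill{\sigma}) \Rightarrow \flatmill{\tau}$, with premises $\at{S} \deriveBaseM{\baseB} \flatmill{(\sigma \mto \tau)}$ (given) and $\flatmill{\sigma} \deriveBaseM{\baseB} \flatmill{\sigma}$ (by \eqref{eq:derive-mill-ref}), which yields $\at{S} \mcomma \flatmill{\sigma} \deriveBaseM{\baseB} \flatmill{\tau}$. The backward direction applies \eqref{eq:derive-mill-app} to $\flatmill{\irn \mto}$, namely $(\flatmill{\sigma} \seq \flatmill{\tau}) \Rightarrow \flatmill{(\sigma \mto \tau)}$, with the single premise $\at{S} \mcomma \flatmill{\sigma} \deriveBaseM{\baseB} \flatmill{\tau}$.

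For \eqref{eq:flat-derive-correct-2}, the forward direction fixes $\baseY \baseGeq \baseB$, $\at{V}$, and $\at{p}$ with $\at{V} \mcomma \flatmill{\sigma} \mcomma \flatmill{\tau} \deriveBaseM{\baseY} \at{p}$, lifts $\at{S} \deriveBaseM{\baseB} \flatmill{(\sigma \mand \tau)}$ to $\baseY$ by monotonicity, and then applies \eqref{eq:derive-mill-app} to $\flatmill{\ern \mand}$, namely $(\seq \flatmill{(\sigma \mand \tau)}, \flatmill{\sigma} \mcomma \flatmill{\tau} \seq \at{p}) \Rightarrow \at{p}$, with these two derivations as the premises. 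The backward direction specialises the hypothesis at $\baseY := \baseB$, $\at{V} := \emptymultiset$, $\at{p} := \flatmill{(\sigma \mand \tau)}$; the required antecedent $\flatmill{\sigma} \mcomma \flatmill{\tau} \deriveBaseM{\baseB} \flatmill{(\sigma \mand \tau)}$ is produced by applying \eqref{eq:derive-mill-app} to $\flatmill{\irn \mand}$ together with the two reflexive derivations $\flatmill{\sigma} \deriveBaseM{\baseB} \flatmill{\sigma}$ and $\flatmill{\tau} \deriveBaseM{\baseB} \flatmill{\tau}$ from \eqref{eq:derive-mill-ref}.

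Part \eqref{eq:flat-derive-correct-3} is entirely analogous to \eqref{eq:flat-derive-correct-2}: the forward direction uses $\flatmill{\ern \mtop}$, namely $(\seq \flatmill{\mtop}, \seq \at{p}) \Rightarrow \at{p}$; the backward direction specialises at $\baseY := \baseB$, $\at{V} := \emptymultiset$, $\at{p} := \flatmill{\mtop}$, with the required $\emptymultiset \deriveBaseM{\baseB} \flatmill{\mtop}$ produced by the nullary rule $\flatmill{\irn \mtop}$, namely $\Rightarrow \flatmill{\mtop}$. I expect no conceptual obstacle; the only care is in matching the $\at{S}_i$, $\at{P}_i$, and $\at{p}_i$ components of each application of \eqref{eq:derive-mill-app} to the exact shape of the atomic rule being invoked, and in noting that the empty-context premises $\seq \flatmill{(\sigma \mto \tau)}$, $\seq \flatmill{(\sigma \mand \tau)}$, and $\seq \flatmill{\mtop}$ in the elimination rules force the corresponding $\at{P}_i$ to be $\emptymultiset$, so that the resource multisets $\at{S}$ and $\at{V}$ concatenate cleanly to $\at{S} \mcomma \at{V}$ in the conclusion.
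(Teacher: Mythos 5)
Your proposal is correct and follows essentially the same route as the paper's proof: each forward direction applies \eqref{eq:derive-mill-app} to the flattened elimination rule (with monotonicity of $\deriveBaseM{}$ in the base where needed), and each backward direction either applies the flattened introduction rule directly (part~\ref{eq:flat-derive-correct-1}) or instantiates the universal hypothesis at $\baseY := \baseB$, $\at{V} := \emptymultiset$, and $\at{p} := \flatmill{(\sigma \mand \tau)}$ or $\flatmill{\mtop}$, discharging the antecedent via the introduction rule and \eqref{eq:derive-mill-ref}. Your closing remark about the empty-context premises forcing the $\at{P}_i$ to be $\emptymultiset$ is exactly the bookkeeping the paper leaves implicit.
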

\begin{proof}
    Let us fix arbitrary base $\baseB \baseGeq \baseMill$ and atomic multiset $\at{S}$. 
    \begin{enumerate}
        \item We prove the two directions separately. 
        \begin{itemize}
            \item Left to right: We assume $\at{S} \deriveBaseM{\baseB} \flatmill{(\sigma \mto \tau)}$. Note that $\flatmill{\sigma} \deriveBaseM{\baseB} \flatmill{\sigma}$ by \eqref{eq:derive-mill-ref}. Also, the atomic rule $\left( \seq \flatmill{(\sigma \mto \tau)}, \seq \flatmill{\sigma} \right) \Rightarrow \flatmill{\tau}$ is in $\baseMill$ thus in $\baseB$. Therefore, by \eqref{eq:derive-mill-app} we can conclude $\at{S} \mcomma \flatmill{\sigma} \deriveBaseM{\baseB} \flatmill{\tau}$. 
            \item Right to left: We assume $\at{S} \mcomma \flatmill{\sigma} \deriveBaseM{\baseB} \flatmill{\tau}$. Together with that $(\flatmill{\sigma} \seq \flatmill{\tau}) \Rightarrow \flatmill{(\sigma \mto \tau)}$ is in $\baseMill$ thus in $\baseB$, it follows from \eqref{eq:derive-mill-app} that $\at{S} \deriveBaseM{\baseB} \flatmill{(\sigma \mto \tau)}$. 
        \end{itemize}
        \item Again we show the two directions separately. 
        \begin{itemize}
            \item Left to right: We assume $\at{S} \deriveBaseM{\baseB} \flatmill{(\sigma \mand \tau)}$. It suffices to fix some $\baseC \baseGeq \baseB$, $\at{T}$ and $\at{q}$ satisfying $\at{T} \mcomma \flatmill{\sigma} \mcomma \flatmill{\tau} \deriveBaseM{\baseC} \at{q}$, and then show $\at{S} \mcomma \at{T} \deriveBaseM{\baseC} \at{q}$. Note that the atomic rule $\left( \seq \flatmill{(\sigma \mand \tau)}, \flatmill{\sigma} \mcomma \flatmill{\tau} \seq \at{q} \right) \Rightarrow \at{q}$ is in $\baseB$ thus also in $\base{C}$, therefore from the two assumptions we can derive $\at{S} \mcomma \at{T} \deriveBaseM{\baseC} \at{q}$. 
            \item Right to left: We assume that for every $\baseY \baseGeq \baseB$, $\at{V}$, and $\at{p}$, if $\at{V} \mcomma \flatmill{\sigma} \mcomma \flatmill{\tau} \deriveBaseM{\baseY} \at{p}$, then $\at{S} \mcomma \at{V} \deriveBaseM{\baseY} \at{p}$. The goal is to show $\at{S} \deriveBaseM{\baseB} \flatmill{(\sigma \mand \tau)}$. In particular, suppose we have $\flatmill{\sigma} \mcomma \flatmill{\tau} \deriveBaseM{\baseB} \flatmill{(\sigma \mand \tau)}$, then $\at{S} \deriveBaseM{\baseB} \flatmill{(\sigma \mand \tau)}$ immediately follows from the assumption. To show $\flatmill{\sigma} \mcomma \flatmill{\tau} \deriveBaseM{\baseB} \flatmill{(\sigma \mand \tau)}$, it suffices to apply \eqref{eq:derive-mill-app} to the atomic rule $\left( \seq \flatmill{\sigma}, \seq \flatmill{\tau} \right) \Rightarrow \flatmill{(\sigma \mand \tau)}$ in $\baseB$ as well as the fact that both $\flatmill{\sigma} \deriveBaseM{\baseB} \flatmill{\sigma}$ and $\flatmill{\tau} \deriveBaseM{\baseB} \flatmill{\tau}$ hold (using \eqref{eq:derive-mill-ref}). 
        \end{itemize}
        \item We prove the two directions separately. 
        \begin{itemize}
            \item Left to right: We fix some $\baseC \baseGeq \baseB$, $\at{T}$, and $\at{q}$ such that $\at{T} \deriveBaseM{\baseC} \at{q}$, and the goal is to show that $\at{S} \mcomma \at{T} \deriveBaseM{\baseC} \at{q}$ holds. Notice that the atomic rule $\left( \seq \flatmill{\mtop}, \seq \flatmill{\tau} \right) \Rightarrow \flatmill{\tau}$ is in $\baseB$ thus in $\baseC$, so apply \eqref{eq:derive-mill-app} to this rule together with $\at{S} \deriveBaseM{\baseC} \flatmill{\mtop}$ (immediate consequence of $\at{S} \deriveBaseM{\baseB} \flatmill{\mtop}$ and $\baseC \baseGeq \baseB$) and $\at{T} \deriveBaseM{\baseC} \at{q} $ entails that $\at{S} \mcomma \at{T} \deriveBaseM{\baseC} \at{q}$. 
            \item Right to left: This is the simpler direction. Since the atomic rule $\Rightarrow \flatmill{\mtop}$ is in $\baseB$, using \eqref{eq:derive-mill-app} we have $\deriveBaseM{\baseB} \flatmill{\mtop}$. The RHS of the statement entails that $\at{S} \deriveBaseM{\baseB} \flatmill{\mtop}$. 
        \end{itemize}
    \end{enumerate}
    This completes the proof for all the three statements. 
\end{proof}

\end{document}